\algrenewcommand{\algorithmicreturn}{\State \textbf{return}}
\definecolor{mycitecolor}{rgb}{0.0, 0.45, 0.85}   
\newcommand\andy[1]{{[\color{blue} andy: #1]}}
\newcommand\chao[1]{{[\color{magenta} chao: #1]}}
\newcommand\ben[1]{{\color{green!50!black}[Ben: #1]}}
\newcommand\carolyn[1]{{[\color{mylinkcolor} carolyn: #1]}}
\newcommand\Vol{\operatorname{vol}}
\newcommand\supp{\operatorname{supp}}
\newcommand\LRvel{v_{\rm LR}}
\newcommand\CLR{c_{\rm LR}}
\newcommand{\e}{\mathrm{e}}
\newcommand{\ii}{\mathrm{i}}
\renewcommand\equiv{:=}
\renewcommand\epsilon{\varepsilon}
\newtheorem{thm}{Theorem}
\numberwithin{thm}{section}
\newtheorem{cor}[thm]{Corollary}
\newtheorem{lem}[thm]{Lemma}
\newtheorem{prop}[thm]{Proposition}
\newtheorem{defn}[thm]{Definition}
\renewcommand{\p@subsection}{}
\renewcommand{\p@subsubsection}{}
\tikzstyle{densely dashed}=          [dash pattern=on 4pt off 3pt]
\newcommand{\ad}{\operatorname{ad}}
\newcommand{\boxx}{b}
\begin{document}

\title{Lieb-Robinson bounds with exponential-in-volume tails}

\author{Ben T. McDonough}
\affiliation{Department of Physics and Center for Theory of Quantum Matter, University of Colorado, Boulder CO 80309, USA}

\author{Chao Yin}
\affiliation{Department of Physics and Center for Theory of Quantum Matter, University of Colorado, Boulder Colorado 80309, USA}

\author{Andrew Lucas}
\email{andrew.j.lucas@colorado.edu}
\affiliation{Department of Physics and Center for Theory of Quantum Matter, University of Colorado, Boulder Colorado 80309, USA}

\author{Carolyn Zhang}
\affiliation{Department of Physics, Harvard University, Cambridge, Massachusetts 02138, USA}

\begin{abstract}
 Lieb-Robinson bounds demonstrate the emergence of locality in many-body quantum systems.  Intuitively, 
 Lieb-Robinson bounds state that 
 with local or exponentially decaying interactions, the correlation that can be built up between two sites separated by distance $r$ after a time $t$ decays as $\exp(vt-r)$, where $v$ is the emergent Lieb-Robinson velocity.   In many problems, it is important to also capture how much of an operator grows to act on $r^d$ sites in $d$ spatial dimensions.   Perturbation theory and cluster expansion methods suggest that at short times, these volume-filling operators are suppressed as $\exp(-r^d)$ at short times.  We confirm this intuition, showing that for $r > vt$, the volume-filling operator is suppressed by $\exp(-(r-vt)^d/(vt)^{d-1})$. This closes a conceptual and practical gap between the cluster expansion and the Lieb-Robinson bound.  We then present two very different applications of this new bound.  Firstly, we obtain improved bounds on the classical computational resources necessary to simulate many-body dynamics with error tolerance $\epsilon$ for any finite time $t$: as $\epsilon$ becomes sufficiently small, only $\epsilon^{-\mathrm{O}(t^{d-1})}$ resources are needed.  A protocol that likely saturates this bound is given. Secondly, we prove that disorder operators have volume-law suppression near the ``solvable (Ising) point" in quantum phases with spontaneous symmetry breaking, which implies a new diagnostic for distinguishing many-body phases of quantum matter.
\end{abstract}

\maketitle

\tableofcontents

\section{Introduction}
The Lieb-Robinson theorem \cite{Lieb1972} proves that quantum correlations and entanglement spread with at most a finite velocity in many-body quantum lattice models.  While the original proof of this theorem is over 50 years old, it has recently become an extremely important technical tool in mathematical many-body physics \cite{AnthonyChen:2023bbe}.  The Lieb-Robinson theorem underlies proofs of (\emph{1}) the efficient simulatability of quantum dynamics on classical or quantum computers \cite{osborne2006,haah2021}; (\emph{2}) lower bounds on the time needed to prepare entangled states in quantum information processors \cite{Bravyi2006,entan_1d06}, including those with power-law interactions \cite{power_dyn_area17,power_chen19,power_simu19,power_KSLRB20,lucasprx2020,power_GHZ21,power_yifan21,power_KSOTOC21,power_LRB21,Wprotocol_gorshkov20,power_all2all24} and, in some cases, bosonic degrees of freedom \cite{boson_anharmonic08,boson_empty11,boson_spin13,boson_kuwahara21,boson_finitespeed22,boson_lemm22,boson_empty22,boson_algebraic24,boson_macro24,boson_longrange24}; (\emph{3}) the prethermal robustness of gapped phases of matter \cite{our_preth23} and the non-perturbative metastability of false vacua \cite{metastable24}; (\emph{4}) the stability of topological order \cite{hastings2005,topo_Hastings10,michalakis2013stability} and quantization of Hall conductance \cite{hastings2015}; (\emph{5}) the area-law of entanglement entropy in one dimension \cite{hastings2007} (although there are also combinatorial proofs); (\emph{6}) clustering of correlations in gapped ground states \cite{hastings2006,nachtergaele2006}; (\emph{7}) robustness of quantum metrology \cite{metro_HL24,metro_noisy24} (see Ref.~\cite{AnthonyChen:2023bbe} for a more complete list of applications). Besides these mathematical results, the intuition gained from the Lieb-Robinson bound has been important in developing a new theory of many-body quantum chaos in lattice models, where the onset of chaotic behavior at early times is characterized by the growth of a Heisenberg-evolved operator from a short string of Pauli matrices to a long string \cite{nahum2018,keyserlingk2018}.  This growth is controlled by a similar ``Frobenius light cone", which generally has a smaller velocity than the Lieb-Robinson light cone \cite{lucasprx2020}.   For this reason, insight from a precise understanding of quantum operator growth may lead to improved classical algorithms to simulate hydrodynamics \cite{pollmannhydro}.  

This diversity of applications, spanning quantum information sciences, atomic physics, condensed matter, and even high-energy physics, usually relies on a Lieb-Robinson theorem stated as follows:  given two local operators $A_x$ and $B_y$, and a local many-body lattice model, \begin{equation}
    \lVert [A_x(t),B_y]\rVert \lesssim \exp [\mu(vt - \mathsf d(x,y))], \label{eq:introLR}
\end{equation}
where $\mathsf d(x,y)$ denotes the distance between the degrees of freedom and $\mu$ is a constant.  This bound is sufficient for many of the applications listed above. To give one example of how the simple bound \eqref{eq:introLR} would be used, let us briefly discuss how to bound the classical simulatability of quantum dynamics. Suppose that we wish to study the Heisenberg-evolved operator $A_x(t)$ by solving the Heisenberg equation of motion for $A_x$.  Given finite computational resources, we simply truncate the list of Pauli strings we keep track of to those supported in a ball of radius $R$ around site $x$. We will refer to this truncated operator as the ``fraction" of the operator acting within this ball. In $d$ spatial dimensions, the number of such operators scales as $N\sim \exp[R^d]$.  Equation \eqref{eq:introLR} suggests that the error in this approximation $\epsilon$ will be bounded by $\epsilon \lesssim \exp[vt-R]$.
The classical resources necessary to simulate dynamics can thus be estimated as  
\begin{equation}
    N\sim \exp\left[\left(vt + \log \epsilon^{-1}\right)^d \right]. \label{eq:introsim1}
\end{equation}
For the rest of the introduction, O(1) prefactors in scaling relations will be suppressed.

The argument above---as do many other applications of a Lieb-Robinson bound---crucially relies on the \emph{exponential decay in distance} in Eq. \eqref{eq:introLR}. Is that optimal?  As phrased in Eq. \eqref{eq:introLR}, the bound is pretty much optimal: the tail bound can be improved to at most $\exp(-R\log R)$ with strictly local interactions \cite{AnthonyChen:2023bbe}.   This can be intuitively seen by noting that at order $R$, the operator $A_x(t)$ can grow $R$ sites away: \begin{equation}
    A_x(t) = A_x + \mathrm{i}t[H_{x+1,x}, A_x] + \cdots + \frac{(\mathrm{i}t)^R}{R!}[H_{x+R,x+R-1}, [\cdots, [H_{x+1,x},A_x]]] + \cdots.   \label{eq:taylorseriesintro}
\end{equation}
On the other hand, when we bounded simulatability in $d>1$, we used a Lieb-Robinson bound for the spreading of an operator from $x$ to $y$, and assumed that this same error controls how much of the operator might act on the whole ball of radius $\mathsf d(x,y)$.  This approximation might seem loose.  Indeed, Eq. \eqref{eq:taylorseriesintro} suggests that the first terms that act on an entire ball of radius $R$ arise at order $R^d$, so we might expect a tail bound that is suppressed in \emph{volume}:  $\exp(-(R-vt)^d)$.  Lieb-Robinson bounds of this kind are not known.

This understanding is physically important because it sharpens our picture of operator growth outside of the LR light cone. This has important applications to the simulability of continuous-time dynamics and to extended operators such as disorder operators, which are used broadly in the literature. To illustrate the simulability application, if $A$ is a local operator and we study the Heisenberg time-evolved $A(t)$ expanded in the basis of Pauli strings, then a volume-law scaling bound suggests that the minuscule weight of volume-filling Pauli strings could overcome the large number of them---$\exp[\mathrm{O}(t^d)]$---to account for.  We might then hope for an algorithm with an improved \emph{polynomial} dependence in $\varepsilon^{-1}$:
\begin{equation}
N \lesssim \exp[(vt)^{d-1}(vt + \log \epsilon^{-1})].
\label{eq:introsim3}
\end{equation}
Here $(vt + \log \epsilon^{-1})$ is the radius corresponding to an error tolerance $\epsilon$ and the prefactor $(vt)^{d-1}$ is the surface area of the light cone, which scales with the number of paths [as in Eq. \eqref{eq:taylorseriesintro}] that reach the edge of the light cone.

In fact, using a very different approach---cluster expansions---it has recently been shown \cite{cluster_Alhambra23} that classical simulation algorithms do exist with polynomial error in $\epsilon^{-1}$ at a fixed time $t$:
\begin{equation}
    N \sim \exp\left(\mathrm{e}^t \log\epsilon^{-1}\right). \label{eq:introsim2}
\end{equation} 
Despite scaling optimally with $\epsilon$, this bound is weaker than Eq. \eqref{eq:introsim1} for large $t$ and fixed $\epsilon$. The mismatch in scaling with $t$ and $\epsilon$ between Eqs. \eqref{eq:introsim1} and \eqref{eq:introsim2} would be closed by a bound such as in Eq. \eqref{eq:introsim3}. To find such an optimal bound, which may also find many other applications, it is crucial to have better control over the shape of $A_x(t)$ outside of the ``Lieb-Robinson light cone"---the region within distance $vt$ of the initial site $x$. This paper will address precisely this problem, and discuss the extent to which Eq. \eqref{eq:introsim3} might be achievable.

\section{Summary of results}
Our central objective is to better understand the ``tail" of an operator---the fraction that acts outside the light cone. Quantum circuits with local gates have strict light cones, so time-evolved 
operators have no tails. This is one of the few cases where we cannot rely on intuition from quantum circuits \cite{nahum2018,keyserlingk2018} to inform 
us about the behavior of operators in continuous-time evolution. Intuitively, from Eq. \eqref{eq:taylorseriesintro}, it appears that the exponential tail in the Lieb-Robinson bound essentially arises due to \emph{direct paths} between two points $x$ and $y$. With a local Hamiltonian, terms arising from direct paths between points separated by a distance $R$ grow in a sequential fashion, requiring $\mathrm \Omega(R)$ steps.  In contrast, if we want to fill the entire volume of sites a distance $R$ from $x$, we should need $\mathrm \Omega(R^d)$ terms in the series expansion \eqref{eq:taylorseriesintro} to ensure that all sites are touched at least once.   This argument suggests that the weight of an operator that touches a finite fraction of the sites inside a ball of radius $R$ should decay as $\exp[-R^d]$ rather than $\exp[-R]$, and that the Pauli strings grown along direct paths which dominate the behavior outside the light cone have a thin, ``noodle-shaped" support  (illustrated in Figure \ref{fig:schematic}a).
\begin{figure}[t]
\centering
\includegraphics[width = .6\textwidth]{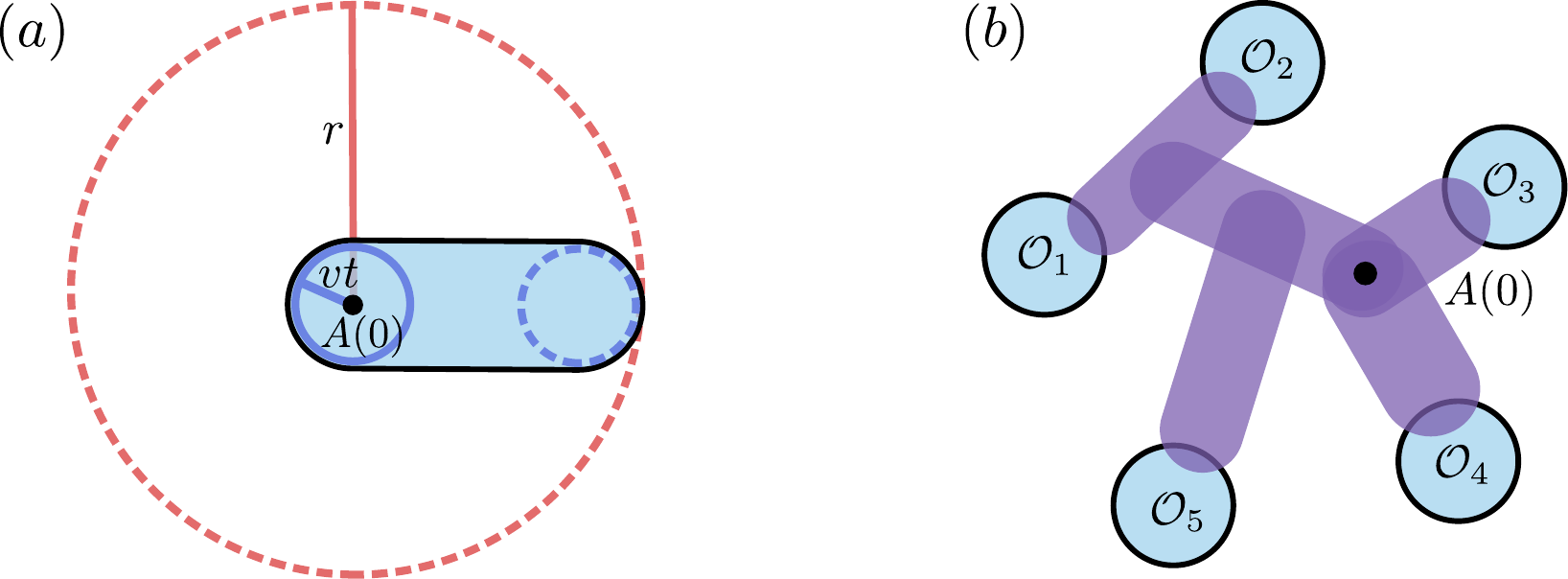}
\caption{(\emph{a}) The dominant contribution to the growth of an operator outside of the Lieb-Robinson light cone.  Conventional Lieb-Robinson bounds suggest that the fraction of $A(t)$ acting on the red dotted circle should be suppressed by $\exp[-\mathrm{O}(r)]$. We show that it is suppressed by $\operatorname{exp}[-\mathrm{O}(r^d)]$, which highly favors operators with ``noodle-shaped" support like that shown in blue. (\emph{b}) An illustration of why this problem is so difficult to approach with conventional techniques, which count the number of paths of subsets (illustrated in purple) that connect the support of $A(0)$ to the support of $\mathcal O_1, \dots ,\mathcal O_m$. The combinatorics of counting branching paths that grow operators to have large support becomes difficult at higher orders. This makes the equivalence-class method a very valuable tool. }
\label{fig:schematic}
\end{figure}
In $d>1$, we must be careful about predicting the scaling directly from the number of terms in the perturbative expansion \eqref{eq:taylorseriesintro}.  As depicted in Figure \ref{fig:schematic}b, the number of ways that an operator can grow to fill a large volume is exponentially large in volume itself!  As we Taylor expand $A_x(t)$ at higher orders $n$ in $t$,
the larger support of $[H_{X_{n-1}},\cdots,[H_{X_1},A_x]]$ means that there are yet more terms in $H$ that may not commute with the nested commutator at the $n^\text{th}$ step.  As is known \cite{araki,Avdoshkin:2019trj}, the growth in the number of possible terms is enough to break the convergence of the series.  Therefore, if the weight of an operator that fills an entire volume truly does decay as $\exp[-R^d]$, we must find a careful way to re-sum this series.  

Our main technical tool for this re-summation is the equivalence-class formalism recently introduced in Ref.~\cite{chen2021operator}. In Sec.~\ref{subsec:advantage}, we use an example of a system with a weak link to illustrate the flexibility in choosing which paths to include in the bound. In Sec.~\ref{sec:nestedcommutatorbounds} 
we leverage this flexibility to introduce a reformulation of the problem, as illustrated in Fig.~\ref{fig:setup}.
This reformulation allows us to separate out the direct paths connecting sites within a ball of radius $R$ to $x$. We phrase our theorem in terms of nested commutators as a tool for probing the support of an operator.
\begin{figure}[t]
    \centering
    \includegraphics[width=0.3\linewidth]{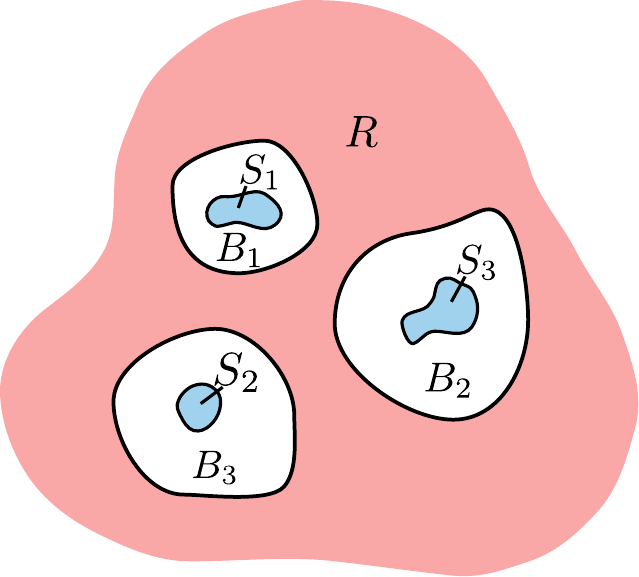}
    \caption{Illustration of the problem setup. This illustration is schematic, and the underlying vertex set is not shown. Region $R$ is depicted in red, and can be imagined as having infinite extent. Regions $S_1, S_2, S_3$ are depicted in blue, contained within the white regions $B_1, B_2, B_3$ which are disjoint from $R$. The shapes are arbitrary, but, for particular applications, we find optimal choices for the shape of these regions.}
    \label{fig:setup}
\end{figure}

We use ``local" to refer to Hamiltonians with finite-range interactions and ``quasi-local" to refer to Hamiltonians with interactions that decay exponentially and sufficiently rapidly with the volume of their support, which may be slightly stronger than the way it has been used before in the literature.
Our first result is to rigorously prove the following:
\begin{thm}[Thm.~\ref{thm:bound} and Thm.~\ref{thm:exponential_H} (Informal)]\label{thm:introvolumelaw}
Suppose that $S_1, \dots, S_m$ are regions which are at least $r$ apart. Let $\mathcal O_1, \dots ,\mathcal O_m$ be operators that only act non-trivially on these regions respectively, with operator norm $\Vert \mathcal O_i \Vert = 1$ for simplicity. Let $A$ be an operator with $\Vert A \Vert = 1$, and suppose that the distance between the support of $A$ and any $S_i$ is also at least $r$. If $H$ is a local Hamiltonian, then there exist constants $c_{\text{LR}}, v_{\text{LR}}$ independent of $m$ such that
\begin{align}
\Vert [\mathcal O_m, [\dots,[\mathcal O_1, A(t)]]] \Vert \leq \CLR^m
\qty(\frac{\LRvel t}{r})^{mr}
\end{align}
for any $\LRvel t < r$.
If $H$ is quasi-local, then as long as $S_i$ are not ``too close" together, the following slightly weaker bound holds:
\begin{align}
\Vert [\mathcal O_m, [\dots,[\mathcal O_1, A(t)]]] \Vert \leq \CLR ^m\exp(m\mu(\LRvel t-r))
\end{align}
with $\mu$ a constant independent of $m$. The value of $\CLR$ is irrelevant as long as $r$ is sufficiently large.
\end{thm}

Intuitively, the nested commutator is a probe of the support of $A(t)$ because the operators $\mathcal O_1, \dots, \mathcal O_m$ supported on sets $S_1, \dots, S_m$ may be chosen to capture the contribution of Pauli strings in the expansion of $A(t)$ which act simultaneously on $S_1, \dots, S_m$. At a high level, it certainly would seem that Thm.~\ref{thm:introvolumelaw} is a dramatic improvement over a conventional Lieb-Robinson bound.  Rather than scaling with the diameter of the support, the suppression depends on the volume. Just like conventional Lieb-Robinson bounds, these results also hold for time-dependent systems; the intuition behind this generalization is that these worst-case bounds are established by counting intersections, and this combinatorial argument still works when time-dependence is introduced. We show how the proof can be generalized in Sec.~\ref{sec:time-dependent}.

In some applications, the nested commutator itself is already an interesting object to bound. However, for applications such as bounding the simulatability of quantum dynamics on classical computers, directly characterizing the ``fraction" of an operator that acts on a subset lying far outside the light cone is crucial. Specifically, an operator under time evolution becomes a complicated weighted sum of operators of different support, and such a characterization helps us design an efficient classical algorithm that keeps track of only the most important contributions to the sum.
 
To make the connection between the nested commutator bound and volume-law scaling precise, we observe that $m\sim \left(R/vt\right)^d$ such operators $\{\mathcal O_i\}_{i=1}^m$ can fit inside of a ball of radius $R$, while ensuring that no light cones overlap. This leads to the following theorem.
\begin{thm}[Cor.~\ref{cor:volumescaling}, Cor.~\ref{cor:volumescaling_exp}, and Lemma~\ref{lem:ASb<} (Informal)]
Given a local or quasi-local interaction and a local operator $A$ on a cubic lattice, consider the Pauli decomposition $A(t) = \sum_{S\subseteq V}\sum_{i=1}^{3^{|S|}}\alpha_S^{(i)}(t)P_{S}^{(i)}$, where $P_{S}^{(i)}$ is a tensor product of Pauli operators supported on subset $S$ of vertices and $\{\alpha_S^{(i)}(t)\}$ are complex coefficients. 
Tile the square lattice with Euclidean boxes of side length sufficiently greater than $\LRvel t$ such that the support of $A$ is contained in a single box. Given a connected set of boxes $(B_1, \dots, B_m)$ called a ``cluster," where each box has side length $r$ sufficiently greater than $\LRvel t$, there exists $\mu, \CLR$ such that
\begin{align}
\left \Vert\sum_{B_k \cap S \neq \emptyset,i}\alpha_S^{(i)}(t)P_{S}^{(i)}\right\Vert \leq \CLR\exp(-\mu r[3^{-d}m-1]),
\end{align}
where the sum over $S$ is taken over subsets which intersect $B_1, \dots, B_m$ non-trivially.
\end{thm}
We expect that the restriction to a cubic lattice is just a technical convenience, and that a similar result should hold for a general graph embedded in $d$ spatial dimensions.

The first application of this volume-law scaling bound is given in Sec.~\ref{sec:simulation}, where we explicitly derive an improved estimate for the simulation complexity of local and quasi-local quantum systems.
\begin{thm}[Cor.~\ref{cor:polynomial_simulate} (Informal version)]
Let $H$ be a local or quasi-local Hamiltonian on a $d-$dimensional cubic lattice of qudits. If $\rho$ is a state such that the marginal $\rho_S$ on any connected set $S$ can be obtained with at most exponential resources, then we provide an algorithm to compute the expectation $\Tr[A(t)\rho]$ to error $\epsilon$ with classical resources
\begin{equation}
N = \exp(\mathrm O[(1+\LRvel t)^{d-1}\qty(\LRvel t + \log \epsilon^{-1})]).
\end{equation}
\end{thm}
This complexity exhibits polynomial scaling in $1/\epsilon$, and we argue that this bound has the best possible scaling in both $\epsilon$ and $t$. Note that this scaling is just slightly weaker than the naive expectation of Eq.~\eqref{eq:introsim2}.  In particular, this result recovers the $\exp(\mathrm O((\LRvel t)^d))$ time-dependence that one would expect from a simulation with a strict light cone, such as a random unitary circuit \cite{nahum2018,keyserlingk2018}, while maintaining the polynomial scaling in error tolerance $\epsilon$ for a fixed time with a suitably $d$- and $t$-dependent prefactor, which reduces to a constant in the case $d=1$, as known from existing Lieb-Robinson bounds.

We additionally explore the applications of bounds such as Eq.~\eqref{thm:introvolumelaw} in condensed matter physics. More concretely, we study  spontaneous $\mathbb{Z}_2$-symmetry-breaking phases of quantum matter, and rigorously close two gaps between known characterizations of symmetry-breaking phases and expectations from numerics and perturbation theory.
First, conventional LR bounds \cite{hastings2005} imply that the splitting $\delta = |\bra{\psi_+}H\ket{\psi_+} - \bra{\psi_-}H \ket{\psi_-}|$ between the two symmetry-broken ground states $\ket{\psi_{\pm}}$ in the $\pm 1$ sectors of the global $\mathbb{Z}_2$ symmetry is at most exponentially small in the linear system size. However, the expectation from perturbation theory is that this splitting is exponentially small in system \textit{volume}. Spontaneous symmetry breaking is also characterized by the rapid decay of the disorder operator $D_R \equiv \prod_{i \in B_R(v)}X_i$ with $R$, which is the Ising symmetry restricted to a ball of radius $R$. Conventional LR bounds give a decay which is exponential in $R$, while the expectation from numerics and perturbation theory \cite{zhao2021} is that this decay is exponential in the volume $\sim R^d$. We close these gaps with the following theorem.
\begin{thm}[Thm.~\ref{thm:splitting} and Thm.~\ref{thm:disorder} (Informal)]
If $\ket{\psi_{\pm}}$ are the ground states of a gapped, quasi-local Hamiltonian in $d$ dimensions which are connected to GHZ states with eigenvalues $\pm 1$ under the global $\mathbb{Z}_2$ symmetry by a finite-time evolution under a quasi-local Hamiltonian, then the splitting obeys the following bound
\begin{equation}
|\bra{\psi_+}H\ket{\psi_+} - \bra{\psi_-}H\ket{\psi_-}| \leq \CLR L^d\exp(-\mu \frac{(L-\LRvel t)^d}{(\LRvel t)^{d-1}}),
\end{equation}
where $L$ is the linear size of the system and $\CLR, \mu, \LRvel$ are constants.
Furthermore, if $\ket{\psi}$ is connected to the polarized state $\ket{\vb 0}$ by a finite-time evolution under a $\mathbb Z_2$-symmetric, quasi-local Hamiltonian, then the disorder parameter obeys
\begin{align}
\bra{\psi}D_R\ket{\psi} \leq \CLR\exp(-\mu \frac{(R-\LRvel t)^d}{( \LRvel t)^{d-1}}).
\end{align}
\end{thm}
These results suggest a new diagnostic tool to distinguish quantum phases of matter, which we illustrate via a concrete example with Rokhsar-Kivelson-like states \cite{rokhsar1988,ardonne2004,castelnovo2008}.

\section{Mathematical preliminaries}
The remainder of the paper will discuss the above results in a more formal way.  We first introduce our notation and review some key results and ideas from previous work on Lieb-Robinson bounds.  We collect such results in this section.
\subsection{Many-body quantum systems}
To discuss locality in a many-body quantum system, it is often helpful to associate the quantum degrees of freedom (``qudits") with the vertices of a graph.  We can imbue the problem with a notion of spatial locality by adding edges between these vertices, providing a notion of distance between distinct qubits.  We begin the formal discussion with problems where qudits only interact with their nearest neighbors on the resulting graph (with generalizations to exponentially decaying interactions discussed in Sec.~\ref{sec:exponential}), and in this context it is often helpful to define a factor graph \cite{chen2021operator}.
\begin{defn}[Factor Graph]
A factor graph $G = (V, E, F)$ is a bipartite graph in which each node in the \textit{node set} $V$ is connected to the \textit{factor set} $F$ with \textit{edges} $E\subseteq V\times F$.   We assume that $G$ is  connected.
If $(v,X) \in E$, we write $v \in X$: vertex $v\in V$ is connected to factor $X\in F$.
The distance $\mathsf{d}(x,y)$ between $x,y\in V$ is defined as the smallest number of factors contained in a connected path from $x$ to $y$ in $G$.   For subsets $R,S\subset V$, $\mathsf{d}(R,S) := \min_{x\in R, y\in S} \mathsf{d}(x,y)$.
\end{defn}
\begin{defn}
    Given a subset $S \subset V$, define the boundary set \begin{equation}
        \partial S := \lbrace x \in S : \mathsf{d}(\lbrace x\rbrace,S^{\mathrm{c}}) = 1\rbrace.
    \end{equation}
\end{defn}
The tail bounds we develop fully leverage the dimensionality of the system. To capture the dimensionality of an arbitrary graph, we use the following notion:
\begin{defn}[$d$-dimensional system]
     Given a factor graph $(V,E,F)$, define the ball of radius $r$ around vertex $v\in V$ as \begin{equation}
        B_r(v) := \lbrace u\in V : \mathsf{d}(u,v)\le r\rbrace.
    \end{equation} We say that $(V,E,F)$ is d dimensional if for all $v\in V$ and sufficiently large $r$, there exist O(1) constants $C_{1,2}$ and $C_{1,2}'$ such that \begin{subequations}\begin{align}
        C_1'r^d \le  |B_r(v)| &\le C_1 r^d, \\
        C_2'r^{d-1} \le |\partial B_r(v)| &\le C_2 r^{d-1}.
    \end{align}\end{subequations}
\label{def:distance}
\end{defn}
Finite factor graphs represent a useful language to speak about many-body systems, as formalized in the definition below.   
We demonstrate all our results where $V$, $E$, and $F$ are finite sets, but our formalism naturally extends to the thermodynamic limit.

\begin{defn}[Many-body Quantum System with Nearest-Neighbor Interactions]
    Given a factor graph $G = (V, E, F)$, we define a many-body Hilbert space \begin{equation}
        \mathcal{H} = \bigotimes_{v\in V} \left(\mathbb{C}^q\right)_v =: \left(\mathbb{C}^q\right)^{\otimes V},
    \end{equation}
    for some $q\in\mathbb{N}$. Here $\left(\mathbb{C}^q\right)_v$ reminds us that the $q$-dimensional Hilbert space is associated with a degree of freedom at vertex $v$. The Hamiltonian is a Hermitian operator \begin{equation}
        H = \sum_{X\in F} H_X,
    \end{equation}
    where $X$, considered as a subset of $V$, is the support of $H_X$.  We say that $\mathcal H, H$ defines a many-body system with few-body interactions if for all $X\in F$, $|X| \leq \Delta$ for some fixed $\mathrm O(1)$ constant $\Delta$.
\end{defn}

\begin{defn}[Vector space of operators]
The space of all linear operators acting on $\mathcal{H}$ forms a vector space $\mathcal{B}$. When we wish to emphasize this, we will write $A\in \mathcal{B}$ as $|A)$. 
There is a natural adjoint action of any $A\in \mathcal{B}$ on $\mathcal{B}$:  $\mathrm{ad}_A |B) = |[A,B])$. 
We often use $A_S\in \mathcal B_S$ to denote that $A$ is supported within $S\subset V$.
\end{defn}
We can express the time evolution of an operator $\mathcal O$ in the Heisenberg picture using the notation above:
\begin{equation}
\dv{t}|\mathcal O(t)) = \mathrm{i}|[H, \mathcal O(t)]) := \mathcal L |\mathcal O(t)) \ .
\label{eq:HL}
\end{equation}
The formal solution to this equation for a time-independent $H$, and thus a time-independent $\mathcal L$, is $|\mathcal O(t)) = \mathrm{e}^{t\mathcal L}|\mathcal O)$. Since $H = \sum_{X \in F}H_X$, we can similarly associate a superoperator $\mathcal L_X := \mathrm{i}\ad_{H_X}$ with each $H_X$ and, through the bilinearity of the commutator, write $\mathcal L = \sum_{X \in F}\mathcal L_X$. For most of this paper, for notational convenience, we restrict to the study of $t$-independent $H$. In Sec.~\ref{sec:time-dependent}, we show how propagators are replaced with the appropriate time-ordered exponentials, and the rest of the manipulations are largely unchanged.

\subsection{Equivalence-class construction of Lieb-Robinson bounds}
A typical Lieb-Robinson bound is stated as follows:
\begin{thm}[Lieb-Robinson bound]
\label{thm:standardLR} 
Given a Hamiltonian $H$ on factor graph $G=(V,F,E)$, there exist $\mathrm O(1)$ constants $C, \mu, \LRvel$ such that for any two single-site operators $A_{R,S}\in \mathcal{B}_{R,S}$ respectively, for subsets $R,S\subset V$, \begin{equation}\label{lrbound1}
    \lVert [A_R(t),A_S] \rVert \le C \min(|\partial R|, |\partial S|) \mathrm{e}^{\mu (vt-\mathsf{d}(R,S))}.
\end{equation}
Here $\lVert \cdot \rVert$ denotes the operator norm (maximum singular value of its argument).
\end{thm}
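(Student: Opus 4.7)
The plan is to prove Theorem~\ref{thm:standardLR} via the equivalence-class formalism of Ref.~\cite{chen2021operator}, which the rest of the paper will also lean on. First, I would Taylor expand the Heisenberg-evolved operator,
\begin{equation*}
A_R(t) = \sum_{n=0}^\infty \frac{(\mathrm{i}t)^n}{n!} \sum_{(X_1,\ldots,X_n) \in F^n} \ad_{H_{X_n}} \cdots \ad_{H_{X_1}} A_R,
\end{equation*}
and estimate $[A_R(t), A_S]$ term by term. A sequence $(X_1,\ldots,X_n)$ contributes non-trivially only if it successively builds a connected cluster starting from $R$ and---because of the outer commutator with $A_S$---only if this cluster ultimately intersects $S$. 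Any nonzero contribution therefore demands $n \ge d := \mathsf d(R, S)$, and this is the origin of the exponential-in-distance decay.

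The core task is to estimate
\begin{equation*}
\sum_{\substack{(X_1,\ldots,X_n) \text{ connected} \\ \text{path from } R \text{ to } S}} \prod_{i=1}^n 2\,\|H_{X_i}\|.
\end{equation*}
A naive bound over-counts, because a given set of factors appears in up to $n!$ distinct sequence orderings, precisely the factor that threatens the convergence of the Taylor series. The equivalence-class construction of Ref.~\cite{chen2021operator} groups sequences modulo orderings that leave the associated nested commutator unchanged, producing classes indexed by a canonical combinatorial object (essentially a rooted factor tree) whose multiplicity is controlled by a multinomial coefficient. I would adapt that construction so that the canonical representatives correspond to self-avoiding factor paths from $R$ to $S$, of which there are at most $|\partial R|\,\kappa^n$ at length $n$ for an $\mathrm O(1)$ constant $\kappa$ depending only on the bounded degree of the factor graph and on $\Delta$.

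Setting $h := \max_v \sum_{X \ni v} \|H_X\|$, this yields a bound of the shape
\begin{equation*}
\|[A_R(t), A_S]\| \le 2\|A_R\|\|A_S\|\,|\partial R| \sum_{n \ge d} \frac{(2h\kappa|t|)^n}{n!}.
\end{equation*}
The elementary tail estimate $\sum_{n \ge d} x^n/n! \le (\mathrm e x/d)^d$, with $\LRvel := 2\mathrm e h\kappa$ and $\mu := 1$, then gives the claimed $\mathrm e^{\mu(\LRvel t - d)}$ decay in the regime $d > \LRvel t$; the opposite regime is trivially absorbed into the overall constant $C$. Running the same argument with $R$ and $S$ interchanged---or, equivalently, evolving $A_S$ backwards in time---produces the symmetric bound with $|\partial S|$, and taking the minimum completes the proof.

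The main obstacle is the equivalence-class bookkeeping: the Jacobi identity means that nested commutators are \emph{not} invariant under arbitrary reorderings of the $X_i$, so the equivalence relation must respect commutator algebra while still absorbing the $n!$ combinatorial factor. This is exactly the content of Ref.~\cite{chen2021operator}, and once it is imported, the remaining steps---Taylor expansion, connectedness, and the exponential tail---are routine, and the very same machinery can be reused in later sections to prove the sharper nested-commutator bound \eqref{eq:intronested}.
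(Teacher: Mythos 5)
Your proof is correct and follows essentially the same route the paper relies on. The paper states this theorem as a cited standard result (from \cite{AnthonyChen:2023bbe}) rather than proving it in the text, but the equivalence-class machinery of \cite{chen2021operator} that you import is precisely what the paper reviews in Section 3.2 and then formally re-derives in Section 4: your argument is Theorem~\ref{thm:chen_LR_bound} combined with the path-counting and tail estimate of Corollary~\ref{cor:combinatorialbound}, specialized to a single commutator ($m=1$). Your elementary tail bound $\sum_{n\ge d} x^n/n! \le (\mathrm e x/d)^d$ is the same optimization $\inf_{\alpha>1}\alpha^{-d}\mathrm e^{x\alpha}$ used in that corollary, and your observation that running the argument from either end and taking the minimum yields the $\min(|\partial R|,|\partial S|)$ prefactor matches the paper's footnote remarking that this optimal prefactor is not always stated.
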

Despite its seemingly technical nature, this result has broad applications (see Ref.~\cite{AnthonyChen:2023bbe} for a recent review).  Although the Lieb-Robinson bound (\ref{lrbound1}) is over 50 years old \cite{Lieb1972},\footnote{We note, however, that the optimal $\min(|\partial R|, |\partial S|)$ prefactor in \eqref{lrbound1} has not always been employed in the literature.} we now focus on a much more recent \cite{chen2021operator} ``equivalence-class formulation" of Lieb-Robinson bounds, which we find will elegantly address the central question in this paper.  Here, we review this equivalence-class formulation with a focus on the high-level ideas; a formal proof of the results outlined below follows from the more general method we develop in the next section.

Given operators $A_i$ and $\mathcal O_j$ supported on sites $i$ and $j$ respectively, conventional Lieb-Robinson bounds deal with the commutator
\begin{equation}
C_{ij}(t) = \frac{1}{2}\left\Vert [\mathcal O_j, A_i(t)]\right\Vert  = \frac{1}{2}\left\Vert\ad_{O_j}|A_i(t))\right\Vert 
\ 
\end{equation}
analogously to Theorem \ref{thm:standardLR}, using single sites rather than sets for simplicity.   
We first express $|A_i(t))$ as:
\begin{equation}
\ad_{\mathcal O_j} e^{t\mathcal L}|A_i) = \sum_{n=0}^{\infty}\frac{t^n}{n!}\sum_{X_1, \dots, X_n \in F}\ad_{\mathcal O_{j}}\mathcal L_{X_n}\dots \mathcal L_{X_1}|A_i) \ .
\label{eq:resum}
\end{equation}
Each term in the sum can be uniquely labeled by a sequence of factors $M = (X_1, \dots,X_n)$.  
These sequences correspond to graphs whose nodes are elements of $F$, where $X_1, X_2$ are connected by an edge if they share at least one vertex, and the corresponding term is only non-vanishing if the graphs are connected. However, these graphs include redundancies, such as offshoots and cycles, which intuitively do not contribute to the growth of the operator from site $i$ onto site $j$. 
A fundamental innovation of Ref.~\cite{chen2021operator} is to formalize this intuition by bounding the commutator using only so-called \textit{irreducible} (non-self-intersecting) paths between $i$ and $j$ on the factor graph. These paths correspond to sequences of factors that appear in the expansion of the time evolution operator.
Two possible connected acyclic subgraphs of the factor graph which differ only by redundant ``reducible" terms but nevertheless have the same irreducible path are said to be in the same equivalence class. An example is illustrated pictorially in Fig.~\ref{fig:equivalence_class_example}.

\begin{figure}
\centering
\includegraphics[width = .65 \textwidth]{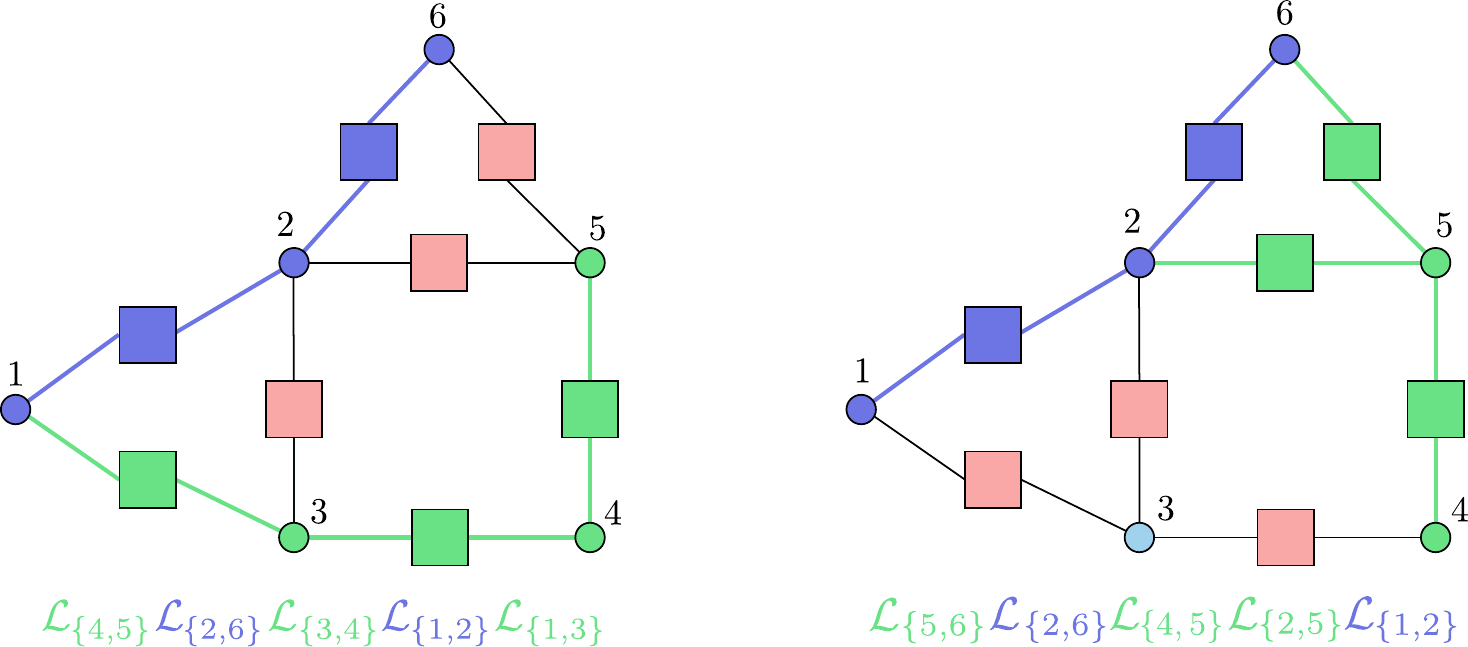}
\caption{Example of two sequences of factors that appear in the expansion of the Liouvillian propagator which have the same irreducible path and thus belong to the same equivalence class. The boxes are factors and the circles are vertices. The irreducible path is shown in blue, while the irrelevant terms in the sequence are shown in green. Note that the ordering of the terms in the sequence matters for the construction of the irreducible path; for instance, if $\mathcal L_{\{2,6\}}$ is to appear after $\mathcal L_{\{5,6\}}$ in the right figure, then the irreducible path would be $\{1,2\}, \{2,5\}, \{5,6\}$.}
\label{fig:equivalence_class_example}
\end{figure}
Grouping the terms in Eq.~\eqref{eq:resum} by equivalence class, Chen and Lucas obtained the following improved LR bound.
\begin{thm}[Theorem 3 of Ref.~\cite{chen2021operator}]
\label{thm:chen_LR_bound}
If $\mathcal O_{j} \in \mathcal{B}_j$ and $A_i\in\mathcal{B}_i$ with $\Vert \mathcal O_j\Vert =  \Vert A_i\Vert = 1$, then
\begin{equation}
C_{ij}(t) := \frac{1}{2}\Vert \ad_{\mathcal O_j} e^{\mathcal L t}|A_i)\Vert \leq \sum_{[\Gamma] \in \mathcal S}\frac{(2|t|)^{|\Gamma|}}{|\Gamma|!}w(\Gamma),
\end{equation}
where $w(\Gamma) = \prod_{X \in \Gamma}\Vert H_X \Vert$ is the weight of the path $\Gamma$, $[\Gamma]$ is an equivalence class of paths differing only by reducible terms, and $\mathcal S$ is the set of these equivalence classes.
\label{thm:eqconstruction}
\end{thm}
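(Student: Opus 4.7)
The plan is to Taylor-expand $\mathrm{e}^{t\mathcal{L}}|A_i)$, reorganize the resulting double sum over factor sequences by the equivalence relation of sharing the same irreducible path from $i$ to $j$, apply a Schwinger--Karplus-type integral identity to resum within each class, and finally estimate the operator norm term by term.

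First I would expand $\ad_{\mathcal{O}_j}\mathrm{e}^{t\mathcal{L}}|A_i) = \sum_{n\ge 0}(t^n/n!)\sum_{X_1,\ldots,X_n}\ad_{\mathcal{O}_j}\mathcal{L}_{X_n}\cdots\mathcal{L}_{X_1}|A_i)$ and observe, as in \cite{chen2021operator}, that a sequence $M=(X_1,\ldots,X_n)$ contributes nontrivially only if it forms a causal tree rooted at $i$ that reaches $j$: each $X_k$ must overlap $\{i\}\cup X_1\cup\cdots\cup X_{k-1}$ (otherwise $\mathcal{L}_{X_k}$ commutes past everything earlier and annihilates $|A_i)$ by locality), and at least one $X_k$ must contain $j$ (otherwise $\ad_{\mathcal{O}_j}$ kills the term). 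Each such causal tree contains a unique non-self-intersecting path $\Gamma=(Y_1,\ldots,Y_\ell)$ from $i$ to $j$, so I can partition the sum by $\Gamma$ exactly as in \eqref{eq:regroup}.

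The key step is resumming the inner sum $\sum_{M\in[\Gamma]}$ for fixed $\Gamma$. Writing $\mathcal{L}'_\Gamma := \mathcal{L}-\sum_{Y\in\Gamma}\mathcal{L}_Y$ for the Liouvillian of all off-path factors, an iterated Schwinger--Karplus (Duhamel) identity, stated as Lemma~\ref{lem:SchwingerKarplus}, repackages the interleaved sum as a time-ordered integral over $0\le t_1\le\cdots\le t_\ell\le t$ in which the path factors $\mathcal{L}_{Y_k}$ act at ordered times, separated by conjugations $\mathrm{e}^{(t_{k+1}-t_k)\mathcal{L}'_\Gamma}$ that absorb all the branches hanging off the path. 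Because each $\mathrm{e}^{s\mathcal{L}'_\Gamma}$ is a unitary conjugation on $\mathcal{B}$, it preserves the operator norm, so the norm estimate reduces entirely to a contribution from the path itself. Applying $\|\ad_B X\|\le 2\|B\|\|X\|$ once for $\mathcal{O}_j$ and once for each $H_{Y_k}$, together with the simplex volume $|t|^\ell/\ell!$, bounds the norm of the $[\Gamma]$ contribution by $2\cdot(2|t|)^\ell w(\Gamma)/\ell!$.

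Summing over $[\Gamma]\in\mathcal{S}$ and dividing by the $1/2$ in the definition of $C_{ij}(t)$ then gives the stated bound. The main obstacle is the Schwinger--Karplus step itself: one has to verify combinatorially that every causal-tree sequence with irreducible path $\Gamma$ is reproduced \emph{exactly once} by the time-ordered integral, that the branches hanging off the path are correctly repackaged into the propagators $\mathrm{e}^{s\mathcal{L}'_\Gamma}$, and that no non-causal interleavings are spuriously generated. Once this identity is in hand, the remainder is routine submultiplicativity of the operator norm, invariance under unitary conjugation, and a single simplex integral.
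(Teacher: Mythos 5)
Your strategy mirrors the paper's own route: Taylor-expand, restrict to causal trees, partition by the irreducible path $\Gamma$ from $i$ to $j$, resum each class by the generalized Schwinger--Karplus identity, and finish with unitary invariance and submultiplicativity of the operator norm together with the simplex volume. That is exactly what the paper does (via the more general Theorem~\ref{thm:bound} and Lemma~\ref{lem:causalidentity}, which subsume the $m=1$ case).

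There is, however, one substantive imprecision: you define a single off-path Liouvillian
$\mathcal{L}'_\Gamma := \mathcal{L}-\sum_{Y\in\Gamma}\mathcal{L}_Y$
and use it to conjugate between every pair of consecutive path steps. That is not the correct object for the resummation to be an \emph{identity}. If, say, $\Gamma=(Y_1,\ldots,Y_\ell)$, then between $\mathcal{L}_{Y_1}$ and $\mathcal{L}_{Y_2}$ the only admissible off-path factors are those that do not touch the vertices of $Y_2,\ldots,Y_\ell$ (nor $j$): a factor touching $Y_3$, inserted too early, would produce a \emph{different} (shorter) irreducible path and hence belong to a different equivalence class. Your uniform $\mathcal{L}'_\Gamma$ removes exactly $Y_1,\ldots,Y_\ell$ but still permits other factors that intersect the later vertices of $\Gamma$ and the endpoint $j$, so the putative bijection you flag as ``the main obstacle'' in fact fails: the time-ordered integral with $\mathcal{L}'_\Gamma$ reproduces a strict superset of the sequences in $[\Gamma]$, and upon summing over $[\Gamma]$ you double-count. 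The fix, as in Lemma~\ref{lem:causalidentity}, is to use \emph{stage-dependent} Liouvillians $\mathcal{L}_k := \sum_{X:\, X\cap V_k=\emptyset}\mathcal{L}_X$, with $V_k$ the forbidden vertices that would alter the equivalence class if touched between $\Gamma_k$ and $\Gamma_{k+1}$. Once you make that replacement, each $\mathcal{L}_k$ is still $\mathrm{i}\,\mathrm{ad}_{H_k}$ for a Hermitian $H_k$, so the propagators $\mathrm{e}^{\Delta t_k \mathcal{L}_k}$ remain unitary conjugations and your norm estimate goes through verbatim, giving the claimed $(2|t|)^{|\Gamma|}w(\Gamma)/|\Gamma|!$.
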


The first main contribution of this work is to use similar techniques to prove a generalization of the previous theorem to nested commutators. As such, we will not prove Theorem~\ref{thm:chen_LR_bound} here, and instead defer the details of the proof to Sec.~\ref{sec:nestedcommutatorbounds}. We also generalize the following corollary to Thm.~\ref{thm:chen_LR_bound}.
\begin{cor}[Cor. 6 of \cite{chen2021operator}]
\label{cor:exp_path_counting}
Define a symmetric real matrix $h$ by 
\begin{equation}
h_{ij} =
\begin{cases}
\sum_{X \in F: \{i,j\}\subseteq X}\Vert H_X \Vert & i \neq j \\
0 & i = j
\end{cases}
\end{equation}
Then 
\begin{align}
C_{ij}(t) \leq \exp(2h|t|)_{ij}
\end{align}
\end{cor}
Essentially, matrix $h$ is the weighted adjacency matrix of a graph that encodes the connectivity of the Hamiltonian. It is a well-known graph-theoretic result that the matrix element $\exp(h)_{ij}$ sums the weighted paths from $i$ to $j$. We likewise defer further details of the proof to Sec.~\ref{sec:nestedcommutatorbounds}.

\subsection{Advantage over conventional Lieb-Robinson bounds\label{subsec:advantage}}

There is a lot of freedom in choosing the equivalence classes in Thm.~\ref{thm:chen_LR_bound}, which gives us the ability to obtain tighter bounds by sorting equivalence classes with a specific problem in mind. Many of the bounds to be obtained in future sections revolve around different methods of constructing equivalence classes. In this section, we provide a motivating example of such a situation.

Consider a chain consisting of four sites (see Fig.~\ref{fig:example_chain}). The Hamiltonian is a sum of three terms $\{h_{1,2}, h_{2,3}, h_{3,4}\}$, with $\Vert h_{1,2} \Vert = \Vert h_{3,4} \Vert = h$ and a weak link $\Vert h_{2,3}\Vert = \epsilon h$ with $\epsilon \ll 1$. 
Let $A$ be an operator supported on site $1$ and $B$ supported on site $4$, both with unit norm. 
Using Theorem~\ref{thm:chen_LR_bound},
\begin{align}
\Vert [A, B] \Vert \leq 2\sum_{[\Gamma] \in S_{ij}}\frac{(2|t|)^{|\Gamma|}}{|\Gamma|!}\prod_{X \in \Gamma}\Vert H_X \Vert = \frac{2\epsilon(2h|t|)^3}{3!} \label{eq:epsilon13bound}
\end{align}
which follows because $\Gamma = (\{1,2\}, \{2,3\}, \{3,4\})$ is the only irreducible path connecting the two points.

 The above does not take advantage of the fact that we know where the weak coupling is in the system. We could instead consider equivalence classes labeled by irreducible paths connecting $\{1,2\}$ with $\{3,4\}$, which in this simple example is trivially the factor $\mathcal L_{2,3}$.   Similar ideas were described in \cite{commute_graph20,baldwin}. Applying the same theorem,
\begin{equation}
\Vert [A, B]\Vert \leq 4\epsilon h|t|
\end{equation}
If $\epsilon \ll 1$ then this bound is nontrivial (smaller than 1) out to later times $t\lesssim \epsilon^{-1}$, versus $t\lesssim \epsilon^{-1/3}$ for Eq.~\eqref{eq:epsilon13bound}. The physical intuition for why this bound works is that we are effectively working in the interaction picture with $H_0 = h_{12} + h_{34}$, rotating out the contribution of these terms to specifically target the weak link. 

\begin{figure}[t!]
    \centering
    \includegraphics[width=0.6\linewidth]{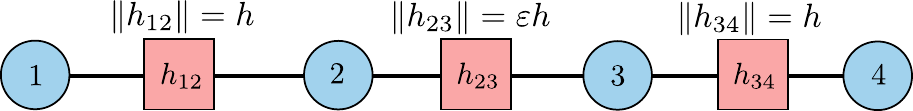}
    \caption{Example system showing the advantage to shortening the irreducible paths.}
    \label{fig:example_chain}
\end{figure}

\section{Nested commutator bounds\label{sec:nestedcommutatorbounds}}
Having reviewed the standard single-commutator Lieb-Robinson bound, we now develop a formalism to generalize Thm.~\ref{thm:chen_LR_bound} to nested commutators. The proof technique is quite similar; however, there are additional details to check, including a convenient choice of equivalence classes and the need to account for the ordering of terms between the different irreducible paths, which was automatic in the case of a single commutator.

This section focuses on systems with only nearest-neighbor interactions on some interaction graph.  The generalization to systems with exponential-tailed interactions (which is important for some of our applications) will be presented in Sec.~\ref{sec:exponential}.

\subsection{Equivalence classes of causal trees}\label{subsec:trees}
Let $H$ be a Hamiltonian associated with the factor graph $(F,E,V)$. Let $\mathcal O_1, \dots, \mathcal O_m$ be operators with $\Vert \mathcal O_i \Vert = 1$ and disjoint supports $S_1, \dots, S_m$, respectively, which are contained in regions $B_1, \dots, B_m$. 
Consider an operator $A$ with $\Vert A \Vert = 1$ whose support is a set $R$ contained within the complement of $B_1, \dots, B_m$, as shown in Fig.~\ref{fig:setup}. We are interested in bounding the nested commutator
\begin{equation}
\label{eq:nest_comm_def}
C_{\vec S}^R(t) = \sup_{\substack{\mathcal O_1, \dots \mathcal O_m\\ A}} \frac{1}{2^m}\Vert [\mathcal O_m, [ \dots ,[\mathcal O_1, A(t)]]\dots]\Vert
\end{equation}
Note that $\prod_{i=1}^m\ad_{\mathcal O_m}$ is a superoperator that acts trivially on any operator $A$ such that $\supp(A)$ (denoting the support) is disjoint from $\supp(\mathcal O_i)$ for each $i$. This property is independent of the ordering of the nested commutator due to the Jacobi identity and the observation that $[\ad_{\mathcal O_i}, \ad_{\mathcal O_j}] = \ad_{[\mathcal O_j, \mathcal O_j]} = 0$ by our assumption that the supports of $O_i, O_j$ are disjoint.

As illustrated before, we can rewrite the time-evolved operator $A$ as
\begin{align}
|A(t)) = \mathrm{e}^{\mathcal L t}|A) = \sum_{M}\frac{t^{|M|}}{|M|!}\mathcal L_{M_{|M|}}\dots \mathcal L_{M_1}|A)
\end{align}
where again, $M = (M_1, M_2, \dots, M_{|M|})$ is a sequence of factors. The essence of our approach is to choose the appropriate equivalence classes to quantify the way that $A(t)$ ``leaks" into the regions $B_1, \dots, B_m$ over time. 
To accomplish this, we construct an algorithm to map each $M$ to a graph on subsets of $V$, modifying the construction in Ref.~\cite{chen2021operator}. The algorithm takes as input $S_1, S_2, \dots, S_m$, $R$, and $M$, where $M$ is a sequence of factors obtained from a term in the Liouville equation as described above, and outputs a forest (disjoint union of trees) $T(M)$, which is a graph whose nodes are subsets of $V$. The algorithm proceeds as follows.

\begin{algorithm}[H]
\caption{An algorithm for generating a causal forest}
\begin{algorithmic}
\State $T_0 \gets \{R\}$
\State $M_0 \gets R$
\For{$n \in \{1, \dots, |M|\}$}
\State $T_n \gets  T_{n-1}$
\If{$\exists k<n$ such that $M_n \cap M_k \neq \emptyset$}
\If{$\neg\exists j < n$ such that $M_n \cap M_j = M_n$}
\State $k \gets \min(\{k \text{ s.t. } M_n \cap M_k \neq \emptyset\})$
\State $T_n \gets T_{n-1} \cup (M_n, (M_k, M_n))$ 
\EndIf
\For{$i$ s.t. $S_i \cap M_n \neq \emptyset$ and $S_i \not\in T_{n-1}$ } 
\State $T_n \gets T_n \cup (M_n, (S_i, M_n))$
\EndFor
\Else 
\Return $\emptyset$
\EndIf
\EndFor
\Return $T_{|M|}$
\end{algorithmic}
\end{algorithm}

The result of this algorithm is the forest $T(M)$.
\begin{defn}
If $S_1, \dots, S_m$ are nodes in the forest $T(M)$, then we call $T(M)$ a causal forest. 
\end{defn}
\begin{prop}
If $T(M)$ is not a causal forest then $\prod_{i}\ad_{\mathcal O_{i}} \mathcal L_{M_{|M|}}\mathcal L_{M_{|M|-1}} \cdots \mathcal L_{M_1} |A_R) = 0$.
\end{prop}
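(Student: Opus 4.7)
The plan is to exploit two elementary facts about the action of $\mathcal{L}_X = \mathrm{i}\,\ad_{H_X}$: (i) the support of $\mathcal{L}_X|O) = \mathrm{i}|[H_X, O])$ is contained in $X \cup \operatorname{supp}(O)$, and (ii) if $X$ is disjoint from $\operatorname{supp}(O)$, then $\mathcal{L}_X|O) = 0$. A straightforward induction on $n$ then yields the support-tracking bound $\operatorname{supp}\bigl(\mathcal{L}_{M_n}\cdots \mathcal{L}_{M_1}|A_R)\bigr) \subseteq R \cup M_1 \cup \cdots \cup M_n$, together with the sharper statement that whenever $M_n$ is disjoint from $\operatorname{supp}\bigl(\mathcal{L}_{M_{n-1}}\cdots\mathcal{L}_{M_1}|A_R)\bigr)$, that step kills the entire operator. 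All subsequent work is just matching the two failure modes of the algorithm to these elementary facts.

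There are two ways $T(M)$ can fail to be a causal forest, and I would handle each separately. First, the algorithm may return $\emptyset$ at some step $n$: this is precisely when $M_n$ is disjoint from $M_0 := R$ and from every $M_k$ with $1 \le k < n$. By the support-tracking bound, $M_n$ is then disjoint from $\operatorname{supp}\bigl(\mathcal{L}_{M_{n-1}}\cdots\mathcal{L}_{M_1}|A_R)\bigr)$, so $\mathcal{L}_{M_n}$ annihilates that operator; the remaining Liouvillians and the outer $\ad_{\mathcal{O}_i}$'s then act on zero, establishing the claim in this case.

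Second, the algorithm may never return $\emptyset$, yet still miss some $S_i$. Inspecting the algorithm, $S_i$ is appended as a node the first time any $M_n$ intersects it, since the check $S_i \cap M_n \neq \emptyset$ lies inside the successful branch and the ``not in $T_{n-1}$'' clause only prevents duplication. So if $S_i$ is absent from $T(M)$ and no $\emptyset$-return ever fires, no $M_n$ in the sequence meets $S_i$. Therefore $S_i$ is disjoint from $R \cup M_1 \cup \cdots \cup M_{|M|}$, hence from $\operatorname{supp}\bigl(\mathcal{L}_{M_{|M|}}\cdots\mathcal{L}_{M_1}|A_R)\bigr)$. Since $\mathcal{O}_i$ is supported on $S_i$, $\ad_{\mathcal{O}_i}$ annihilates this operator, and the zero propagates through the remaining nested commutators because the $\mathcal{O}_j$'s commute with anything supported outside their pairwise-disjoint supports, regardless of the order in which they act.

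The main obstacle here is not analytic but combinatorial bookkeeping: one must carefully align the algorithm's branches with the support-tracking argument, in particular verifying that the subsumption clause ``$M_n \cap M_j = M_n$'' (i.e.\ $M_n \subseteq M_j$)---which suppresses adding a new node while still processing any $S_i$ attachments at step $n$---is consistent with both failure modes above. Once this is verified, the two cases together exhaust all ways $T(M)$ can fail to contain every $S_i$ as a node, completing the proof.
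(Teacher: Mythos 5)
Your proof is correct and uses essentially the same mechanism as the paper: tracking that $\mathcal{L}_{M_n}\cdots\mathcal{L}_{M_1}|A_R)$ has support inside $R\cup M_1\cup\cdots\cup M_n$ and that commutators of operators with disjoint supports vanish. You are a bit more explicit than the paper in distinguishing the two failure modes (the $\emptyset$-return, where $\mathcal{L}_{M_n}$ itself annihilates the operator, versus a completed run in which no $M_n$ touches $S_i$, so $\ad_{\mathcal{O}_i}$ does the annihilating), which the paper folds into a single terse sentence about no ordered subsequence connecting $R$ to $S_i$.
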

\begin{proof}
Let $S_i \notin T(M)$. By the design of the algorithm, this means that $M$ does not have an ordered subsequence of factors connecting $R$ to $S_i$. Since the factors correspond to the support of terms in the Hamiltonian, this means that $|B) \equiv  L_{M_{|M|}}\mathcal L_{M_{|M|-1}} \dots \mathcal L_{M_1}|A_R)$ does not have support on $S_i$, so $\ad_{\mathcal O_{i}}|B) = 0$.
\end{proof}
\begin{defn}
If $T(M)$ is a causal forest, then, by construction, each $S_i$ is contained in a tree 
rooted at $R$, so there is a unique path connecting it to $R$. Excluding the endpoints at $R$ and $S_i$, this path corresponds to a unique sequence $\Gamma_i = (X_1, \dots, X_{l_i})$ of factors $X_i \in F$.
We call $\mathcal S$ the set of equivalence classes of causal forests. An equivalence class of causal forests is defined by pruning all the ``reducible" terms (those not involved in $\Gamma_i$) from $M$, arriving at a minimal ordered sequence $\Lambda$ which will be called the irreducible skeleton.
\end{defn}
An illustration of a causal forest constructed by the algorithm and the corresponding irreducible paths is depicted in Fig.~\ref{fig:causalforest}.

\begin{figure}
\includegraphics[width = .35 \textwidth]{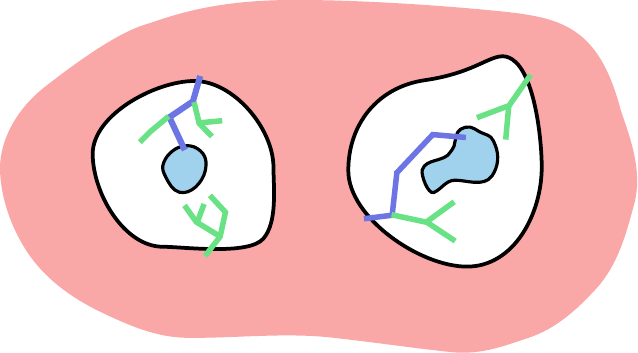}
\caption{Example of a causal forest constructed by the algorithm. The red region again corresponds to $R$ and the two blue regions depict $S_1, S_2$. The blue and green line segments represent factors in the sequence $M$. The blue segments participate in the irreducible paths, and so they form part of the corresponding irreducible skeleton, while the green ones are ``reducible" terms. The ordering of the sequence is not depicted.}
\label{fig:causalforest}
\end{figure}

\subsection{Bounds from irreducible paths}

 The main technical result of this section is to show that under the assumptions outlined above, we can derive a bound on the nested commutator which resembles a product of the single-commutator bounds presented previously. This is captured by the following result:

\begin{thm}
\label{thm:bound}
Consider a local interaction defined by the factor graph $(V,E,F)$. Let $S_1, \dots, S_m$ be sets contained in disjoint regions $B_1, \dots, B_m$ separated by at least one on the subgraph of factors. Consider an operator $A$ with $\Vert A \Vert = 1$ whose support is a set $R$ contained within the complement of $B_1, \dots, B_m$. Let $\mathcal O_1, \cdots \mathcal O_m$ be operators also with norm $1$, where each $\mathcal O_i$ is supported within $S_i$. Then on the nested commutator we have the bound
\begin{equation}
\frac{1}{2^m}\Vert [\mathcal O_m, [\dots,[\mathcal O_1, A(t)]\dots]]\Vert \leq \prod_{i}\sum_{\Gamma \in \Gamma_i(R \to S_i)}\frac{(2t)^{|\Gamma|}}{|\Gamma|!} w(\Gamma)
\label{eq:bound}
\end{equation}
where 
$\Gamma_i(R \to S_i)$ denotes the set of non-self-crossing paths within $B_i$ from $R$ to $S_i$ and $w(\Gamma) = \prod_{X \in \Gamma}\Vert H_{X} \Vert$ is the weight of the path $\Gamma$.
\end{thm}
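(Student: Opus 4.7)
The plan is to adapt the equivalence-class re-summation of \cite{chen2021operator} (which gave Theorem \ref{thm:chen_LR_bound}) to the nested commutator, using the causal-forest decomposition already set up in Section \ref{subsec:trees}. First, I would expand the time-evolved operator as a Dyson series
\begin{equation}
\prod_{i=1}^m \ad_{\mathcal O_i}\,|A(t)) = \sum_{n\ge 0}\frac{t^n}{n!}\sum_{M=(M_1,\dots,M_n)}\prod_{i=1}^m \ad_{\mathcal O_i}\,\mathcal L_{M_n}\cdots\mathcal L_{M_1}|A),
\end{equation}
apply the triangle inequality, and use $\lVert\ad_{\mathcal O_i}\rVert\le 2$ and $\lVert\mathcal L_X\rVert\le 2\lVert H_X\rVert$ so that each surviving term is bounded by $(2|t|)^n/n!$ times the corresponding product of factor norms. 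By the proposition just proved, only sequences $M$ whose associated $T(M)$ is a causal forest contribute, so the sum collapses onto causal forests.

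Next I would group the sum by the equivalence class of the irreducible skeleton $\Lambda=(\Gamma_1,\dots,\Gamma_m)$, where each $\Gamma_i$ is the unique non-self-crossing path in $T(M)$ from $R$ to $S_i$. Because the balls $B_1,\dots,B_m$ are pairwise separated (and separated from each other by at least one factor) and because the algorithm only adjoins a new factor $M_n$ as a child of the \emph{first} factor it touches, each $\Gamma_i$ lies entirely inside $B_i$; in particular the $\Gamma_i$'s share no factors and cannot connect any two of the balls. The set of equivalence classes therefore factorizes as a Cartesian product $\prod_i \Gamma_i(R\to S_i)$. Within a fixed class, the sequence $M$ consists of the ordered factors of $\Lambda$ interleaved with arbitrary ``reducible" factors that do not prolong any path; summing over their identities, multiplicities and insertion positions is exactly the situation handled by the generalized Schwinger-Karplus identity (Lemma \ref{lem:SchwingerKarplus}), which converts the discrete sum over reducible insertions into evolution operators $e^{\mathcal L s}$ of unit norm sandwiched between the $\mathcal L_{X}$'s of $\Lambda$, together with a time-ordered integral $\int_{0\le t_1\le \cdots \le t_{|\Lambda|}\le t}\mathrm d t_1\cdots \mathrm d t_{|\Lambda|} = t^{|\Lambda|}/|\Lambda|!$.

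Combining these pieces, the contribution of a single equivalence class labeled by $(\Gamma_1,\dots,\Gamma_m)$ is bounded in norm by $\prod_i \bigl[(2|t|)^{|\Gamma_i|}/|\Gamma_i|!\bigr]\,w(\Gamma_i)$, and because the class set is the Cartesian product above, summing over classes yields precisely the product of sums in \eqref{eq:bound}. Dividing by $2^m$ to cancel the $m$-fold $\ad_{\mathcal O_i}$ factors gives the claimed inequality.

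The main obstacle I anticipate is the bookkeeping needed to ensure a clean factorization of the Schwinger-Karplus re-summation across the different balls. In the single-commutator proof of \cite{chen2021operator} the reducible insertions live in a single tree rooted at $R$; here the same reducible factor could in principle be assigned to the $B_i$-subtree for several different $i$'s if the trees overlapped, which would spoil the product form. Avoiding this requires arguing (from the disjointness and separation of the $B_i$'s plus the ``first-parent" rule in the algorithm) that every factor in any causal forest belongs to at most one $B_i$-branch, so the time-ordered integral and the sum over reducible labels genuinely decouple into $m$ independent single-commutator problems—at which point Theorem \ref{thm:chen_LR_bound} is effectively being applied separately inside each ball.
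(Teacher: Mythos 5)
Your proposal is correct and follows essentially the same route as the paper: Taylor-expand the propagator, group sequences by equivalence class of causal forests, invoke the Schwinger--Karplus identity (Lemma \ref{lem:SchwingerKarplus}) to re-exponentiate the reducible insertions into unit-norm evolution superoperators, and then use the disjointness of the $B_i$ to factorize the sum over irreducible skeletons into a product over balls. One small caveat: your opening sentence, which bounds ``each surviving term by $(2|t|)^n/n!$ times the corresponding product of factor norms,'' is not itself a usable step---summed directly it reproduces only a short-time bound and misses the exponential decay in distance---but you then discard it in favor of the equivalence-class re-summation, where the $2\lVert H_X\rVert$ factors correctly appear only for the $|\Lambda|$ skeleton terms while the reducible factors contribute norm~1, so the final argument is sound and matches the paper's.
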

We begin by reorganizing the terms in Eq.~\eqref{eq:resum} by equivalence class:
\begin{align}
\prod_i\mathbb \ad_{\mathcal O_{i}} e^{\mathcal L t}|A_R) &= 
\prod_i\ad_{\mathcal O_{i}} \sum_{[T] \in \mathcal S} \sum_{M:T(M) \in [T]}\frac{t^{|M|}}{|M|!} \mathcal L_{X_{|M|}}\dots \mathcal L_{X_1}|A_R)
\label{eq:reorganize}
\end{align}
By definition each $T(M) \in [T]$ has the same set of irreducible paths $\{\Gamma_i\}$, so our goal is to rewrite the second sum in Eq.~\eqref{eq:reorganize} by ``factoring out" the contribution from the irreducible paths in the tree, then grouping together and safely re-summing the extraneous terms. This is accomplished in the following lemma.
\begin{lem}\label{lem:causalidentity}
Fix an equivalence class of causal forests $[T]$ with irreducible paths $\{\Gamma_i\}$ and irreducible skeleton $\Lambda$. Then
\begin{align}
\prod_i \ad_{\mathcal O_{i}} \sum_{M:T(M) \in [T]} \frac{t^{|M|}}{|M|!}\mathcal L_{X_n}\cdots \mathcal L_{X_1}|A_R) &= 
\prod_i\ad_{\mathcal O_{i}} \sum_{m_1, m_2, \dots, m_l}\frac{t^n}{n!}(\mathcal L_l)^{m_l}\mathcal L_{\Lambda_l}(\mathcal L_{l-1})^{m_{l-1}} \cdots \mathcal L_{\Lambda_1}(\mathcal L_0)^{m_0}|A_R)
\label{eq:avoidantterms}
\end{align}
where $l \equiv |\Lambda|$ and $n \equiv l + \sum_{i}m_i$. We have defined 
\begin{align}
\mathcal L_i \equiv \sum_{X \cap V_i = \emptyset}\mathcal L_X
\end{align}
such that $V_j$ is the set of ``disallowed" vertices: any factor that intersects $V_j$ would necessarily change the equivalence class of $T(M)$ if it was to appear in $M$ in between $\Lambda_j$ and $\Lambda_{j+1}$. We can break $V_j$ up into the terms disallowed by each path, i.e.
\begin{equation}
V_j = \bigcup_{i}U_{i, \beta_{i}(j)},
\end{equation}
where $\beta_i(j) = \# \{k \leq j : i_k = i\}$ counts the number of terms in sequence $\Lambda$ at or before position $j$ that belong to $\Gamma_i$. Let $l_i = |\Gamma_i|$. Then $U_{i,j}$ is explicitly defined as 
\begin{equation}
U_{i, j} = 
\begin{cases}
\bigcup_{m = \beta_i(j+2)}^{l_i}(\Gamma_i)_{m} & j < l_i - 1, \\
S_i & j = l_i-1>
\end{cases}
\end{equation}
\end{lem}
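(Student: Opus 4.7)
The plan is to prove the identity by establishing a one-to-one correspondence between the sequences $M$ with $T(M) \in [T]$ and the data (gap sizes $m_0, \ldots, m_l$, choice of reducible factor at each gap position), and then re-summing gap by gap. Since $[T]$ is specified by its ordered irreducible skeleton $\Lambda = (\Lambda_1, \ldots, \Lambda_l)$, every $M$ contributing to the left-hand side has the skeleton appearing as an ordered subsequence at some positions $p_1 < p_2 < \cdots < p_l$; the remaining $n-l$ entries of $M$ are ``reducible'' factors. Writing $m_j$ for the number of reducible insertions between $\Lambda_j$ and $\Lambda_{j+1}$ (with $m_0$ before $\Lambda_1$ and $m_l$ after $\Lambda_l$), we have $n = l + \sum_j m_j$, and the decomposition of $M$ into skeleton and reducible parts is unique.

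The central technical step is to characterize which factors $Y \in F$ are allowed as a reducible insertion in gap $j$, meaning that the resulting sequence still lies in $[T]$. Unpacking the algorithm, $[T]$ is preserved precisely when inserting $Y$ does not intercept the attachment of a future skeleton element or endpoint. Because the algorithm attaches each new factor to its earliest overlapping predecessor, a $Y$ satisfying $Y \cap (\Gamma_i)_m \neq \emptyset$ with $m \geq \beta_i(j) + 2$ would make $(\Gamma_i)_m$ attach to $Y$ rather than $(\Gamma_i)_{m-1}$, shortening $\Gamma_i$; similarly, $Y \cap S_i \neq \emptyset$ before $(\Gamma_i)_{l_i}$ is placed would re-root $S_i$. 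These forbidden vertices are collected in $V_j = \bigcup_i U_{i,\beta_i(j)}$, so the allowed insertions are exactly those $Y$ with $Y \cap V_j = \emptyset$, whose sum generates the operator $\mathcal L_j$.

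With this characterization, the sum over $M \in [T]$ at fixed gap profile $(m_0, \ldots, m_l)$ factors across gaps. Since the insertions at gap $j$ are independent of those at other gaps once the skeleton positions are fixed, summing over all allowed choices produces
\begin{align}
\sum_{\substack{Y_1, \ldots, Y_{m_j} \in F \\ Y_k \cap V_j = \emptyset}} \mathcal L_{Y_{m_j}} \cdots \mathcal L_{Y_1} = (\mathcal L_j)^{m_j},
\end{align}
and interleaving with the fixed skeleton operators $\mathcal L_{\Lambda_1}, \ldots, \mathcal L_{\Lambda_l}$ in the prescribed order yields the integrand on the right-hand side. The coefficient $t^n/n!$ is shared by both sides because $n = l + \sum_j m_j$ is determined by the gap sizes, and summing over all tuples $(m_0, \ldots, m_l)$ completes the identity.

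The main obstacle is verifying the ``allowed insertion'' characterization in the second step. One must check not only that a forbidden $Y$ destroys the equivalence class, but also that an allowed $Y$ remains harmless in the presence of arbitrary further insertions at later gaps (each obeying its own $V$-constraint). This calls for an induction on the position in $M$, maintaining the invariant that the partial tree built by the algorithm agrees with the restriction of $T$ to the factors already processed. The delicate sub-cases are insertions that overlap several irreducible paths simultaneously, forcing the disallowed set to be a nontrivial union over $i$, and insertions occurring after some $\Gamma_i$ has been completed while others are still in progress, where the ``$S_i \notin T_{n-1}$'' clause of the algorithm must be tracked carefully.
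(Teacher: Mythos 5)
Your proposal follows essentially the same route as the paper: both decompose each $M$ with $T(M)\in[T]$ into the fixed skeleton $\Lambda$ interleaved with reducible insertions avoiding $V_j$ at each gap, establish a bijection between contributing sequences, and re-exponentiate gap by gap. The obstacle you flag — rigorously checking the allowed-insertion characterization by tracking the partial tree — is exactly where the paper's own proof is also brief (it asserts the claim ``by construction'' and notes that vanishing non-causal sequences in the expansion of $(\mathcal L_j)^{m_j}$ are harmless), so your identification of that step as the crux, together with the earliest-overlapping-predecessor reasoning you give, is on target.
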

\begin{proof}
We follow \cite{chen2021operator}.  First, we show that every term on the right-hand side also appears in the summand on the left-hand side of Eq.~\eqref{eq:avoidantterms}. Fix an equivalence class $[T]$ with an irreducible skeleton $\Lambda = (\Lambda_1, \dots, \Lambda_l)$. Let $U_i \equiv \{\mathcal L_X : X \in F, \ X \cap V_i = \emptyset\}$, the set of allowed terms that could appear between $\Lambda_i$ and $\Lambda_{i+1}$, forming another factor sequence $M$ such that $[T(\Lambda)] = [T(M)]$. Then we can write
\begin{align}
&(\mathcal L_l)^{m_l}\mathcal L_{\Lambda_l}(\mathcal L_{l-1})^{m_{l-1}} \cdots \mathcal L_{\Lambda_{1}}(\mathcal L_0)^{m_0}
\notag  \\
&= \qty(\sum_{X_1, \dots, X_{m_l} \in U_l}\mathcal L_{X_1}\cdots \mathcal L_{X_{m_l}}^l)\mathcal L_{\Lambda_l}\qty(\sum_{X_{1}, \dots, X_{m_{l-1}} \in U_{l-1}}\mathcal L_{X_1}\cdots \mathcal L_{X_{m_{l-1}}}) \cdots  \notag\\
&\hspace{.5cm}\times \mathcal L_{\Lambda_{1}} \qty(\sum_{X_1, \dots, X_{m_{0}} \in U_0}\mathcal L_{X_1}\cdots \mathcal L_{X_{m_0}}).
\end{align}
Each term in parentheses is a sum over all possible sequences of allowed terms of length $m_k$, with repeated entries allowed. It is permissible for sequences which do not contribute to a causal tree to appear in the sum because the contribution of these terms must ultimately vanish. By construction, none of the terms $\mathcal L_{X_k}$ intersect any of the irreducible paths $\Gamma_i$ past $\beta_i(k)$, and so they may not create a path connecting $R$ to $S_i$. Therefore, any sequence appearing on the right-hand side which has a non-vanishing contribution still corresponds to a tree $T(M)$ with the irreducible skeleton $\Lambda$. This shows that every non-vanishing term in the summand on the right-hand side corresponds to a causal forest in the equivalence class $[T(\Lambda)]$, and so it appears on the left-hand side. 
\par For the other direction, every $T(M) \in [T(\Lambda)]$ may be constructed as $M_l\Lambda_{l} M_{l-1}\cdots \Lambda_{1}M_0$, where $M_l, \dots, M_0$ are arbitrary sequences of terms that form part of the causal forest, but do not create additional paths between $R$ and $\{S_1, \dots, S_j\}$; otherwise, they would change the equivalence class of $T(M)$. As explained above, these are exactly the sequences that appear in the sum on the right-hand side. Furthermore, each term on the left and each term in the summand on the right correspond to a unique $M$. This establishes a bijection between the non-vanishing terms in the two sums, proving the result.
\end{proof}
\begin{defn}[$n$-simplex]
We define the canonical $n-$simplex with side length $t$, denoted by $\Delta^n(t)$, as an $n-$dimensional shape bounded by $\{x_1 < x_2 < \dots < x_n < t\}$, which generalizes a triangle in two dimensions and a tetrahedron in three dimensions to $n$ dimensions. The volume of $\Delta^n(t)$ is given by
\begin{equation}
\Vol{\Delta^n(t)} = \int\limits_{0}^t \mathrm{d}t_1\int\limits_0^{t_1} \mathrm{d}t_2 \cdots \int\limits_0^{t_{n-1}}\mathrm{d}t_n = \frac{t^n}{n!}
\end{equation}
\end{defn}
In the next step, we can eliminate the contribution of extraneous ``reducible" terms and include in the bound only contributions from the irreducible paths. The engine behind this re-exponentiation is the generalized Schwinger-Karplus identity.
\begin{lem}[Lemma 5 of Ref.~\cite{chen2021operator}]The following identity holds:
\begin{equation}
\sum_{\vec m = \vec 0}^{\infty}\frac{t^n}{n!}\mathcal F_{l}^{m_l}A_l \dots \mathcal F_1^{m_1}A_1 \mathcal F_0^{m_0} = \int\limits_{\Delta^l(t)}\mathrm{d}^l\vec t \mathrm{e}^{\mathcal F_l \Delta t_{l}}A_l \cdots \mathrm{e}^{\mathcal F_{1}\Delta t_1}A_1 \mathrm{e}^{\mathcal F_0 \Delta t_0}
\end{equation}
On the left-hand side we defined the term order $n := l + \sum_{k=1}^{l}m_k$ and on the right-hand side, $\Delta^l(t)$ is the canonical $l$-simplex, and $\Delta t_i = t_{i+1}-t_i$, with $\Delta t_{l} := t- t_l$ and $\Delta t_0 := t_1$.
\label{lem:SchwingerKarplus}
\end{lem}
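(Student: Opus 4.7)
The plan is to prove the identity by Taylor-expanding each exponential on the right-hand side and recognizing the resulting scalar time integrals as standard Dirichlet integrals. The essential point is that each $\Delta t_i$ is a scalar, so all of the Taylor expansions commute with the operator ordering; the non-commuting structure $\mathcal F_l^{m_l} A_l \cdots A_1 \mathcal F_0^{m_0}$ is preserved slot by slot.

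First I would substitute $\mathrm{e}^{\mathcal F_i \Delta t_i} = \sum_{m_i \geq 0} (\Delta t_i)^{m_i}\mathcal F_i^{m_i}/m_i!$ into the right-hand side and pull the scalar monomials $(\Delta t_i)^{m_i}/m_i!$ out of the ordered operator product. Assuming the operators are bounded so that the series converge absolutely and dominated convergence allows us to exchange sum and integral, the right-hand side becomes
\begin{equation}
\sum_{\vec m \geq \vec 0} \mathcal F_l^{m_l} A_l \cdots A_1 \mathcal F_0^{m_0}\cdot\frac{1}{\prod_{i=0}^{l} m_i!}\int_{\Delta^l(t)} \prod_{i=0}^{l}(\Delta t_i)^{m_i}\, \mathrm{d}^l\vec t.
\end{equation}
The next step is to evaluate the purely scalar simplex integral. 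After the affine change of variables $(t_1,\dots,t_l)\mapsto(\Delta t_0,\dots,\Delta t_{l-1})$ (with $\Delta t_l$ pinned by the constraint $\sum_{i=0}^{l} \Delta t_i = t$), the Jacobian is unity and the integration region becomes the standard simplex on $l+1$ nonnegative coordinates summing to $t$. The classical Dirichlet integral formula, provable by induction on $l$ via the Beta function or by direct iterated integration, gives
\begin{equation}
\int_{\Delta^l(t)}\prod_{i=0}^l(\Delta t_i)^{m_i}\,\mathrm{d}^l\vec t \;=\; \frac{\prod_{i=0}^{l} m_i!}{n!}\,t^n, \qquad n := l+\sum_{i=0}^{l} m_i.
\end{equation}

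Substituting this back, the factors of $\prod_i m_i!$ cancel exactly against the denominators introduced by the exponential expansion, leaving
\begin{equation}
\sum_{\vec m \geq \vec 0}\frac{t^n}{n!}\,\mathcal F_l^{m_l}A_l\cdots A_1\mathcal F_0^{m_0},
\end{equation}
which is precisely the left-hand side. The main obstacle is really only bookkeeping: matching the $l+1$ increments $\Delta t_0,\dots,\Delta t_l$ against the $l$ independent integration variables, establishing the Dirichlet identity cleanly, and verifying the interchange of sum and integral. None of these steps is delicate once one observes that every expansion happens at the level of scalars and therefore respects the operator ordering. This is exactly the mechanism that allows the lemma to re-exponentiate the skeleton-indexed sums of Lemma \ref{lem:causalidentity} into products of time-ordered propagators.
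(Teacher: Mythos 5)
Your proof is correct, and it is the standard argument: Taylor-expand each exponential, pull the scalar monomials $(\Delta t_i)^{m_i}/m_i!$ through the fixed operator product, evaluate the resulting scalar integral by the Dirichlet formula after the Jacobian-one change of variables $(t_1,\dots,t_l)\mapsto(\Delta t_0,\dots,\Delta t_{l-1})$, and observe that the factorials cancel to leave $t^n/n!$. The paper itself does not reproduce a proof of this lemma (it cites it as Lemma~5 of \cite{chen2021operator}), but the cited proof proceeds along essentially these lines. One small remark: the term order should read $n = l + \sum_{k=0}^{l} m_k$ (including $m_0$), which you correctly use; the lower limit $k=1$ in the paper's statement appears to be a typo.
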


\begin{proof}[Proof of Theorem \ref{thm:bound}]
By applying Lemmas~\ref{lem:causalidentity} and ~\ref{lem:SchwingerKarplus} to Eq.~\eqref{eq:avoidantterms}, we obtain
\begin{align}
\left\Vert\frac{1}{2^m}\prod_i\ad_{\mathcal O_{i}} \mathrm{e}^{\mathcal L t}|A_R)\right\Vert
&\leq  \frac{1}{2^m}\sum_{[T] \in \mathcal S} \left\Vert\prod_i\ad_{\mathcal O_{i}} \sum_{T(M) \in [T]}\frac{t^{|M|}}{|M|!} \mathcal L_{X_{|M|}}\dots \mathcal L_{X_1}|A_R)\right\Vert \notag \\
& \leq \sum_{\Lambda \in \mathcal S}\left\Vert \sum_{m_1, m_2, \dots, m_q}\frac{t^n}{n!}(\mathcal L_l)^{m_l}\mathcal L_{\Lambda_{l}}(\mathcal L_{l-1})^{m_{l-1}} \dots \mathcal L_{\Lambda_{1}}(\mathcal L_0)^{m_0}|A_R)\right\Vert \notag \\
&\leq  \sum_{\Lambda \in \mathcal S}
\int\limits_{\Delta^{l}(t)}\mathrm{d}^{l} \vec t\left\Vert \exp(\mathcal L_l\Delta t_{l})\mathcal L_{\Lambda_{l}}\exp(\mathcal L_{l-1}\Delta t_{l-1}) \dots \mathcal L_{\Lambda_{1}}\exp(\mathcal L_0\Delta t_0)|A_R)
\right\Vert \notag \\
&\leq \sum_{\Lambda \in \mathcal S}\Vol(\Delta^{l}(t))2^{l}\prod_{i = 1}^{l}\Vert H_{\Lambda_{i}}\Vert
\label{eq:paths}
\end{align}
In the second line we use the fact that $\Vert \ad_{\mathcal O_{i}}\Vert_{op} \leq 2$ for $\Vert \mathcal O_{i} \Vert = 1$, and in the fourth line we use the unitary invariance and submultiplicativity of the operator norm. Since $l = \sum_{j}|\Gamma_j|$ by definition, the canonical $n-$simplices have the property that
\begin{equation}
\Vol(\Delta^{l}(t)) = \frac{1}{l!}\prod_{j}|\Gamma_j|!\Vol(\Delta^{|\Gamma_j|}(t)) \ ,
\end{equation}
so we can rewrite the right-hand side of \eqref{eq:paths} as 
\begin{align}
 \sum_{\Lambda \in \mathcal S}\Vol(\Delta^{l}(t))2^{l}\prod_{i = 1}^{l}\Vert H_{\Lambda_i}\Vert 
 &= \sum_{\{\Gamma_j\}: (\Gamma_1, \dots, \Gamma_m) \in \mathcal S}\frac{1}{l!}\prod_{j = 1}^m|\Gamma_j|!\Vol(\Delta^{|\Gamma_j|}(t))2^{|\Gamma_j|} w(\Gamma_j)
\end{align}
where the sum on the right-hand side is taken over all possible sets of irreducible paths $\{\Gamma_j\}$ such that the irreducible skeleton $\Lambda = (\Gamma_1, \dots, \Gamma_m)$ obtained from their concatenation belongs to $\mathcal S$. The summand does not depend on the ordering of the individual paths, only on the set of irreducible paths $\{\Gamma_j\}$ so we can reorganize the sum over sets of paths $\{\Gamma_j\}$:
\begin{align}
\label{eq:productbound}
\sum_{\Lambda \in \mathcal S}\Vol(\Delta^{l}(t))2^{l}\prod_{i = 1}^{l}\Vert H_{\Lambda_i}\Vert  &= \sum_{\{\Gamma_i\}}\frac{l!}{|\Gamma_1|!\cdots |\Gamma_m|!}\frac{1}{l!}\prod_{i = 1}^m|\Gamma_i|!\Vol(\Delta^{|\Gamma_i|}(t))2^{|\Gamma_i|} w(\Gamma_i) \notag \\
&=\sum_{\{\Gamma_i\}}\prod_{i = 1}^m\Vol(\Delta^{|\Gamma_i|}(t))2^{|\Gamma_i|} w(\Gamma_i) = \prod_{i}\sum_{\Gamma \in \Gamma_i(R \to S_i)}\frac{(2t)^{|\Gamma|}}{|\Gamma|!} w(\Gamma).
\end{align}
Hence we obtain Eq.~\eqref{eq:bound}. \end{proof}

\subsection{Combinatorial bounds}
Although Eq.~\eqref{eq:bound} is formally quite elegant, it is not necessarily always convenient to work with.  In this subsection, we obtain more physically transparent nested commutator bounds by further simplifying Eq.~\eqref{eq:bound}. First, the nested commutator can be bounded using the equivalence-class bounds on single commutators from Theorem~\ref{thm:eqconstruction}.
\begin{cor}[Decoupling corollary]
\label{cor:decoupling corollary}
Let $C_{ij}(t)$ be the bound on the single commutator in Theorem~\ref{thm:eqconstruction}. Then 
\begin{equation}
\frac{1}{2^m}\Vert [\mathcal O_m,[\dots,[\mathcal O_1, A(t)]\dots] ]\Vert
\leq  
\prod_{i}\sum_{\substack{u \in \partial B_{i} \\ v \in \partial S_i}}C_{uv}(t)
\end{equation}
\end{cor}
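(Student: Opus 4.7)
The plan is to derive the corollary directly from Theorem~\ref{thm:bound}, whose right-hand side already factorizes as a product over $i = 1, \ldots, m$ of per-ball sums. It therefore suffices, for each individual $i$, to establish
\[
\sum_{\Gamma \in \Gamma_i(R \to S_i)} \frac{(2t)^{|\Gamma|}}{|\Gamma|!}\, w(\Gamma) \ \le\ \sum_{u \in \partial B_i,\, v \in \partial S_i} C_{uv}(t),
\]
after which the corollary follows by taking the product over $i$ and substituting into Theorem~\ref{thm:bound}.

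The key idea is to reinterpret each path in $\Gamma_i(R \to S_i)$ as a path between two individual vertices lying on the inner boundaries $\partial B_i$ and $\partial S_i$. Given any non-self-crossing $\Gamma = (X_1, \ldots, X_{|\Gamma|}) \in \Gamma_i(R \to S_i)$, its first factor $X_1$ must straddle $\partial B_i$ since $R$ is disjoint from $B_i$, so $X_1$ contains at least one vertex $u = u(\Gamma) \in \partial B_i$. Similarly, the last factor $X_{|\Gamma|}$ intersects $S_i$, and either shares a vertex with a preceding factor lying outside $S_i$ or (in the degenerate case $|\Gamma| = 1$) has a vertex outside $S_i$; either way, $X_{|\Gamma|}$ contains a vertex $v = v(\Gamma) \in \partial S_i$. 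Fixing any deterministic tie-breaking rule to assign $(u(\Gamma), v(\Gamma))$, the same sequence of factors realizes $\Gamma$ as a non-self-crossing $u \to v$ path in the full factor graph, i.e., exactly the kind of path counted by $C_{u(\Gamma), v(\Gamma)}(t)$ in Theorem~\ref{thm:eqconstruction}.

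Grouping the terms in $\Gamma_i(R \to S_i)$ by their assigned endpoint pair and using non-negativity of every $w(\Gamma)$ and $(2t)^{|\Gamma|}/|\Gamma|!$ then yields
\[
\sum_{\Gamma \in \Gamma_i(R \to S_i)} \frac{(2t)^{|\Gamma|}}{|\Gamma|!}\, w(\Gamma) \ \le\ \sum_{u \in \partial B_i,\, v \in \partial S_i}\ \sum_{\Gamma':\, u \to v} \frac{(2t)^{|\Gamma'|}}{|\Gamma'|!}\, w(\Gamma') \ =\ \sum_{u \in \partial B_i,\, v \in \partial S_i} C_{uv}(t),
\]
where the inner sum ranges over all non-self-crossing paths from $u$ to $v$ in the full factor graph, not just those confined to $B_i$; this embedding only enlarges the right-hand side and hence preserves the bound. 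Taking the product over $i$ completes the argument.

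The main obstacle I anticipate is purely bookkeeping: one must confirm that valid endpoints $u \in \partial B_i$ and $v \in \partial S_i$ always exist, which follows from the geometric separation of $R$ from $B_i$ and of $S_i$ from the rest of $B_i$, and that the relabeled $u \to v$ path remains non-self-crossing, which is immediate because the underlying factor sequence is unaltered. Any overcounting on the right-hand side --- for instance, a single $\Gamma$ being a valid $u \to v$ path for multiple endpoint pairs --- is harmless because the inequality is one-sided and every summand is non-negative.
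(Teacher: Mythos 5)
Your proof is correct and is essentially the argument the paper intends. The paper provides no explicit proof of this corollary, but the surrounding discussion (``This bound is not quite as tight as Theorem~\ref{thm:bound} because we now include paths that enter the region $R$'') confirms that the intended derivation is exactly yours: start from the factorized bound of Theorem~\ref{thm:bound}, assign to each non-self-crossing path $\Gamma \in \Gamma_i(R\to S_i)$ a pair $(u,v)$ with $u\in X_1\cap\partial B_i$ (which exists because $X_1$ intersects both $R$ and $B_i$) and $v\in X_{|\Gamma|}\cap\partial S_i$ (which exists because $X_{|\Gamma|}$ is the first factor on the path to meet $S_i$, so it also meets the $S_i$-avoiding predecessor), and then relax the sum to all non-self-crossing $u\to v$ paths in the full factor graph. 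All steps --- the deterministic tie-breaking, the one-sided nature of the inequality permitting overcounting, and the non-negativity of the summands --- are correctly handled.
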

This bound is not quite as tight as Theorem~\ref{thm:bound} because we now include paths that enter the region $R$. 
At short times, we find that the following result holds.
\begin{cor}
Suppose that $\Vert H_X \Vert \le h$ for all $X$ that intersect with $B_1, \dots, B_m$. Let $r_i := \mathsf{d}(R, S_i)$, where distance is defined via the factor graph as in Def.~\ref{def:distance}, and $\Delta$ be the maximum degree of the factor graph. If 
\begin{equation}
    |t| < \frac{\min_i(r_i)}{2h\Delta}
\label{eq:cor48t}
\end{equation} 
then 
\begin{equation}
C_{\vec S}^{R}(t) \leq \prod_i |\partial B_i||\partial S_i|\qty(\frac{2\mathrm{e}h|t|\Delta}{r_i})^{r_i},
\end{equation}
where $C_{\vec S}^{R}(t)$ is defined as in Eq.~\eqref{eq:nest_comm_def}.
\label{cor:combinatorialbound}
\end{cor}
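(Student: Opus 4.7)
The plan is to start from Theorem~\ref{thm:bound}, which already factorizes the nested commutator into a product over $i$ of path-sums
\begin{equation}
\sum_{\Gamma \in \Gamma_i(R \to S_i)} \frac{(2|t|)^{|\Gamma|}}{|\Gamma|!} w(\Gamma),\notag
\end{equation}
and then to combinatorially bound each factor separately. Two ingredients go into this bound: a uniform weight estimate, and a counting estimate for non-self-crossing paths.

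For the weights, since every factor $X$ appearing in a path $\Gamma \subset B_i$ intersects $B_i$, the hypothesis $\|H_X\| \le h$ gives $w(\Gamma) \le h^{|\Gamma|}$. For the counting, I would argue that any $\Gamma \in \Gamma_i(R \to S_i)$ has length $k := |\Gamma| \ge r_i = \mathsf{d}(R,S_i)$, that the first factor must intersect both $R$ and $B_i$ (so it is chosen from at most $|\partial B_i|$ factors along the $R$-side boundary of $B_i$), and that each subsequent factor shares a vertex with the previous one (at most $\Delta$ choices per step, by the degree bound). Finally, the last factor must intersect $S_i$, which restricts the terminal vertex to $\partial S_i$, contributing the $|\partial S_i|$ factor. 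Altogether one obtains
\begin{equation}
\sum_{\Gamma \in \Gamma_i(R \to S_i)} \frac{(2|t|)^{|\Gamma|}}{|\Gamma|!} w(\Gamma) \;\le\; |\partial B_i|\,|\partial S_i| \sum_{k \ge r_i} \frac{(2h|t|\Delta)^k}{k!}. \notag
\end{equation}

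To finish, set $\alpha := 2h|t|\Delta$. The hypothesis~\eqref{eq:cor48t} states precisely $\alpha < r_i$, so for $k \ge r_i$ the consecutive-term ratio $\alpha/(k+1) < 1$, and the tail series is dominated by its first term up to an $\mathrm O(1)$ geometric factor. Applying Stirling in the form $r_i! \ge (r_i/\mathrm{e})^{r_i}$ gives
\begin{equation}
\sum_{k \ge r_i} \frac{\alpha^k}{k!} \;\lesssim\; \frac{\alpha^{r_i}}{r_i!} \;\le\; \left(\frac{\mathrm{e}\alpha}{r_i}\right)^{r_i} = \left(\frac{2\mathrm{e}h|t|\Delta}{r_i}\right)^{r_i}. \notag
\end{equation}
Taking the product over $i$ yields the desired bound.

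The main obstacle is the combinatorial step: naively counting walks of length $k$ emanating from $|\partial B_i|$ starting sites gives $|\partial B_i|\Delta^k$, and one must justify that either the non-self-crossing constraint or the restriction that the walk terminate at $\partial S_i$ recovers the stated $|\partial S_i|$ prefactor (rather than giving a full extra $\Delta$ factor that would shift the estimate). A clean way to handle this is to invoke Corollary~\ref{cor:decoupling corollary} to first reduce to a sum of single-commutator bounds between pairs $(u,v) \in \partial B_i \times \partial S_i$, then apply the single-commutator equivalence-class bound (Theorem~\ref{thm:eqconstruction}) to each pair using the $\Delta^{k-1}$ count of non-self-crossing paths of length $k$ from a fixed $u$ to a fixed $v$. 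Once the counting is in hand, the rest is the short Stirling/geometric-tail estimate above.
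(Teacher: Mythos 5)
Your framework matches the paper's proof: start from Theorem~\ref{thm:bound}, bound weights by $w(\Gamma)\le h^{|\Gamma|}$, bound the number of length-$l$ paths by $\Delta^{l}$ (with the $|\partial B_i||\partial S_i|$ prefactors accounting for start and end points), and reduce to a tail of the exponential series. That part is fine, modulo the same informality the paper itself has about the prefactors.

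The genuine gap is in the tail-sum estimate. You claim $\sum_{k\ge r_i}\frac{\alpha^k}{k!}$ is ``dominated by its first term up to an $\mathrm{O}(1)$ geometric factor,'' but the hypothesis~\eqref{eq:cor48t} only gives $\alpha := 2h|t|\Delta < r_i$, which permits $\alpha$ to be arbitrarily close to $r_i$. In that regime the consecutive-term ratio at $k=r_i$ is $\alpha/(r_i+1)$, which can approach $1$, and the geometric factor
\begin{equation*}
\frac{1}{1-\alpha/(r_i+1)} \;=\; \frac{r_i+1}{\,r_i+1-\alpha\,}
\end{equation*}
can be of order $r_i$, not $\mathrm{O}(1)$. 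Combined with Stirling ($\frac{\alpha^{r_i}}{r_i!}\approx \frac{1}{\sqrt{2\pi r_i}}\left(\frac{\mathrm{e}\alpha}{r_i}\right)^{r_i}$), your chain of inequalities produces roughly $\sqrt{r_i/2\pi}\left(\frac{\mathrm{e}\alpha}{r_i}\right)^{r_i}$, which overshoots the stated $\left(\frac{\mathrm{e}\alpha}{r_i}\right)^{r_i}$ once $r_i \gtrsim 2\pi$. So as written the argument does not establish the corollary.

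The paper avoids this with a tilting/infimum trick that you should use instead: for any $\beta > 1$,
\begin{equation*}
\sum_{k\ge r_i}\frac{\alpha^k}{k!} \;\le\; \sum_{k\ge 0}\frac{\alpha^k}{k!}\,\beta^{\,k-r_i} \;=\; \beta^{-r_i}\mathrm{e}^{\alpha\beta},
\end{equation*}
valid because $\beta^{k-r_i}\ge 1$ for $k\ge r_i$ while the added $k<r_i$ terms are nonnegative. Optimizing and taking $\beta = r_i/\alpha$, which is $>1$ precisely under~\eqref{eq:cor48t}, yields $\left(\frac{\mathrm{e}\alpha}{r_i}\right)^{r_i}$ with no residual factor. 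That single step replaces your Stirling-plus-geometric estimate and closes the gap.
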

\begin{proof}
The number of distinct non-self-crossing paths of length $l$ from $u \in \partial B_i$ to $v_i$ must be smaller than $\Delta^l$. Given a ball $B_i$, the minimum length of path $\Gamma_i$ in a causal tree is $r_i \equiv d(R, v_i)$. Therefore we have
\begin{equation}
\sum_{\Gamma_i(R \to v_i)}\frac{(2t)^{|\Gamma|}}{|\Gamma|!} w(\Gamma) \leq  \sum_{l = r_i}^{\Vol(B_i)}\frac{(2th\Delta)^{l}}{l!}.
\end{equation}
For any $\alpha > 1$, $\alpha^{l-r_i} > 1$ whenever $l > r_i$.  We therefore find that 
\begin{equation}
\sum_{l = r_i}^{\Vol(B_i)}\frac{(2th\Delta)^{l}}{l!} \leq 
\inf_{\alpha > 1}\sum_{l = 0}^{\infty}\frac{(2th\Delta)^{l}}{l!}\alpha^{l-r_i} = 
\inf_{\alpha > 1}\alpha^{-r_i}e^{2th\alpha\Delta}.
\label{eq:cor48step}
\end{equation}
If Eq.~\eqref{eq:cor48t} holds, plugging in $\alpha = (2h\Delta t)^{-1}r_i > 1$ into Eq.~\eqref{eq:cor48step} gives
\begin{equation}
\inf_{\alpha > 1}\alpha^{-r_i}\mathrm{e}^{2th\alpha\Delta} \leq 
\qty(\frac{2\mathrm{e} h\Delta |t|}{r_i})^{r_i}.
\end{equation}
Inserting this bound into Eq.~\eqref{eq:bound}, we have
\begin{equation}
C_{\vec S}^{R}(t) \leq \prod_i |\partial B_i||\partial S_i|\qty(\frac{2\mathrm{e} h\Delta|t|}{r_i})^{r_i}
\end{equation}
as desired.
\end{proof}

We can also encode the combinatorial problem of summing weighted paths between two points on the factor graph into a matrix exponential, which can be computationally useful.
\begin{cor}
Define a real-symmetric matrix $h$ elementwise by
\begin{equation}
h_{ij} = 
\begin{cases}
\sum_{ X \in F: i,j \in X}\Vert H_X\Vert & i \neq j \\
0 & i = j
\end{cases}
\end{equation}
Then
\begin{equation}
C_{\vec S}^R(t) \leq \prod_i \sum_{\substack{u \in \partial B_i \\ v \in \partial S_i}}\exp(2h|t|)_{uv}
\end{equation}
\end{cor}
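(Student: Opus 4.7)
The plan is to combine the Decoupling Corollary with the single-commutator equivalence-class bound of Theorem \ref{thm:eqconstruction}, and then identify the resulting path sum as an entry of the matrix exponential $\exp(2h|t|)$.

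First I would invoke the Decoupling Corollary (Corollary \ref{cor:decoupling corollary}) to reduce the claim to the pointwise inequality $C_{uv}(t) \leq (\exp(2h|t|))_{uv}$ for each pair $(u,v)$ of boundary sites; this is enough because the right-hand side of the target bound already has the product-of-sums structure supplied by that corollary.

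Next I would use Theorem \ref{thm:eqconstruction} to expand $C_{uv}(t)$ as a weighted sum over non-self-crossing factor paths $\Gamma = (X_1, \dots, X_l)$ from $u$ to $v$, with weight $(2|t|)^l/l! \cdot \prod_k \Vert H_{X_k}\Vert$. The central combinatorial step is to dominate this factor-path sum by a vertex-walk sum on the underlying graph. For each such path, selecting an ``entry'' vertex $i_{k-1}$ and an ``exit'' vertex $i_k$ inside each factor $X_k$ produces a walk $u = i_0, i_1, \dots, i_l = v$. Summing the factor weight $\prod_k \Vert H_{X_k}\Vert$ over all factor sequences compatible with a fixed vertex walk yields exactly $\prod_k h_{i_{k-1} i_k}$ by the very definition of $h_{ij}$. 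After dropping the non-self-crossing restriction (which can only enlarge the sum) and summing over intermediate vertices, one obtains $(h^l)_{uv}$; multiplying by $(2|t|)^l/l!$ and summing over $l$ produces $(\exp(2h|t|))_{uv}$.

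I do not foresee a serious obstacle. The only issues are bookkeeping: checking that every non-self-crossing factor path admits a consistent choice of intermediate vertices (immediate, since consecutive factors share at least one site by the definition of the factor-graph distance) and that no factor weight is double-counted in the translation between factor-sequence sums and vertex-walk sums (immediate from the definition of $h_{ij}$ as the sum of $\Vert H_X\Vert$ over all factors $X$ containing both $i$ and $j$).
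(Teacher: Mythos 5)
Your proof is essentially the paper's: the paper's one-line proof cites Corollary~6 of~\cite{chen2021operator} for precisely the entry-wise bound $C_{uv}(t)\le\bigl(\exp(2h|t|)\bigr)_{uv}$ that you re-derive, and then combines it with the Decoupling Corollary exactly as you do, so there is no difference in strategy, only in whether the entry-wise estimate is cited or reproved.

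One spot in your justification is looser than it should be. Because $h$ has zero diagonal, a vertex walk contributes to $(h^{l})_{uv}$ only if $i_{k-1}\neq i_{k}$ at every step, so the map from a factor path $\Gamma=(X_1,\dots,X_l)$ to a vertex walk must produce \emph{distinct} entry/exit vertices inside each $X_k$; merely noting that ``consecutive factors share at least one site'' guarantees existence of some $i_k\in X_k\cap X_{k+1}$ but not that $i_{k-1}\neq i_k$ (these intersections could in principle be the same singleton). The property that rescues this is specific to causal-tree irreducible paths: because the algorithm attaches each new factor to its \emph{earliest} intersecting predecessor, non-adjacent factors along an irreducible path (with $X_0:=R$ and $X_{l+1}:=S_i$) are pairwise disjoint, hence $X_{k-1}\cap X_k$ and $X_k\cap X_{k+1}$ are disjoint sets and any choice of entry/exit vertices is automatically a walk of nonzero $h$-weight. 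With that observation, your injection into the (non-negatively weighted) walk sum is well defined and the overbound $\sum_{\Gamma}w(\Gamma)\le (h^l)_{uv}$ follows as you describe.
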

\begin{proof}
The result follows immediately by combining Cor.~\ref{cor:exp_path_counting} with Cor.~\ref{cor:decoupling corollary}.
\end{proof}

Perhaps most importantly, we would like to have a bound that makes the \emph{volume-law decay} of large nested commutators extremely transparent.  This is accomplished with the following result.
\begin{cor}
Let $B_R$ denote a metric ball of radius $R$ with respect to the geodesic distance on the factor graph (as defined in Def.~\ref{def:distance}). There exist constants $\gamma, c_{\rm{LR}} > 0$ such that we can choose $v_1, \dots, v_m \in B_R$ (where $m$ depends on $R$) for which
\begin{align}
C_{v_1, \dots, v_m}^{B_R^c}(t)  \leq c_{\rm{LR}}\exp(-\gamma\frac{(R-\LRvel t)^d}{(\LRvel t)^{d-1}})
\end{align}
for any $\LRvel t > 1$.
\label{cor:volumescaling}
\end{cor}
\begin{proof}
We choose $S_1, \dots, S_m$ to be vertices $v_1, \dots, v_m$ and $B_1, \dots, B_m$ to be metric balls $B_i = B_\xi(v_i)$ of radius $\xi$ centered on these vertices. We will optimize $\xi$ to obtain (\ref{cor:volumescaling}). First suppose that $d > 1$. For a lattice in $d$ dimensions, there exists a constant $C_1$ such that $|\partial B_\xi| \leq (C_1\xi)^{d-1}$. 
Then directly applying Cor.~\ref{cor:combinatorialbound}, we have
\begin{align}
C_{\vec v}^R(t) \leq \qty[(C_1\xi)^{d-1}\qty(\frac{\LRvel t}{\xi})^{\xi}]^m.
\end{align}
Let $\xi = \alpha v_{\text{LR}}t$ for some $\alpha > 1$. The requirement that ensures that $\alpha > 1$ ensures that we can apply Cor.~\ref{cor:combinatorialbound}. There exists a constant $C_2$ and, for any $\alpha$, a constant $K$, such that for all $\LRvel t > 1$ and all $R  > K\LRvel t$, we can find $m$ distinct points $v_1, \ldots v_m$ in $B_R$ separated pairwise by more than $2\xi$, where $m > C_2\qty(\frac{R}{\xi})^d$.
 The bound above may then be written as
\begin{align}
C_{\vec v}^{B_R^c} &\leq 
\qty[\qty(C_1\alpha \LRvel t)^{(d-1)\alpha^{-d} (\LRvel t)^{-1}}\qty(\frac{1}{\alpha})^{\alpha^{1-d}}]^{C_2 R^d(\LRvel t)^{1-d}} \notag \\
&\leq \qty[\qty(\frac{\mathrm{e}^{C_1 (d-1)}}{\alpha})^{\alpha^{1-d}}]^{C_2 R^d(\LRvel t)^{1-d}}
\label{eq:alphabound}
\end{align}
Then for any $\alpha > \exp(C_1(d-1))$, we have $(\mathrm{e}^{C_1(d-1)}\alpha^{-1})^{\alpha^{d-1}} \equiv \mathrm{e}^{-\beta} < 1$. Plugging this in gives us
\begin{align}
C_{\vec v}^{D^c}(t) \leq \exp(-\beta C_2 R^d/(\LRvel t)^{d-1})
\end{align}
Additionally, $\beta$ achieves a maximum at $\alpha = \exp(C_1(d-1)+\frac{1}{d-1}) > 1$ of $\beta = \exp(-C_1(d-1)^2-1)/(d-1)$. 
This holds for $R > K\LRvel t$, but we would like a bound that holds for all $R > \LRvel t$. 
If $\frac{R}{\LRvel t} < K$ then we can apply the conventional Lieb-Robinson bound (Thm.~\ref{thm:standardLR}). Choosing $v_1$ at the center of $B_R$, we have
\begin{align}
C_{\vec v}^{B_R^c} \leq \frac{1}{2}\Vert [\mathcal O_{v_1}, A(t)]\Vert \leq c_{\text{LR}}\mathrm{e}^{\mu (\LRvel t - R)}
\end{align}
Let $C_2' \equiv \beta C_2$, and, for simplicity, assume that $c_{\text{LR}} \geq 1$. For any $\lambda \leq 1$, we clearly have
\begin{align}
\exp(-C_2' R^d/(\LRvel t)^{d-1}) \leq c_{\text{LR}}\exp(-C_2'\lambda(R-\LRvel t)^{d}/(\LRvel t)^{d-1})
\end{align}
Then we find
\begin{align}
e^{\mu (\LRvel t - R)} \leq \exp(-C_2'\lambda(R-\LRvel t)^{d}/(\LRvel t)^{d-1})
\end{align}
when $\lambda \leq \mu/[C_2'(\frac{R}{\LRvel t}-1)^{d-1}]$. If $1<R/\LRvel t < K$ then $\lambda = \min(\mu/[C_2'(K-1)^{d-1}],1) < \mu/[C_2'(\frac{R}{\LRvel t}-1)^{d-1}]$, so, for all $\frac{R}{\LRvel t} > 1$, we have
\begin{align}
C_{\vec v}^{B_R^c}(t) \leq \exp(-\lambda C_2' (R-\LRvel t)^d/(\LRvel t)^{d-1}).
\end{align}
This completes the proof.
\end{proof}

\subsection{Generalizing to time-dependent interactions}
\label{sec:time-dependent}
Lieb-Robinson bounds, at least those developed here, are ``worst-case-scenario" bounds. This is because the operator norm measures the \textit{most} that an operator can increase the norm of a vector, and then we repeatedly apply the triangle inequality. Thus, the argument that leads to the bounds is fundamentally combinatorial in nature, and it should be unsurprising that these arguments generalize to time-dependent systems. In this subsection, we explain why the argument in Sec.~\ref{sec:nestedcommutatorbounds} generalizes to time-dependent systems.

In the time-dependent case, we can instead expand the propagator as a Dyson series. Let $\mathcal L(t) \equiv -\ii \ad_{H(t)}$ be the system Liouvillian. Furthermore, we write $\mathcal L_{X}(t) \equiv - \ii\ad_{H_X(t)}$ for $X \subset V$ when the support of $H_X(t)$ is contained within $X$ at all times. The Dyson series expansion is
\begin{align}
\e^{\mathcal L t} = \sum_{X_1, \dots, X_n \subset V}\int_{0}^{t}\dd s_n \int_{0}^{s_n}\dd s_{n-1} \cdots \int_{0}^{s_2}\dd s_1\mathcal L_{X_n}(s_n) \cdots \mathcal L_{X_1}(s_1) \equiv \sum_{X_1, \dots, X_n \subset V}\mathcal I^{(n)}_t(\mathcal L_{X_1}, \dots, \mathcal L_{X_n})\ ,
\end{align}
where $n$ is summed over and we have defined $\mathcal I^{(n)}_t$ as the convolution operator above labeling sequences of operators appearing in the expansion. This is the appropriate generalization of the Taylor-series expansion in the time-independent case \eqref{eq:resum}. Despite the time-dependence, the terms in Hamiltonian $H_X(t)$ have supports which are bounded by $X$, so the algorithm for constructing the equivalence classes is unaffected. Therefore, we simply need to develop a suitable generalization of the Schwinger-Karplus identity (Lemma~\ref{lem:SchwingerKarplus}).

\begin{lem}
Given time-dependent operators $\mathcal F_l, \mathcal A_l, \dots, \mathcal F_1, \mathcal A_1, \mathcal F_0$, the following identity holds:
\begin{align}
&\sum_{m_0, \dots, m_l = 0}^{\infty}\mathcal I_t^{(l+\sum m_l)}(\mathcal F_l\stackrel{\times m_l}{\dots}, \mathcal A_l, \dots, \mathcal F_1\stackrel{\times m_1}{\dots},\mathcal A_1, \mathcal F_0\stackrel{\times m_0}{\dots}) \notag \\
&=\int_0^{t} \dd s_l \int_{0}^{s_l}\dd s_{l-1} \cdots \int_{0}^{s_2}\dd s_1  \mathcal U^{(l)}(t, s_l) \mathcal{A}_l(s_l) \mathcal U^{(l-1)}(s_l, s_{l-1})\mathcal{A}_{l-1}(s_{l-1}) \cdots \notag\\
&\hspace{2cm}\times \mathcal U^{(1)}(s_2, s_1)\mathcal A_1(s_1)\mathcal U^{(0)}(s_1, 0).
\end{align}
Here
\begin{equation}
\mathcal U^{(k)}(t, t_0) \equiv \mathcal T\exp(\int_{t_0}^t \dd s\mathcal F_k(s))
\end{equation}
with $\mathcal T$ standing for time-ordering.
\end{lem}
In the time-dependent form, the expansion is arguably more transparent than the time-independent form. For convenience, given any operator $\mathcal F$, define
\begin{align}
\mathcal F^{(n)}(t, t_0) \equiv \int_{t_0}^{t} \dd s_n \int_{t_0}^{s_n}\dd s_{n-1} \cdots \int_{t_0}^{s_2}\dd s_1 \mathcal F(s_n) \cdots \mathcal F(s_1),
\end{align}
where $t_0 = 0$ is implied if the second argument is omitted.
\begin{proof}
Consider the Dyson series expansion
\begin{equation}
\mathcal U^{(k)}(t,t_0) = \mathcal T \exp(\int_{t_0}^{t} \dd s \mathcal F_k (s)) = \sum_{n= 0}^{\infty} \mathcal F_k^{(n)}(t,t_0).
\end{equation}
Examining the left-hand side of the equation above, we can play with the time-ordering to put it in a form resembling this:
\begin{align}
&\sum_{m_0, \dots, m_l = 0}^{\infty}\mathcal I_t^{(l+\sum m_l)}(\mathcal F_l \stackrel{\times m_l}{\dots},\mathcal A_l, \dots, \mathcal F_1 \stackrel{\times m_1}{\dots}, \mathcal A_1, \mathcal F_0\stackrel{\times m_0}{\dots}) \notag \\
&=\int_{0}^t \dd s_l  \cdots \int_{0}^{s_2}\mathrm{d}s_1\qty(\sum_{m_l=0}^\infty\mathcal F_l^{(m_l)}(t, s_l)) \mathcal A_l(s_l) \cdots \qty(\sum_{m_1 = 0}^\infty\mathcal F_{1}^{(m_1)}(s_2, s_1))\mathcal A_{1}(s_1)\qty(\sum_{m_0 = 0}^\infty\mathcal F_{0}^{(m_0)}(s_1, 0)) \notag \\
&=\int_{0}^t \dd s_l  \cdots \int_{0}^{s_2}\mathrm{d}s_1\mathcal U^{(l)}(t, s_l)\mathcal A_l(s_l) \cdots\mathcal U^{(1)}(s_2, s_1)\mathcal A_{1}(s_1)\mathcal U^{(0)}(s_1, 0).
\end{align}
The second line follows by repeatedly applying 
\begin{equation}
\int_{0}^{s_{k}}\dd s_k'\int_0^{s_k'} \dd s_{k-1} \int_{0}^{s_{k-1}}\dd s_{k-1}'[\dots] = \int_{0}^{s_k} \dd s_k' \int_{0}^{s_{k}'}\dd s_{k-1}' \int_{s_{k-1}'}^{s_k'} \dd s_k[\dots]
\end{equation}
to interchange the order of the integrals.
\end{proof}
From here, one recognizes that the manipulations proceed exactly as in the time-independent case. Fix an equivalence class of causal forests $[T]$ with an irreducible skeleton $\Lambda$. Then
\begin{align}
&\prod_i \ad_{\mathcal O_i} \sum_{M:T(M) \in [T]} \int_{0}^t \dd s_n \cdots \int_{0}^{s_2}\dd s_1 \mathcal L_{X_n}(s_l) \cdots \mathcal L_{X_1}(s_1)|A_R) \notag \\
&= \prod _i \ad_{\mathcal O_i}\sum_{\vec m} \int_{0}^t \dd s_l \cdots \int_{0}^{s_2}\dd s_1 \mathcal L_{l}^{(m_l)}(t,s_l)\mathcal L_{\Lambda_l}(s_l) \cdots \mathcal L_{1}^{(m_1)}(s_2, s_1)\mathcal L_{\Lambda_1}(s_1)\mathcal L_0^{(m_0)}(s_1)|A_R) \notag \\
&= \prod _i \ad_{\mathcal O_i}\int_{0}^t \dd s_l \cdots \int_{0}^{s_2}\dd s_1 \mathcal U^{(l)}(t,s_l)\mathcal L_{\Lambda_l}(s_l) \cdots \mathcal U^{(1)}(s_2, s_1)\mathcal L_{\Lambda_1}(s_1)\mathcal U^{(0)}(s_1)|A_R) 
\end{align}
Then taking norms, we have
\begin{align}
&\left\Vert\prod_{i=1}^m \ad_{\mathcal O_i} \sum_{M:T(M) \in [T]} \int_{0}^t \dd s_l \cdots \int_{0}^{s_2}\dd s_1 \mathcal L_{X_n}(s_l) \cdots \mathcal L_{X_1}(s_1)|A_R)\right \Vert  \notag \\
&\leq  2^m\int_{0}^t \dd s_l \cdots \int_{0}^{s_2}\dd s_1 \Vert \mathcal L_{\Lambda_l}(s_l)\Vert \cdots \Vert\mathcal L_{\Lambda_1}(s_1)\Vert \notag \\
&\leq 2^m\frac{(2t)^l}{l!}\prod_{X \in \Lambda}\sup_{s \leq t}\Vert H_{X}(s)\Vert
\end{align}
which appropriately generalizes Eq.~\eqref{eq:paths}.

\section{Hamiltonians with exponential tails}\label{sec:exponential}
For some of our applications, it is important to extend our bounds to Hamiltonians with \emph{exponentially decaying} interactions. The main issue is that we are no longer able to write each causal tree as a disjoint union of irreducible paths between region $R$ and each $S_i$. In this section, we describe how to overcome this difficulty to generalize many of the results of the previous section to interactions with exponential tails. These results will also apply to local Hamiltonians, although they are sometimes weaker than the strongest bounds in the case of local interactions, and the proof methods are significantly more nuanced.\footnote{This kind of result also occurs for the usual single commutator Lieb-Robinson bounds, where the upper bounds from quasi-local Hamiltonian evolution also applies to local Hamiltonian evolution, but techniques specialized to local Hamiltonians can give a bound with slightly faster decay outside of the light cone \cite{AnthonyChen:2023bbe}.}

\subsection{Review of Lieb-Robinson bounds for quasi-local Hamiltonians\label{sec:single_commutator_exp_bound}}

In the previous section regarding strictly local interactions, we defined locality via the factor graph. With the introduction of exponentially decaying interactions, this definition is no longer appropriate, because it is the strength of the interactions rather than their support which defines locality. We can still apply the irreducible path bounds from the previous section, but the degree of the factor graph is unbounded, and more care must be taken to correctly sum up the weights of all the potential paths. 

In this section, our setting will be qudits on the sites of a graph $G = (V, E)$ with finite degree $\Delta$. Unlike in the previous sections, we use the distance $\mathsf d(x,y)$ to refer to the geodesic distance provided by this underlying graph $G$. The volume of connected sets in this graph will then constrain the weights of edges on the factor graph.

\begin{defn}\label{defquasilocal}
We say that a Hamiltonian $H$ is quasi-local if 
\begin{equation}
H = \sum_{S\subseteq V}H_S  \text{ such that $S$ is connected and } \Vert H_S \Vert \leq h\mathrm{e}^{-\kappa |S|} \label{eq:HS_decomposition}
\end{equation}
for some constants $h$ and $\kappa$. A subset of the graph is connected if for each $u \in S$ there is some $v \in S$ such that $\mathsf d(u, v) = 1$. Here $H_S$ does not need to act non-trivially on every site in $S$, but decomposition \eqref{eq:HS_decomposition} may only exist if most terms in $H$ are associated with the smallest possible set $S$. In this paper, for convenience, we restrict to 
\begin{equation}
   \kappa > \log(\Delta)+1. \label{eq:kappalowerbound}
\end{equation}
\end{defn}

The aim of this section is to generalize the following conventional Lieb-Robinson bound in the same way we generalized bounds for local systems in the previous section.

\begin{prop}[Theorem 3.7 of Ref.~\cite{AnthonyChen:2023bbe}]
\label{prop:quasilocalbound}
Let $H$ be quasi-local. Suppose that $O$ is supported on a set $S \subset B$ and $A$ is supported on a set $R$ which is disjoint from but adjacent to $B$ as in Fig.~\ref{fig:setup}. Then there exist constants $c_{\rm LR}, \mu, \LRvel$ determined by $h,\kappa,\Delta$ such that 
\begin{align}
\frac{1}{2}\Vert [O, A(t)]\Vert \leq c_{\rm LR}|\partial B||\partial S|\mathrm{e}^{-\mu\mathsf d(R, S)}\left(\mathrm{e}^{\mu \LRvel t} - 1\right).
\end{align}
\end{prop}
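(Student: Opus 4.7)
My plan is to lift the equivalence-class construction of Section 2.3 from a factor graph with bounded-size factors to the setting of arbitrary connected subsets. Specifically, I would Taylor-expand the Heisenberg evolution
\begin{equation*}
\ad_O \mathrm{e}^{\mathcal L t}|A) = \sum_{n \geq 0} \frac{t^n}{n!}\sum_{X_1, \dots, X_n} \ad_O\, \mathcal L_{X_n}\cdots \mathcal L_{X_1}|A),
\end{equation*}
where now each $X_i$ is a connected subset of $V$ and $\mathcal L_{X_i} = \mathrm{i}\,\ad_{H_{X_i}}$. A nonvanishing sequence $(X_1,\dots,X_n)$ still defines a causal tree in the sense of Section 4.1, so I group sequences by the equivalence class of their irreducible path $\Gamma=(Y_1,\dots,Y_\ell)$ connecting $R$ to $S$: the $Y_j$ are connected subsets, consecutive $Y_j$ intersect, $Y_1\cap R\ne\emptyset$, and $Y_\ell\cap S\ne\emptyset$. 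Lemmas~\ref{lem:causalidentity} and \ref{lem:SchwingerKarplus} go through unchanged, because they only use the algebraic structure of the Dyson series. This yields
\begin{equation*}
\tfrac{1}{2}\|[O,A(t)]\|\;\le\;\sum_{\Gamma}\frac{(2t)^{|\Gamma|}}{|\Gamma|!}\prod_{j=1}^\ell \|H_{Y_j}\|.
\end{equation*}

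Next I would bound this sum combinatorially. Set $n_j=|Y_j|$. The standard Kesten-type estimate says that the number of connected subsets of size $k$ of a degree-$\Delta$ graph containing a fixed vertex is at most $(e\Delta)^k$. I build $\Gamma$ inductively: pick an anchor site near $R$ (this produces the $|\partial B|$ factor in the final bound, with $|\partial S|$ appearing by the dual argument that the path must terminate inside $S$), and for each successor $Y_{j+1}$ choose a vertex of $Y_j$ to attach it to (a factor of $n_j$). Inserting $\|H_{Y_j}\|\le h\,\mathrm{e}^{-\kappa n_j}$ then reduces the problem to sums of one-step weights $w(k)= k\,(e\Delta)^k h\,\mathrm{e}^{-\kappa k}$, which are summable precisely because of the hypothesis $\kappa>1+\log\Delta$ in \eqref{eq:kappalowerbound}. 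Any irreducible path from $R$ to $S$ must satisfy $\sum_j n_j \ge \mathsf{d}(R,S)$, so I split $\mathrm{e}^{-\kappa n_j}=\mathrm{e}^{-(\kappa-\mu)n_j}\mathrm{e}^{-\mu n_j}$ for a small $\mu>0$ and pull out a factor $\mathrm{e}^{-\mu\mathsf{d}(R,S)}$.

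The main obstacle is carrying out this exponent splitting cleanly: $\mu$ must be strictly positive while the remaining single-step sums, after convolution and exponentiation via the time integral over the simplex $\Delta^\ell(t)$, still produce the characteristic $\mathrm{e}^{\mu\LRvel t}-1$ with a finite Lieb--Robinson velocity. This is essentially the Hastings--Koma calculation, and the constants $c_{\rm LR}$, $\mu$, and $\LRvel$ are determined explicitly by $h$, $\kappa$, $\Delta$ after optimizing $\mu$. In practice I would appeal directly to Theorem~3.7 of \cite{AnthonyChen:2023bbe} to finalize the constants; the purpose of the equivalence-class sketch above is not to reprove the single-commutator bound but to verify that the machinery of Section~\ref{sec:nestedcommutatorbounds} carries over when factors are replaced by connected subsets, which is what Section~\ref{sec:exponential} will need for generalizing the nested commutator bound to quasilocal Hamiltonians.
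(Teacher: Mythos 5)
Your proposal follows the same high-level route as the paper—re-derive the quasilocal single-commutator bound via the equivalence-class expansion over irreducible paths of connected clusters, then bound the combinatorics using the cluster-counting estimate of Proposition~\ref{prop:clustercounting}—but the combinatorial bookkeeping after that point is genuinely different. The paper anchors each cluster $X_j$ at \emph{two} intersection vertices $v_j,v_{j+1}$, passes to the two-point function $\sum_{X\ni u,v}\Vert H_X\Vert \le h'\,\mathrm{e}^{-\mu\mathsf{d}(u,v)}/\mathsf{d}(u,v)^\alpha$ of Lemma~\ref{lem:factorsum}, and then iterates the reproducing-kernel inequality of Lemma~\ref{lem:sumoverpaths}, $\sum_k G(u,k)G(k,v)\le K\,G(u,v)$, $\ell-1$ times; summing over $\ell$ yields $v_{\rm LR}=2Kh'/\mu$. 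You instead anchor each cluster at a \emph{single} attachment vertex, split $\mathrm{e}^{-\kappa n_j}=\mathrm{e}^{-(\kappa-\mu)n_j}\mathrm{e}^{-\mu n_j}$, pull out $\mathrm{e}^{-\mu\mathsf{d}(R,S)}$ using $\sum_j n_j\ge\mathsf{d}(R,S)$, and sum the residual one-step weights. Both give the stated form, and your convergence condition ($\kappa-\mu>1+\log\Delta$) is exactly what \eqref{eq:kappalowerbound} is calibrated to supply. The reason the paper takes the reproducing-kernel route, rather than the more elementary exponent splitting, is that Lemmas~\ref{lem:factorsum} and \ref{lem:sumoverpaths} are re-used verbatim in the multi-commutator generalization (Proposition~\ref{prop:double_commutator_bound}, Lemma~\ref{lem:coupled_class_exp_bound}, Theorem~\ref{thm:exponential_H}), where irreducible paths branch at a single coupling factor $X$; the two-anchor structure factorizes cleanly across the branching vertex, whereas an exponent-split argument would require allocating the distance budget $\sum n_j\ge\mathsf{d}(R,S_i)$ separately to each branch.

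Two minor corrections. First, your ``$|\partial S|$ by dual argument'' is not how the prefactor arises: after relaxing the constraint that the last cluster meet $S$, your bound has a $|\partial B|$ prefactor and no $|\partial S|$ — this is \emph{stronger} than stated, so no harm, but the phrase as written is misleading. Second, you do not need to re-apply the simplex integral from Lemma~\ref{lem:SchwingerKarplus}; the $(2t)^{|\Gamma|}/|\Gamma|!$ in Theorem~\ref{thm:chen_LR_bound} already encodes it, and the remaining work is purely combinatorial.
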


In the case of strictly local Hamiltonians, a path needed a minimum length, measured on the factor graph, to couple two regions. Since we can no longer do this for quasi-local Hamiltonians, we have to use a different strategy.

Conventional LR bounds follow from a slightly less restrictive condition than our criterion for quasi-locality \cite{AnthonyChen:2023bbe}. A more standard definition of quasi-local in the literature is to require that there exist $h, \mu$ such that, for any two sites $u,v \in V$, we have $\sum_{X \ni u,v} \Vert H_X \Vert \leq h\e^{-\mu \mathsf d(u,v)}$. Intuitively, it requires that Hamiltonian terms decay exponentially in their \emph{radius} rather than their entire \emph{support/volume}. We can show that this weaker condition follows readily from our definition using the following proposition.

\begin{prop}[Adapted from Ref.~\cite{highT_Haah}]\label{prop:clustercounting}
    Given a vertex $u \in V$ on a graph of maximal degree $\Delta$, there are at most $(\Delta \mathrm{e})^m$ connected subsets $S$ with $u \in S$ and $|S| = m$. Furthermore, these subsets can be enumerated classically in runtime $m\Delta(\Delta \mathrm{e})^m$.
\end{prop}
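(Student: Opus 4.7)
The statement is a classical counting/enumeration fact about connected subgraphs in bounded-degree graphs. The plan is to inject each connected subset $S\ni u$ of size $m$ into a short canonical codeword, bound the number of codewords, and then invert the encoding to obtain the enumeration procedure.

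For the counting bound I would first fix, once and for all, a total ordering on the neighbors of each vertex of $V$ (possible because $\Delta<\infty$). Given any connected $S$ with $u\in S$ and $|S|=m$, I would run a canonical depth-first search from $u$ inside the induced subgraph $G[S]$, at each step descending to the smallest-indexed unvisited neighbor in $G[S]$ and backtracking when none remains. This produces a canonical spanning tree of $S$ and a canonical walk of length $2(m-1)$. I would encode the walk as a word in the alphabet $\{1,\dots,\Delta\}\cup\{\downarrow\}$: descent symbols are labelled by which of the $\le\Delta$ neighbors of the current vertex in the ambient graph $G$ is taken, while backtrack symbols are unlabelled. The map $S\mapsto \text{word}$ is injective, since the word determines the sequence of vertices visited by the DFS and hence $S$ itself. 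Bounding the number of such codewords then reduces to counting rooted plane trees on $m$ vertices with branching bounded by $\Delta$ at each node, whose number is known to be at most $(\Delta\mathrm{e})^m$.

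For the enumeration half I would simply invert this encoding: recursively grow $S$ starting from $\{u\}$, and at each stage either descend to the smallest-indexed unvisited neighbor of the currently active vertex or backtrack when no such neighbor remains. Uniqueness is enforced by refusing any descent that would violate the canonical rule, so that no subset is produced twice. Each recursive step costs $\mathrm O(\Delta)$ to scan neighbors against the current boundary, the recursion has depth $m$, and there are $(\Delta\mathrm{e})^m$ terminal outputs, yielding the advertised runtime $m\Delta(\Delta\mathrm{e})^m$.

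The main obstacle is the sharp constant $\mathrm{e}$ in the exponent (as opposed to $4$, which falls out of a naive Catalan-style count of Dyck paths with marked descents): obtaining $\mathrm{e}$ requires exploiting that the canonical DFS on $G[S]$ prunes most ostensible codewords, or invoking a sharper generating-function bound on $\Delta$-bounded plane trees of the Patel--Regts/Penrose type quoted in \cite{highT_Haah}. Once that combinatorial bound is in hand, the injectivity of the encoding and the runtime analysis of the recursive enumeration are essentially bookkeeping.
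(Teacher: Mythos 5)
The paper does not give its own proof of this proposition: it simply states ``This is obtained directly by Proposition 4.6 and Section 4.3 in \cite{highT_Haah}.'' So there is no in-paper argument for you to match; your job was effectively to reconstruct the cited result, which you do along the standard lines.

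Your DFS-codeword encoding and injectivity argument are correct, and the enumeration half (canonical recursion with pruning, $\mathrm{O}(\Delta)$ work per step, depth $m$, $(\Delta\mathrm{e})^m$ leaves) matches the style of Haah's Section~4.3 and yields the claimed runtime $m\Delta(\Delta\mathrm{e})^m$. The one place where the argument is soft is exactly the one you flag, but it is worth being precise about what is missing. As stated, counting ``rooted plane trees on $m$ vertices with branching bounded by $\Delta$'' is not quite the right quantity, and your codeword estimate gives the Catalan-type bound $C_{m-1}\,\Delta^{m-1}\le (4\Delta)^{m-1}$, not $(\mathrm{e}\Delta)^m$. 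The sharp constant $\mathrm{e}$ comes from observing that the descent labels at a node with $k$ tree-children form an (increasing) $k$-subset of that node's $\le\Delta$ ambient neighbors, so one is really counting \emph{labelled} subtrees of the $\Delta$-ary universal cover rooted at $u$; the number of such subtrees with $m$ nodes is the Fuss--Catalan number $\frac{1}{(\Delta-1)m+1}\binom{\Delta m}{m}$, and the Stirling estimate $\binom{\Delta m}{m}\le \bigl(\Delta^\Delta/(\Delta-1)^{\Delta-1}\bigr)^m \le (\mathrm{e}\Delta)^m$ gives the bound. This is precisely the ``Patel--Regts/Penrose-type'' input you gesture at; since the paper itself outsources the entire proposition, delegating this one step is defensible, but a complete proof would need the subset constraint on descent labels made explicit so that the tree count being invoked is the Fuss--Catalan one and not the unlabelled Catalan one.
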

This is obtained directly by Proposition 4.6 and Sec.~4.3 of Ref.~\cite{highT_Haah}. 

\begin{lem}
\label{lem:factorsum}
Let $u,v \in V$ be arbitrary. Then, for any $\alpha > 0$, there exist constants $\mu$, $h'$ such that the following holds:
\begin{align}
    \sum_{X \ni u,v}\Vert H_X \Vert \leq \frac{h'\mathrm{e}^{-\mu\mathsf d(u,v)}}{\mathsf d(u,v)^\alpha}.
\end{align}
\end{lem}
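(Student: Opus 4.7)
The plan is to combine three ingredients: the quasilocality condition $\Vert H_S\Vert \leq h\mathrm{e}^{-\kappa|S|}$, the cluster-counting bound from Proposition~\ref{prop:clustercounting} that the number of connected subsets of size $m$ containing a fixed vertex is at most $(\Delta \mathrm{e})^m$, and the assumed lower bound $\kappa > 1+\log\Delta$ which ensures that $\mu' \equiv \kappa - \log(\Delta\mathrm{e}) > 0$.

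First I would observe that any connected subset $X$ containing both $u$ and $v$ must have $|X| \geq \mathsf d(u,v)+1$, since $X$ must contain a connected path between $u$ and $v$ in $G$. The number of such $X$ of size exactly $m$ is trivially bounded by the number of connected sets of size $m$ containing $u$, which by Proposition~\ref{prop:clustercounting} is at most $(\Delta\mathrm{e})^m$. Combining these observations with the quasilocality estimate gives
\begin{align}
\sum_{X \ni u,v}\Vert H_X\Vert \;\leq\; \sum_{m=\mathsf d(u,v)+1}^{\infty} (\Delta\mathrm{e})^m\, h\,\mathrm{e}^{-\kappa m} \;=\; h\sum_{m=\mathsf d(u,v)+1}^{\infty} \mathrm{e}^{-\mu' m} \;\leq\; \frac{h\,\mathrm{e}^{-\mu'}}{1-\mathrm{e}^{-\mu'}}\,\mathrm{e}^{-\mu'\mathsf d(u,v)}.
\end{align}

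To obtain the extra $1/\mathsf d(u,v)^\alpha$ factor advertised in the lemma, I would simply trade a sliver of exponential decay for polynomial decay. Pick any $\epsilon \in (0,\mu')$ and write $\mathrm{e}^{-\mu' d} = \mathrm{e}^{-(\mu'-\epsilon)d}\,\mathrm{e}^{-\epsilon d}$. Using the elementary bound $x^\alpha \mathrm{e}^{-\epsilon x} \leq (\alpha/(\epsilon\mathrm{e}))^\alpha$ (obtained by maximizing at $x=\alpha/\epsilon$), we get $\mathrm{e}^{-\epsilon d} \leq (\alpha/(\epsilon\mathrm{e}))^\alpha / d^\alpha$. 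Setting $\mu = \mu' - \epsilon$ and absorbing the constants into
\begin{align}
h' \;=\; \frac{h\,\mathrm{e}^{-\mu'}}{1-\mathrm{e}^{-\mu'}}\left(\frac{\alpha}{\epsilon\mathrm{e}}\right)^{\!\alpha}
\end{align}
yields the stated bound.

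The argument is essentially a counting estimate plus a standard log-convexity trick, so I do not anticipate a genuine obstacle. The only point that requires a little care is that Proposition~\ref{prop:clustercounting} counts connected subsets containing a \emph{single} specified vertex, so one must explicitly argue that requiring $X \ni v$ as well can only shrink this collection (which is immediate). The restriction \eqref{eq:kappalowerbound} is precisely what is needed to make $\mu'>0$, and any $\alpha>0$ can be accommodated by choosing $\epsilon$ small enough that $\mu>0$ and then paying a constant prefactor $(\alpha/(\epsilon\mathrm{e}))^\alpha$.
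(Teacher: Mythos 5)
Your proof is correct and follows essentially the same route as the paper: bound $\sum_{X\ni u,v}\Vert H_X\Vert$ by summing the cluster-count $(\Delta\mathrm{e})^m$ against the quasilocal weight $h\mathrm{e}^{-\kappa m}$ over sizes $m$ at least $\mathsf d(u,v)$, then trade a slice of the resulting exponential decay rate $\mu'=\kappa-1-\log\Delta$ for the polynomial factor $1/\mathsf d(u,v)^\alpha$. The only difference is cosmetic: the paper simply asserts the exponential-beats-polynomial step, while you make the constant $h'$ explicit via the elementary maximization $x^\alpha\mathrm{e}^{-\epsilon x}\le(\alpha/\epsilon\mathrm{e})^\alpha$.
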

\begin{proof}
We begin by overbounding the sum with every set which includes $u$ and has at least $\mathsf d(u, v)$ members, as required of a connected set containing both $u$ and $v$:
\begin{align}
\sum_{X \ni u,v}\Vert H_X \Vert \leq \sum_{m \geq \mathsf d(u,v)}\sum_{\substack{X \ni u \\ |X| =m}}\Vert H_X \Vert
\leq h\sum_{m \geq \mathsf d(u,v)}(\e\Delta)^m\mathrm{e}^{-\kappa m}  
= \frac{h \mathrm{e}^{-[\kappa+\log(\e\Delta)]\mathsf d(u,v)}}{1-(\e\Delta)} \mathrm{e}^{-\kappa} =: C \mathrm{e}^{-\mu'\mathsf d(u,v)}. \label{eq:usualquasilocal}
\end{align}
We have bounded the number of connected subsets $X$ containing $u$ of size $m$ with $(e\Delta)^m$ by Prop.~\ref{prop:clustercounting}. Then the result is a geometric series, which converges due to Eq.~\eqref{eq:kappalowerbound}. Now let $\alpha > 0$. Since an exponential decays much more rapidly than a power-law, we can find $\mu$ dependent on $\alpha$ such that
\begin{align}
C \mathrm{e}^{-\mu'\mathsf d(u,v)} \leq
h' \frac{\mathrm{e}^{-\mu\mathsf d(u,v)}}{\mathsf d(u,v)^\alpha}.
\end{align}
where $h'$ is another constant dependent on $\alpha$ and $C$, but not dependent on $\mathsf d(u,v)$. This is the advertised result.
\end{proof}
The reason for requiring the power-law in the denominator is that we need a way to iteratively apply this bound in order to constrain the weight of irreducible paths. This is called the \emph{reproducing} property~\cite{hastings2010quasi}, and is encapsulated by the following lemma.
\begin{lem}
For any $\alpha > d$, there exists a constant $K$ independent of $u,v$ such that
\label{lem:sumoverpaths}
\begin{equation}
\sum_{k \neq u, v}\frac{\mathrm{e}^{-\mu\mathsf d(u,k)}}{\mathsf d(u,k)^\alpha}\frac{\mathrm{e}^{-\mu\mathsf d(k,v)}}{\mathsf d(k,v)^\alpha} \leq K \frac{\mathrm{e}^{-\mu \mathsf d(u,v)}}{\mathsf d(u,v)^\alpha}
\end{equation}
\end{lem}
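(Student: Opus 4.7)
The plan is to separate the two factors via the triangle inequality and then partition the sum over $k$ according to which of $\mathsf{d}(u,k)$ or $\mathsf{d}(k,v)$ carries most of the distance from $u$ to $v$. First, since $\mathsf{d}(u,k)+\mathsf{d}(k,v)\ge \mathsf{d}(u,v)$, the product $\mathrm{e}^{-\mu\mathsf{d}(u,k)}\mathrm{e}^{-\mu\mathsf{d}(k,v)}\le \mathrm{e}^{-\mu\mathsf{d}(u,v)}$ holds termwise, so the desired exponential decay can be pulled out of every summand at no cost. This reduces the problem to showing $\sum_{k\neq u,v} \mathsf{d}(u,k)^{-\alpha}\mathsf{d}(k,v)^{-\alpha}$ is bounded by a constant multiple of $\mathsf{d}(u,v)^{-\alpha}$.

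Next, I would partition the summation domain into $A_1=\{k:\mathsf{d}(u,k)\ge \mathsf{d}(u,v)/2\}$ and $A_2=\{k:\mathsf{d}(k,v)\ge \mathsf{d}(u,v)/2\}$; by the triangle inequality $A_1\cup A_2$ covers every $k\neq u,v$, so the full sum is at most the sum over $A_1$ plus the sum over $A_2$. On $A_1$ I would bound $\mathsf{d}(u,k)^{-\alpha}\le 2^\alpha \mathsf{d}(u,v)^{-\alpha}$, leaving only $\sum_{k\neq v} \mathsf{d}(k,v)^{-\alpha}$; the region $A_2$ is handled identically by exchanging the roles of $u$ and $v$. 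The residual shell sum is controlled by the $d$-dimensional hypothesis of Definition~3.2, since $|B_r(v)|-|B_{r-1}(v)|\le C_2 r^{d-1}$ gives
\[
\sum_{k\neq v} \frac{1}{\mathsf{d}(k,v)^\alpha} \le C_2 \sum_{r\ge 1} r^{d-1-\alpha} =: K_0 < \infty,
\]
and convergence here is precisely where the hypothesis $\alpha>d$ is consumed. Setting $K=2^{\alpha+1}K_0$ finishes the argument.

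There is essentially no serious obstacle: the lemma is a reproducibility statement for exponentially-decaying kernels with polynomial prefactors, and the argument above is the textbook partition-and-shell-sum estimate. The only mildly subtle point is that both factors in the summand carry the \emph{same} decay constant $\mu$, so all of the exponential decay on the right-hand side must come from the triangle inequality rather than from splitting $\mu=\mu_1+\mu_2$ across the two factors; this is automatic here, with the excess decay $\mathrm{e}^{-\mu(\mathsf{d}(u,k)+\mathsf{d}(k,v)-\mathsf{d}(u,v))}\le 1$ simply discarded. The resulting $K$ depends on $\alpha$, $d$, and the lattice constant $C_2$, but is independent of $u$ and $v$ as required.
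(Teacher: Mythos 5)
Your proof is correct and complete. For context, the paper itself does not supply an explicit proof of this lemma: it only remarks that $G_\alpha(l)=\mathrm{e}^{-\mu l}l^{-\alpha}$ with $\alpha>d$ is \emph{reproducing} in the sense of Hastings and cites Def.~12 of \cite{hastings2010quasi}, which asserts exactly this kind of bound. Your argument fills in the standard estimate behind that assertion and does so accurately: the termwise application of the triangle inequality extracts the full factor $\mathrm{e}^{-\mu\mathsf{d}(u,v)}$, the partition into $A_1=\{\mathsf{d}(u,k)\ge \mathsf{d}(u,v)/2\}$ and $A_2=\{\mathsf{d}(k,v)\ge \mathsf{d}(u,v)/2\}$ covers all of $V\setminus\{u,v\}$ again by the triangle inequality, and the residual shell sum $\sum_{r\ge 1}r^{d-1-\alpha}$ is finite precisely when $\alpha>d$, which is where that hypothesis is used. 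Your final constant $K=2^{\alpha+1}K_0$ with $K_0 = C_2\sum_{r\ge 1}r^{d-1-\alpha}$ depends only on $\alpha$, $d$, and the lattice constant $C_2$, and is manifestly independent of $u$ and $v$, as required. This is the same reproducing-kernel argument the paper is implicitly invoking, so your approach aligns with the intended proof.
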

This makes the function $G_\alpha(l) \equiv \mathrm{e}^{-\mu l}l^{-\alpha}$ with $\alpha > d$ reproducing for the given lattice. Bounding the weight of edges in the factor graph by a reproducing function of distance on the underlying graph is exactly the criterion which allows us to easily bound the sum of weights of irreducible paths through the factor graph by a function of the distance between the start and end points. We will illustrate how this property is deployed to constrain the locality of dynamics by reproducing the proof of Prop.~\ref{prop:quasilocalbound} found in Ref.~\cite{AnthonyChen:2023bbe}.
\begin{proof}[Proof of Prop.~\ref{prop:quasilocalbound}]
Applying the result from Theorem~\ref{thm:chen_LR_bound}, the commutator is bounded by 
\begin{align}
\frac{1}{2}\Vert [O, A(t)]\Vert  \le \sum_{m = 1}^{\infty}\frac{(2t)^m}{m!}\sum_{u \in \partial B, v \in \partial S}\sum_{\substack{\Gamma(u \to v) \\ |\Gamma| = m}} w(\Gamma) 
\label{eq:exponentialpaths}
\end{align}
Now we need to bound the sum over weighted subsets. First fix some $u \in \partial B, v \in \partial S$. Then consider the irreducible path $\Gamma = (X_1, X_2, \dots, X_m)$. We note that the sets $X_i$ must form a connected path, i.e. $X_k \cap X_{k-1} \neq \emptyset$, so we can overbound the sum by summing over all such connected multisets. Furthermore, we can pick one point $v_k$ in each intersection $X_{k} \cap X_{k-1}$. Since this is true for each such connected multiset, we can further overbound the sum by considering all such sequences of vertices $\vec v = (u, v_2, \dots, v_{m-1}, v)$, where $v_2, \dots, v_{m-1}$ are arbitrary, and sum over $\vec X = X_1, \dots, X_m$ satisfying $v_k \in X_k \cap X_{k-1}$: 
\begin{align}
\sum_{\substack{\Gamma(u \to v) \\ |\Gamma| = m}}w(\Gamma) &\leq \sum_{v_2, \dots, v_{m-1}}\sum_{X_1 \ni u,v_2}\Vert H_{X_1}\Vert \cdots \sum_{X_m \ni v_{m-1}, v}\Vert H_{X_m} \Vert \notag\\
&\leq h'^m\sum_{\vec v}\frac{\mathrm{e}^{-\mu\mathsf d(u,v_2)}}{\mathsf d(u, v_2)^\alpha} \cdots \frac{\mathrm{e}^{-\mu\mathsf d(v_{m-1},v)}}{\mathsf d(v_{m-1}, v)^\alpha} \notag\\
&\leq \frac{(h'K)^m}{K}\frac{\mathrm{e}^{-\mu\mathsf d(u, v)}}{\mathsf d(u, v)^\alpha}.
\label{eq:pathcounting}
\end{align}
Thus we have the bound
\begin{align}
\sum_{u \in \partial R, v \in \partial S}\sum_{\substack{\Gamma(u \to v) \\ |\Gamma| = m}} w(\Gamma) 
 \leq |\partial B| |\partial S|\frac{(h'K)^m}{K}\frac{\mathrm{e}^{-\mu\mathsf d(R, S)}}{\mathsf d(R, S)^\alpha},
 \label{eq:single_path_bound}
\end{align}
Inserting this into Eq.~\eqref{eq:exponentialpaths}, we find that
\begin{align}
C_S^R(t) &\leq |\partial B| |\partial S|\frac{\mathrm{e}^{-\mu\mathsf d(u, v)}}{K\mathsf d(u, v)^\alpha}\sum_{m = 1}\frac{(2t)^m}{m!}(h'K)^m \notag\\
&= |\partial B| |\partial S|\frac{\mathrm{e}^{-\mu\mathsf d(u, v)}}{K\mathsf d(u, v)^\alpha}\left[\exp(2t Kh')-1\right] = c |\partial B| |\partial S|\mathrm{e}^{-\mu \mathsf d(R,S)}(\mathrm{e}^{\mu \LRvel t}-1)
\end{align}
where we have introduced the constants $c = K^{-1}$ and $\LRvel = 2Kh'\mu^{-1}$.
\end{proof}

\subsection{Extending to nested commutators: $m = 2$ case}
Now we can extend this result to nested commutators.
For an illustrative example, we first consider the case where there are only two regions $B_1, B_2$.
In this section, we illustrate the strategy for the general case by proving the following proposition.
\begin{prop}
\label{prop:double_commutator_bound} For operators $\mathcal{O}_1,\mathcal{O}_2$ supported on $S_1,S_2$, respectively, and $A$ supported in the complement of $B_1\cup B_2$, with $\mathcal{O}(1)$ constants $\mu,v_{\mathrm{LR}}, h, h'$ as defined in the previous section, we have
\begin{equation}
\frac{1}{4}\Vert[\mathcal O_1, [\mathcal O_2, A(t)]] \Vert \leq \mu \LRvel t (\mu \LRvel t + e^{-2(\mu \chi-\kappa)}) \prod_{i =1,2}b|\partial B_i||\partial S_i|e^{\mu (\LRvel t -\mathsf d(S_i, B_i))}
\end{equation}
where $b \equiv \frac{h'}{h}$ and $\chi \leq \mathsf d(B_1, B_2)/2$.
\end{prop}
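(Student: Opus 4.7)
The plan is to lift the proof of Theorem~\ref{thm:bound} to the quasilocal setting, where the new obstruction is that a single factor $X \in F$ can now span many vertices, so the irreducible paths $\Gamma_1, \Gamma_2$ from $R$ to $S_1$ and to $S_2$ are no longer guaranteed to be disjoint. I begin by running the causal-forest expansion of Section~\ref{subsec:trees} essentially verbatim, arriving at a sum over pairs $(\Gamma_1, \Gamma_2)$ weighted by the generalized Schwinger--Karplus identity of Lemma~\ref{lem:SchwingerKarplus}. I then split this sum according to whether $\Gamma_1 \cap \Gamma_2 = \emptyset$ (Type A) or not (Type B).

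For Type A pairs, the simplex factorization $\Vol(\Delta^l(t)) = \prod_i |\Gamma_i|!\Vol(\Delta^{|\Gamma_i|}(t))/l!$ that drove the product form in Theorem~\ref{thm:bound} carries over unchanged, so Type A reduces to a product of two independent single-commutator path sums of exactly the shape already handled in the proof of Proposition~\ref{prop:quasilocalbound}. Using $\mathrm{e}^{\mu \LRvel t}-1 \leq \mu \LRvel t \cdot \mathrm{e}^{\mu \LRvel t}$ to pull one factor of $\mu \LRvel t$ out of each single-commutator bound accounts for the $(\mu \LRvel t)^2$ piece multiplying $\prod_i b|\partial B_i||\partial S_i|\mathrm{e}^{\mu(\LRvel t - \mathsf d(S_i, B_i))}$.

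For Type B pairs, I distinguish the factor $X^* \in \Gamma_1 \cap \Gamma_2$ that sits farthest from $R$ along the shared prefix. After $X^*$ each path must continue to $S_1 \subset B_1$ on one side and $S_2 \subset B_2$ on the other; since $\mathsf d(B_1, B_2)\geq 2\chi$, $X^*$ as a connected subset of $V$ must satisfy $|X^*|\geq 2\chi$, so by quasilocality $\|H_{X^*}\|\leq h\,\mathrm{e}^{-2\kappa\chi}$, which produces the $\mathrm{e}^{2\kappa}$ enhancement after normalizing by the generic $h$ prefactor absorbed into $b=h'/h$. The residual sum factors through $X^*$ into (i) a joint path from $R$ into $X^*$, and (ii) two independent continuations from $X^*$ into $S_1$ and into $S_2$. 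Each of these sub-sums is controlled by iterating the reproducing property of Lemma~\ref{lem:sumoverpaths} on top of Lemma~\ref{lem:factorsum}, collapsing intermediate vertex sums and yielding the $\mathrm{e}^{-2\mu\chi}$ decay that combines with the factor weight into $\mathrm{e}^{-2(\mu\chi-\kappa)}$, plus one remaining $\mu \LRvel t$ from a single round of Schwinger--Karplus.

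The main obstacle is the Type B case. The shared portion of the causal tree can in principle involve multiple shared factors and a nontrivial branching geometry rather than a single distinguished $X^*$, so one must argue that the ``farthest shared factor'' choice captures every such configuration without overcounting, and that the resulting three-segment decomposition sums cleanly. The saving grace is that the reproducing property of $G_\alpha(l) = \mathrm{e}^{-\mu l}l^{-\alpha}$ is strong enough to re-absorb the intermediate vertex sums into a single geometric factor per segment; after all the dust settles the overlap contribution carries exactly one power of $\mu \LRvel t$ less than Type A together with the advertised $\mathrm{e}^{-2(\mu\chi-\kappa)}$ volume suppression, producing the stated bound.
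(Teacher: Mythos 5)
Your proposal has the right high-level decomposition (decoupled vs.\ coupled equivalence classes), and your treatment of the decoupled (``Type A'') sum as a product of two single-commutator bounds with the $\mathrm{e}^{\mu\LRvel t}-1 \le \mu\LRvel t\,\mathrm{e}^{\mu\LRvel t}$ massage is exactly what the paper does. But there is a genuine gap in the coupled (``Type B'') case, and it is exactly the point you flag as ``the main obstacle.''

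The paper's Lemma~\ref{lem:pathbreaking} proves that if $\Gamma_1$ and $\Gamma_2$ intersect, they share \emph{exactly one} factor, and that factor is the \emph{first} element of both paths---the one touching $R$. This follows directly from the causal-forest algorithm: a factor $M_n$ is attached to the smallest-index $M_k$ it intersects, and $R=M_0$, so any factor touching $R$ is a child of $R$; hence non-initial path elements cannot touch $R$, and a shared connected factor bridging the two branches is forced to touch $R$ and therefore sit at the start. You instead posit a ``shared prefix'' of potentially several factors with a distinguished ``farthest'' element $X^*$, and leave it open ``that the `farthest shared factor' choice captures every such configuration without overcounting.'' That worry is real in your framing but evaporates once Lemma~\ref{lem:pathbreaking} is invoked: there is no nontrivial shared prefix to worry about, and the decomposition is one coupling factor $X$ plus two branches---not three segments. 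Your inference ``$|X^*|\ge 2\chi$'' also silently uses the conclusion of that lemma (that the branches leave $X^*$ directly into $B_1$ and $B_2$); without it, the branches could wander through $R$ first, and $X^*$ need not span the gap.

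Separately, your explanation of the $\mathrm{e}^{2\kappa}$ prefactor does not match the mechanism. It does not arise from $\|H_{X^*}\|\le h\,\mathrm{e}^{-2\kappa\chi}$ ``normalized by $h$.'' In the paper it is pure bookkeeping from anchor points: picking $u\in\partial B_1\cap X$, $v\in\partial B_2\cap X$ and splitting $P=(X\setminus\{u,v\})\sqcup Y\sqcup Z$ gives $W(P)=\mathrm{e}^{2\kappa}h^{-2}W(X)W(Y)W(Z)$ because the sizes overcount by the two shared vertices. The sum over $X\ni u,v$ is then controlled by Lemma~\ref{lem:factorsum} (giving $h'\mathrm{e}^{-2\mu\chi}$, with $\mu<\kappa$ from cluster counting), and the $\mathrm{e}^{2\kappa}$ is the separate combinatorial residue of the split. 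Your bound on a \emph{single} $X^*$ by $h\mathrm{e}^{-2\kappa\chi}$ is not what is summed, and the arithmetic relating it to $\mathrm{e}^{-2(\mu\chi-\kappa)}$ does not go through as stated.
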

Before proceeding with the proof, we introduce a few definitions. The most important distinction with the single-region case in the previous section is that we now have factors which intersect both $B_1$ and $B_2$, so we can no longer immediately factor the paths into products within each $B_i$, as in Corollary~\ref{cor:decoupling corollary}. However, we can reduce to the decoupled case by sorting the equivalence classes in the following way.
\begin{defn}
Let $[\Gamma] \in \mathcal S$ be an equivalence class with irreducible paths $\Gamma_1$ and $\Gamma_2$ connecting $S_1$ and $S_2$ to $R$ respectively. We will say that $\Gamma$ couples $B_1$ and $B_2$ if $\Gamma_1 \cap \Gamma_2 \neq \emptyset$. 
\end{defn}
\begin{lem}
If $\Gamma$ couples $B_1$ with $B_2$, then $\Gamma_1 \cap \Gamma_2 = \{X\}$, where $X$ is a factor. $X$ is the first element of both $\Gamma_1$ and $\Gamma_2$, and the only element of either path that intersects $R$. 
\label{lem:pathbreaking}
\end{lem}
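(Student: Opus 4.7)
My approach is to combine the tree structure of the causal forest $T(M)$ with the parent-selection rule used in the algorithm of Section~\ref{subsec:trees}. The first observation is that since $T(M)$ is a forest rooted at $R$ (identified with $M_0$), the paths $\Gamma_1$ and $\Gamma_2$ from $R$ to $S_1$ and $S_2$ are \emph{unique} in the tree, so their factor-sequence intersection is necessarily a common initial segment $(X_1, X_2, \dots, X_k)$ running from $R$ to the lowest common ancestor of $S_1$ and $S_2$. Next, I would establish the last clause of the lemma, that only the first factor of either path can intersect $R$: the algorithm attaches each $M_n$ to the \emph{earliest} $M_j$ ($j<n$) it overlaps, and since $M_0 = R$, any factor with $M_n \cap R \neq \emptyset$ becomes a direct child of $R$, i.e.\ sits at depth one. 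Consequently $X_2, \dots, X_k$ (and all depth-$\geq 2$ factors of $\Gamma_1 \cup \Gamma_2$) must be disjoint from $R$.

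The crux is showing $k = 1$, namely that the shared initial segment is a single factor $X := X_1$. If instead $k \geq 2$, then $X_k$ would sit strictly between $R$ and both $S_1 \subset B_1$ and $S_2 \subset B_2$, so its subtree must reach two spatial regions separated by at least $2\chi$. I expect this to be the main obstacle, because the algorithm by itself does not a priori forbid a longer shared prefix. The resolution I would pursue is to exploit the freedom in choosing equivalence classes highlighted in Section~\ref{subsec:advantage}: any configuration with $k \geq 2$ can be re-attributed so that a single ``coupling factor'' $X$ at depth one captures the inter-ball coupling, while the remaining $X_2, \dots, X_k$ and their subtrees are absorbed into the two (now disjoint) portions of $\Gamma_1$ and $\Gamma_2$ lying within $B_1$ and $B_2$ respectively. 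Making this reorganization precise — and checking that it remains consistent with the bookkeeping of Lemma~\ref{lem:causalidentity} so that the weights reassemble correctly in Proposition~\ref{prop:double_commutator_bound} — is where the real work lies.
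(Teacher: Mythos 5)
Your first two observations are correct and match the structure of the paper's argument: the tree rooted at $R$ forces $\Gamma_1 \cap \Gamma_2$ to be a common initial segment $(X_1, \dots, X_k)$, and the earliest-parent rule in the algorithm (together with $M_0 = R$) forces every factor at depth $\geq 2$ to be disjoint from $R$. Where the proposal has a genuine gap is at the crux, $k = 1$. You write that ``the algorithm by itself does not a priori forbid a longer shared prefix,'' and you propose to patch this by re-attributing equivalence classes. That patch does not prove the lemma as stated (the lemma is a claim about the output of the fixed algorithm, not about some reorganized bookkeeping), and it turns out the direct argument already closes the gap. The missing ingredient is the interplay between \emph{connectedness of factors} and the fact that $R$ \emph{topologically separates} $B_1$ from $B_2$.

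Here is the step you are missing. Every factor $X \in F$ is a connected subset of $V$ by Definition~\ref{defquasilocal}, and the regions $B_1, B_2$ are disjoint and separated, so any connected set that does not meet $R$ lies entirely in $B_1$ or entirely in $B_2$. Suppose for contradiction that $k \geq 2$. Then $X_k$ is at depth $\geq 2$, hence $X_k \cap R = \emptyset$ by your second observation, hence $X_k \subset B_i$ for some $i$; say $X_k \subset B_1$. Now look at the portion of $\Gamma_2$ below $X_k$: it is a chain $X_k, Z_{k+1}, \dots, Z_l$ in the tree with $Z_{j} \cap Z_{j-1} \neq \emptyset$ and $Z_l \cap S_2 \neq \emptyset$. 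Each $Z_j$ sits at depth $\geq k+1 \geq 3$, so by the same earliest-parent argument none of them meets $R$, so each is entirely inside a single $B_i$. Since $Z_{k+1}$ meets $X_k \subset B_1$, we get $Z_{k+1} \subset B_1$; inductively $Z_j \subset B_1$ for all $j$; but $Z_l$ meets $S_2 \subset B_2$, a contradiction. Hence $k = 1$, which is precisely the paper's (terse) claim that ``since $X$ is required to be connected, it must intersect with $R$,'' after which the first-element and uniqueness clauses follow as you and the paper both observe. Your subtree-reaching-both-regions intuition was pointing at exactly this, but the contradiction comes from the chain of connected factors staying trapped in one $B_i$, not from any freedom in re-choosing equivalence classes.
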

\begin{proof}
Let $X \in \Gamma_1 \cap \Gamma_2$. The first element of an irreducible path is the only element of said irreducible path that can intersect with $R$ by construction. Since $X$ is required to be connected, it must intersect with $R$. Therefore it must be the first element in the irreducible path $\Gamma_1$ and similarly for $\Gamma_2$. If $Y \in \Gamma_1 \cap \Gamma_2$, then $X$ and $Y$ are both the first elements of both paths, so $X = Y$. 
\end{proof}
This shows in particular that the irreducible skeletons in the coupled case can be reduced to the uncoupled case by removing the first term.
\begin{proof}[Proof of Prop.~\ref{prop:double_commutator_bound}]
We will now divide the equivalence classes into uncoupled paths $\mathcal S_d$ and coupled paths $\mathcal S_c$. Then the commutator bound breaks into a sum:
\begin{align}\label{eq:breaksum}
\frac{1}{4}\Vert[\mathcal O_1, [\mathcal O_2, A(t)]] \Vert \leq \sum_{\Gamma \in \mathcal S_d}\frac{(2t)^{|\Gamma|}}{|\Gamma|!}w(\Gamma) + \sum_{\Gamma \in \mathcal S_c}\frac{(2t)^{|\Gamma|}}{|\Gamma|!}w(\Gamma).
\end{align}
By Cor.~\ref{cor:decoupling corollary}, we can bound the first term in this equation by
\begin{align}
\sum_{\Gamma \in S_d}\frac{(2|t|)^{|\Gamma|}}{|\Gamma|!}w(\Gamma) \leq \prod_{i = 1,2} c |\partial B_i||\partial S_i| \mathrm{e}^{-\mu \mathsf d(S_i, B_i)}(e^{\mu \LRvel t}-1).
\end{align}
We are left with just the second term to bound.
For convenience, let 
\begin{align}
A(X \to Y;l) \equiv \{(U_1, \dots, U_l): U_1 \cap X \neq \emptyset, U_i \cap U_{i-1} \neq \emptyset, U_{l}\cap Y \neq \emptyset\}
\end{align}
where $X,Y,U_i \subseteq V$. Although not all $X,Y \in F$, we will bound the sum  by summing over all \textit{possible} subsets. Accordingly, given $\vec U = (U_1, \dots, U_l)$ where $U_i \subseteq V$, we define
\begin{align} 
\prod_{i}\Vert H_{U_i}\Vert\delta_{U_i \in F} \le \prod_i h e^{-\kappa |U_i|} \equiv W(\vec U)
\end{align}
Then we write as a shorthand for our bound on the single commutator
\begin{align}
\sum_{\substack{\Gamma(R \to S_i) \\ |\Gamma| = l}} w(\Gamma) 
 \leq \sum_{\vec U \in A(R \to S_i;l)}W(\vec U) \leq |\partial B_i| |\partial S_i|\frac{(h'K)^l}{K}\frac{\mathrm{e}^{-\mu\mathsf d(R, S_i)}}{\mathsf d(R, S_i)^\alpha}\equiv \frac{(h'K)^l}{K}Q_{i}
\end{align}
Note that in the first inequality, we have relaxed the bound by summing over non-irreducible (e.g. backtracking) paths $\vec U$ as well. By Proposition~\ref{lem:pathbreaking}, we can break each path up into the first factor $X$ which couples the two balls, and sequences of factors $\vec U$, $\vec V$ which then couple $X$ to $S_1$ and $S_2$ respectively. Then we account for all the ways that the elements of $\vec U$, $\vec V$ can be permuted without changing the ordering, which contributes a factor of $(l_1 + l_2)!/l_1!l_2!$:
\begin{align}
\label{eq:weightsumbound}
\sum_{\substack{\Gamma \in \mathcal S_c \\ |\Gamma| = l}} w(\Gamma) &\leq 
\sum_{l_1 + l_2 = l-1}
\sum_{P: \substack{P \cap B_1 \neq \emptyset\\ P \cap B_2 \neq \emptyset }}
W(P)\sum_{\substack{\vec U \in A(P \to S_1; l_1) \\ \vec V \in A(P \to S_2; l_2)}}\sum_{\text{ordering} \{U_i\}\cup \{V_i\}}W(\vec U \oplus \vec V)  \notag\\
&= \sum_{l_1 + l_2 = l-1}\frac{(l-1)!}{l_1!l_2!}\sum_{P: \substack{P \cap B_1 \neq \emptyset\\ P \cap B_2 \neq \emptyset }}
W(P)\sum_{\vec U \in A(P \to S_1; l_1)}W(\vec U)\sum_{\vec V \in A(P \to S_2; l_2)}W(\vec V).
\end{align}
Here we used $\oplus$ to emphasize that $\vec U$ and $\vec V$ are concatenated together as sequences of sets. Now we recognize that if $P$ intersects both $B_1$ and $B_2$, then we can pick $u \in \partial B_1 \cap P$ and $v \in \partial B_2 \cap P$ and
write $P = X\backslash\{u,v\} \sqcup Y \sqcup Z$, where $Y = P \cap B_1$, $Z = P \cap B_2$, and $X = P \cap R \cup \{u,v\}$. Then we can bound the sum over such $P$ by summing over $X,Y,Z$ satisfying $u,v \in X$, $u \in Y$, $v \in Z$. This is illustrated in Fig.~\ref{fig:longrangecoupling}.

\begin{figure}[t]
\centering
\includegraphics[width = .5\textwidth]{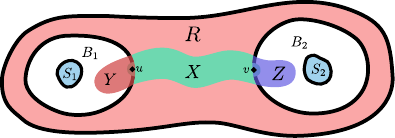}
\caption{Example of a long-range coupling term supported on $P = X \cup Y \cup Z$. Since $P$ must intersect the boundary of both $B_1$ and $B_2$ by the connectedness requirement, we pick two ``anchor" points $u,v$ on the respective boundaries of these regions so that $u \in X,Y$ and $v \in X,Z$.}
\label{fig:longrangecoupling}
\end{figure}

We have
\begin{align}
&\sum_{l_1 + l_2 = l-1}\frac{(l-1)!}{l_1!l_2!}\sum_{P: \substack{P \cap B_1 \neq \emptyset\\ P \cap B_2 \neq \emptyset }}
W(P)\sum_{\vec U \in A(P \to S_1; l_1)}W(\vec U)\sum_{\vec V \in A(P \to S_2; l_2)}W(\vec V) \notag\\
&\leq \sum_{l_1 + l_2 = l-1}\frac{(l-1)!}{l_1!l_2!}\sum_{\substack{u \in \partial B_1 \\ v \in \partial B_2}}\sum_{X \ni u, v}\sum_{Y \cap X =  u}\sum_{Z \cap X =  v} 
W(X \cup Y \cup Z) \sum_{\vec U \in A(Y \to S_1; l_1)}W(\vec U)\sum_{\vec V \in A(Z \to S_2; l_2)}W(\vec V) \notag\\
&\leq \frac{\mathrm{e}^{2\kappa}}{h^2}\sum_{l_1 + l_2 = l-1}\frac{(l-1)!}{l_1!l_2!}\sum_{\substack{u \in \partial B_1 \\ v \in \partial B_2}} \sum_{X \ni u,v}
W(X) \sum_{\vec U \in A(\{u\} \to S_1; l_1+1)}W(\vec U)\sum_{\vec V \in A(\{v\} \to S_2; l_2+1)}W(\vec V) 
\end{align}
We have abused notation slightly in writing \begin{equation}
    W(X \cup Y \cup Z) = W(X\backslash \{u,v\} \sqcup Y\sqcup Z ) = \frac{\e^{2\kappa}}{h^2}W(X)W(Y)W(Z).
\end{equation} 
Applying Lemma~\ref{lem:factorsum}, we bound $\sum_{X \ni u,v}w(X) \leq h'\mathrm{e}^{-\mu\mathsf d(u,v)} \leq h' \mathrm{e}^{-2\mu \chi}$. The remaining sums over paths are exactly the sums appearing in Lemma~\ref{lem:sumoverpaths}:
\begin{align}
&\frac{\mathrm{e}^{2\kappa}}{h^2}\sum_{l_1 + l_2 = l-1}\frac{(l-1)!}{l_1!l_2!}\sum_{\substack{u \in \partial B_1 \\ v \in \partial B_2}} \sum_{X \ni u,v}
w(X) \sum_{\vec U \in A(\{u\} \to S_1; l_1+1)}w(\vec U)\sum_{\vec V \in A(\{v\} \to S_2; l_2+1)}w(\vec V)  \notag\\
&\leq h'\frac{1}{h^2}\mathrm{e}^{2\kappa}\mathrm{e}^{-2\mu\chi}\sum_{l_1 + l_2 = l-1}\frac{(l-1)!}{l_1!l_2!}\frac{(h'K)^{l_1+1}}{K}Q_{1}\frac{(h'K)^{l_2 + 1}}{K}Q_2 \notag\\
&=  h'\qty(\frac{h'}{h})^2\mathrm{e}^{2\kappa}\mathrm{e}^{-2\mu\chi}Q_{1}Q_2\sum_{l_1 + l_2 = l-1}\frac{(l-1)!}{l_1!l_2!}(h'K)^{l_1+l_2}
\end{align}
Now we can plug this bound into the left-hand side of Eq.~\eqref{eq:coupled_commutator_bound}:
\begin{align}
\sum_{\Gamma \in \overline{\mathcal S}}\frac{(2t)^{|\Gamma|}}{|\Gamma|!}W(\Gamma) \leq 
 h'\sum_{l=1}^{\infty}\sum_{l_1 + l_2 = l-1}\frac{(2t)^{l}}{l!}\frac{(l-1)!}{l_1!l_2!}(h'K)^{l_1}(h'K)^{l_2}\prod_{i  = 1,2} b \mathrm{e}^{-\mu \chi+\kappa}Q_i.
\end{align}
We can similarly rewrite this sum in terms of sums over $l_1$, $l_2$ alone, i.e.
\begin{align}
\sum_{l=1}^{\infty}\sum_{l_1 + l_2 = l-1}\frac{(2t)^{l}}{l!}\frac{(l-1)!}{l_1!l_2!}(h'K)^{l_1}(h'K)^{l_2} &= 2t\sum_{l_1 = 0}^\infty \sum_{l_2 = 0}^\infty\frac{(2h'Kt)^{l_1}}{l_1!}\frac{(2h'Kt)^{l_2}}{l_2!}\frac{1}{(l_1+l_2+1)} \notag\\
& \leq 2t\qty(\sum_{l_1=0}^{\infty}\frac{(2h'Kt)^{l_1}}{l_1!})\qty(\sum_{l_2 = 0}^\infty\frac{(2h'Kt)^{l_2}}{l_2!})\notag\\
&= 2t\mathrm{e}^{2\mu \LRvel t},
\end{align}
where we used $v_{\mathrm{LR}}=\frac{2Kh'}{\mu}$. Now with $K > 1$, our bound simplifies to
\begin{align}
\sum_{\Gamma \in \mathcal S_c}\frac{(2t)^{|\Gamma|}}{|\Gamma|!} w(\Gamma)\leq \mu \LRvel t\mathrm{e}^{-2\mu \chi + 2\kappa}\prod_{i  = 1,2} b|\partial B_i||\partial S_i| \mathrm{e}^{\mu (\LRvel t - \mathsf d(S_i, B_i) )}
\label{eq:coupled_commutator_bound}
\end{align}
This gives a bound on the second term in Eq.~\eqref{eq:breaksum}. To obtain a simple expression for the sum,
we observe that $\mathrm{e}^{t}-1 \leq t\mathrm{e}^{t}$, which follows from a comparison of their Taylor series.
Then we have $b > 1$ because $h' > h$, and $c = \frac{1}{K} < 1$ by construction, so
\begin{align}
\prod_{i = 1,2} c|\partial B_i||\partial S_i|\mathrm{e}^{-\mu\mathsf d(S_i, B_i)}\left(\mathrm{e}^{\mu \LRvel t}-1\right) \leq
\prod_{i = 1,2}b |\partial B_i||\partial S_i|\mathrm{e}^{-\mu\mathsf d(S_i, B_i)}(\mu \LRvel t)e^{\mu \LRvel t}.
\end{align}
Adding this to Eq.~\eqref{eq:coupled_commutator_bound} gives the desired bound.
\end{proof}
We remark that, since $\kappa \sim \mu$, if $\mu\chi > \kappa$, then as $\kappa \to \infty$, the factor $e^{2(\kappa-\mu\chi)}$ suppresses the contribution of the terms in $\mathcal S_c$, and we recover the product bound for the strictly local Hamiltonian.

\subsection{Nested commutators: the general case}
Now the goal is to generalize this proof to arbitrary $m$. Most of the manipulations are similar to those from the previous section. The problem now is that we need to account for long-range coupling between all possible subsets of the regions.
If we were to naively sum the long-range couplings, the number of such subsets would grow much faster than the exponential suppression. In order to overcome this difficulty, we need to capture the fact that there are only very few small connected subsets coupling the regions, and the contribution of the large connected subsets are suppressed exponentially in the distance between these subsets. In order to formalize this notion, we note that in $d$ spatial dimensions, we can tile the plane with Euclidean boxes such that every box is adjacent to $3^d - 1$ boxes. We specialize to the case where each box contains at most one region $B_1, \dots, B_m$, and the distance between $B_i$ and any other box is at least $\chi$. This is straightforwardly generalized to any way $\mathbb R^d$ can be broken into a course-grained lattice by tiling it with arbitrary shapes. 
\begin{defn}
\label{def:coupling}
If $[T] \in \mathcal S_c$ is an equivalence class of trees with irreducible paths $\Gamma_1, \dots, \Gamma_m$ such that $\Gamma_{1} \cap \Gamma_{2} \cap \dots \cap \Gamma_{k} \neq \emptyset$ and $\Gamma_{i} \cap \Gamma_{j} = \emptyset$ for $i \leq k$ and $j > k$, then we say that $[T]$ couples $B_{1}, \dots, B_{k}$. 
\end{defn}
\begin{prop}
Let $T$ be a causal tree coupling $B_1, \dots, B_k$. Let $\Gamma_1, \dots, \Gamma_k$ be the irreducible paths coupling $S_1, \dots, S_k$ to $R$. Then $\Gamma_1 \cap \dots \cap \Gamma_k = \{X\}$, where $X$ intersects with $B_1, \dots, B_k$, and $X$ is the first element of $\Gamma_1, \dots, \Gamma_k$. Furthermore, all elements of $\Gamma_i\backslash \{X\}$ are contained within $B_i$ for each $i\le k$.
\end{prop}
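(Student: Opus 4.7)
The plan is to extend the $m = 2$ argument of Lemma~\ref{lem:pathbreaking} to arbitrary $k$. The structural facts I will use are: (i) the causal tree is rooted at $R$ with each irreducible path $\Gamma_i$ being the unique tree-path from $R$ to $S_i$ (excluding endpoints); (ii) by the algorithm, only the first element of any $\Gamma_i$ intersects $R$; and (iii) each factor in $F$ has connected support in $V$.

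First I would show $\Gamma_1 \cap \cdots \cap \Gamma_k$ is a singleton $\{X\}$. Pick any $X$ in the intersection; since $X$ lies on the tree-path from $R$ to each $S_i$, it is a common ancestor of every $S_i$, so all ancestors of $X$ in the tree must also appear in every $\Gamma_i$. Generalizing the argument of Lemma~\ref{lem:pathbreaking}, I would show that this shared ancestral chain collapses to $X$ itself and that $X$ is adjacent to $R$, making $X$ the first element of every $\Gamma_i$. The collapse combines two ingredients: the algorithm's earliest-parent rule, which prevents long shared ancestral chains from forming; and the hypothesis that distinct $B_j$'s lie in boxes pairwise separated by $\chi$, so any hypothetical intermediate shared factor would simultaneously have to reach several widely-separated regions, which its connectedness rules out. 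Uniqueness of the first element of each path then forces any other $Y$ in the intersection to equal $X$.

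Next I would show that $X$ intersects each $B_i$ and that $\Gamma_i \setminus \{X\}$ lies inside $B_i$. After $X$, the path $\Gamma_i$ extends by overlapping factors until reaching $S_i \subset B_i$. By the singleton conclusion just obtained, none of these post-$X$ factors lie in any $\Gamma_{j \neq i}$, so each is used exclusively to complete the path to $S_i$. The separation between boxes, combined with the connectedness of each factor's support, prevents these factors from leaving the box containing $B_i$; a refinement using the algorithm's rules confines them to $B_i$ itself. Finally, if $X$ did not intersect some $B_i$, the first post-$X$ factor in $\Gamma_i$ would have to bridge from $X$ into $B_i$, and by reapplying the collapse argument it would appear in paths to the other $B_j$'s, contradicting the singleton.

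The main obstacle is the collapse step in the first part: showing rigorously that the shared prefix of all root-to-leaf paths in the causal tree reduces to a single factor. The tree structure alone only yields \emph{some} common prefix; the singleton conclusion requires carefully combining the algorithm's earliest-parent rule with the spatial separation of the $B_i$'s to exclude longer shared chains. Once this is established, the containment claims follow by bookkeeping.
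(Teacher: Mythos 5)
Your approach is in the same spirit as the paper's (which generalizes Lemma~\ref{lem:pathbreaking} by using that only the first element of any irreducible path can touch $R$), but you leave a genuine gap by not isolating the fact that makes the ``collapse'' automatic. The paper's argument rests on the following chain: since $R$ is the complement of $\bigcup_j B_j$, any factor that does \emph{not} intersect $R$ lies in $\bigcup_j B_j$; being connected, it lies in \emph{exactly one} $B_j$; and since consecutive factors in $\Gamma_i$ overlap while the last touches $S_i\subset B_i$, every non-first element of $\Gamma_i$ lies in $B_i$. Once this confinement is stated, the shared prefix trivially has length one: if $X\in\Gamma_1\cap\cdots\cap\Gamma_k$ were a non-first element of $\Gamma_1$ it would be contained in $B_1$, yet as an element of $\Gamma_2$ it must either touch $R$ (first element) or lie in $B_2$ (non-first), contradicting disjointness either way; hence $X$ is the first element of each $\Gamma_i$, which forces the singleton and immediately gives $\Gamma_i\setminus\{X\}\subset B_i$.

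Two specific weaknesses in your sketch. First, you attribute the collapse to the $\chi$-separation of the boxes plus connectedness; but connectedness alone does not stop a factor from reaching several widely-separated regions --- it can do so by passing through $R$. What forbids this for a non-first factor is precisely that it cannot touch $R$. So the relevant hypothesis for this proposition is disjointness of the $B_i$ and the fact that $R$ separates them in the graph, not the quantitative $\chi$-separation (that enters only later, in Lemma~\ref{lem:coupled_class_exp_bound}, to produce exponential suppression). Second, your contradiction argument that ``if $X$ did not intersect some $B_i$, the first post-$X$ factor would bridge into $B_i$ and reappear in other paths'' doesn't follow and is unneeded: if $|\Gamma_i|=1$ then $X$ touches $S_i\subset B_i$ directly, and otherwise $X$ overlaps the second element of $\Gamma_i$, which the confinement already places inside $B_i$, so $X\cap B_i\neq\emptyset$ with no contradiction required.
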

\begin{proof}
By the algorithm used to construct causal trees, the first element in each irreducible path is the only one to intersect $R$. Since $X$ must be connected and couples the regions, it must intersect with $R$. Therefore, removing $X$ from $\Gamma_i$ creates a path that lies entirely within $B_i$.
\end{proof}
\begin{defn}
We call a connected set of boxes a \textit{connected cluster}. For any $k \leq m$, a connected cluster containing $B_1, \dots, B_k$ which has minimal volume is called a \textit{minimal cluster} corresponding to $B_1, \dots, B_k$. An example is illustrated in Fig.~\ref{fig:fig_commutator}.
\end{defn}
We then proceed by induction on $m$. If a number of regions are coupled together as shown in Fig.~\ref{fig:fig_commutator} by a long-range term in a particular equivalence class, then, by applying Cor.~\ref{cor:decoupling corollary}, we can factorize the system into two disjoint subsets of smaller size, which will establish the recursion.
\begin{figure}[t]
    \centering
    \includegraphics[width=0.8\linewidth]{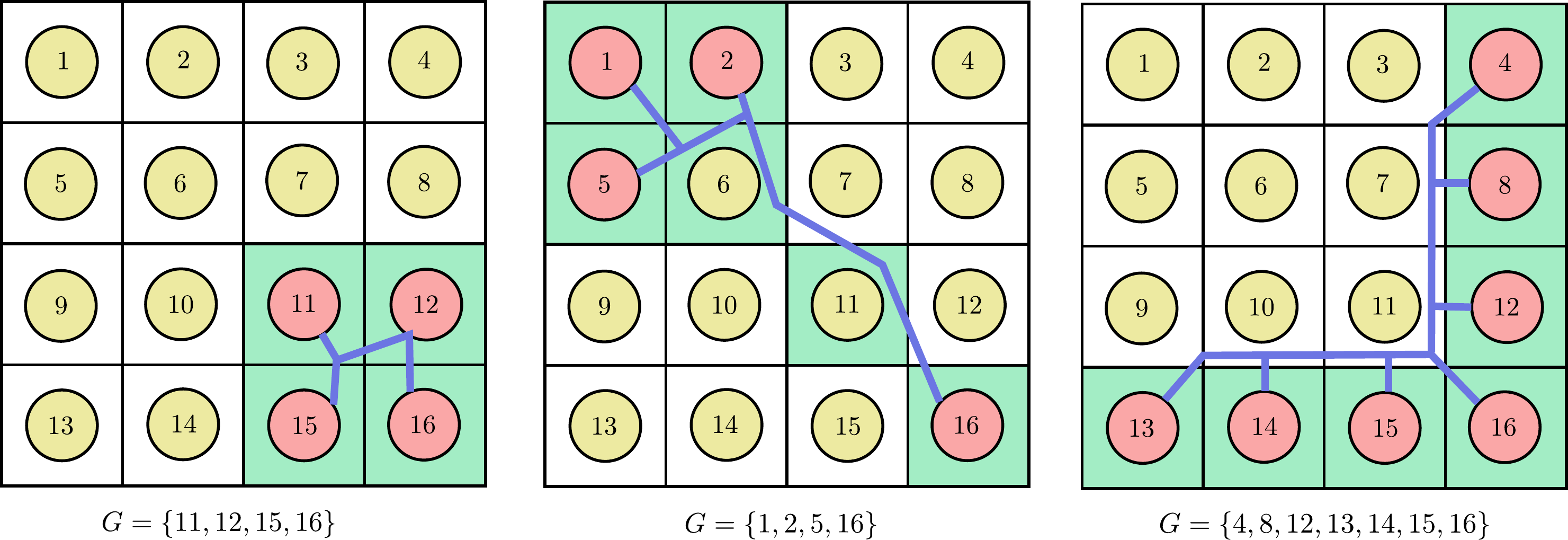}
    \caption{Here $\mathbb R^2$ is covered with Euclidean boxes, and at most one region of interest $B_i$ is situated within each box with a distance at most $\chi$ from the edge of the box. Subset $G$ is a subset of the regions representing the ways region $16$ can be coupled by a single term in a given irreducible path to the rest of the system, as shown in red. This coupling term is illustrated in blue. Since the coupling term is connected, it must pass through boxes that form a connected cluster; $G$ itself does not need to be connected. The green boxes show a minimal connected cluster containing $G$, which captures the exponential suppression of the terms coupling together the regions. We denote the number of green boxes by $M_{\mathrm{min}}(G)$. Our formalism accounts for terms like the one shown on the left that couple nearby balls as well as ones like the two shown on the right, which couple regions far away.}
    \label{fig:fig_commutator}
\end{figure}
We first derive the contribution from an equivalence class which couples $k$ regions using techniques from the previous section: 
\begin{lem}
\label{lem:coupled_class_exp_bound}
Suppose that $\mathcal S_c$ is the set of equivalence classes coupling $B_1, B_2, \dots, B_k$ with minimal cluster of order $M_{\mathrm{min}}$ and irreducible skeleton of length $l$. 
Then we can carry out the weighted sum over $\mathcal S_c$ with the bound
\begin{align}
\sum_{l=1}^{\infty}\sum_{\substack{\Lambda \in \mathcal S_c \\ |\Lambda| = l}}\frac{(2t)^l}{l!}w(\Lambda) \leq (\mu \LRvel t)\mathrm{e}^{-\mu \chi M_{\mathrm{min}}}  \prod_i \mathrm{e}^{\kappa}b|\partial B_i||\partial S_i|\mathrm{e}^{\mu(\LRvel t- \mathsf d(S_i, B_i))}.
\end{align}

\end{lem}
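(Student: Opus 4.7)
The plan is to generalize the argument from Proposition~\ref{prop:double_commutator_bound} to $k$ coupled regions. First I would invoke the preceding Proposition, which tells us that every causal tree in $\mathcal{S}_c$ has a single factor $X$ appearing as the first element of all $k$ irreducible paths $\Gamma_1, \ldots, \Gamma_k$: $X$ intersects $R$ and all of $B_1, \ldots, B_k$, and each $\Gamma_i \setminus \{X\}$ lies entirely inside $B_i$. I can therefore decompose the irreducible skeleton as $X$ together with $k$ disjoint sub-paths, one per $B_i$, and choose an anchor point $u_i \in X \cap \partial B_i$ linking $X$ to the $B_i$ sub-path. This is the natural generalization of the $P = X \cup Y \cup Z$ decomposition used in the $m = 2$ proof.

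Next I would bound the weighted sum over $X$. Since $X$ is a connected subset of $V$ meeting every box of a cluster of size at least $M_{\min}$, and the distance between boxes in the tiling is at least $\chi$, one has $|X| \gtrsim \chi M_{\min}$. Combining the exponential decay $\|H_X\| \leq h \mathrm{e}^{-\kappa |X|}$ with the cluster-counting bound from Proposition~\ref{prop:clustercounting}, summing over connected $X$ that contain the fixed anchor $u_1$ and thread through the correct minimal cluster yields a bound of the form $h' \mathrm{e}^{-\mu \chi M_{\min}}$, after absorbing a power-law correction as in Lemma~\ref{lem:factorsum}. This plays the role of the $\mathrm{e}^{-2 \mu \chi}$ factor in the two-ball case.

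Then for each $i$, I would bound the sum over sub-paths from $u_i$ to $S_i$ of length $l_i + 1$ using the single-path bound \eqref{eq:single_path_bound}, producing a factor $(h' K)^{l_i + 1} Q_i / K$ with $Q_i = |\partial B_i| |\partial S_i|\, \mathrm{e}^{-\mu \mathsf{d}(S_i, B_i)} / \mathsf{d}(S_i, B_i)^\alpha$. The multinomial coefficient $\frac{(l-1)!}{l_1! \cdots l_k!}$ coming from interleaving the sub-path elements while preserving their internal ordering, combined with $\frac{(2t)^l}{l!}$ and the identity $l = 1 + \sum_i l_i$, factorizes (using $1/l \leq 1$) into $k$ independent exponentials $\prod_i \mathrm{e}^{2 h' K t}$ along with an overall prefactor $2t$, exactly as in the $m = 2$ calculation. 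Gathering the $\mathrm{e}^{\kappa}$ constants that arise from stripping each anchor $u_i$ out of $X$, together with the $Q_i$ factors and one $\mathrm{e}^{\mu \LRvel t}$ per region, reproduces the advertised bound with the prefactor $\mu \LRvel t$.

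The main obstacle is the careful bookkeeping in step two: the set $X$ couples $k$ boxes, but unlike the $m = 2$ case where the single pairwise distance $\mathsf{d}(u, v) \geq 2\chi$ in Lemma~\ref{lem:factorsum} suffices, one now needs a multi-anchor generalization which exploits the geometric fact that $|X|$ grows with the total extent of the minimal cluster rather than just a pairwise separation. This is what makes $M_{\min}$ the correct exponent, and it ensures that the exponential suppression $\mathrm{e}^{-\mu \chi M_{\min}}$ dominates the combinatorial growth $(\Delta \mathrm{e})^{|X|}$ in the number of such sets --- which is exactly where the restriction $\kappa > 1 + \log \Delta$ is invoked to leave a strictly positive effective decay rate $\mu$. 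Once this exponential control over the coupling factor $X$ is established, the remaining sums are a direct extension of those in Proposition~\ref{prop:double_commutator_bound}.
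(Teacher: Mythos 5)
Your proposal follows the same approach as the paper's proof: decompose the irreducible skeleton into the common coupling factor $X$ together with $k$ sub-paths anchored at $u_i\in\partial B_i$, bound the weight of $X$ using the minimal-cluster estimate $|X|\gtrsim\chi M_{\min}$ (absorbing power-law corrections via the reproducing kernel), bound the sub-paths by the single-commutator machinery, and then carry out the re-summation over $l=1+\sum_i l_i$ with the multinomial interleaving coefficient exactly as in the $m=2$ case. You also correctly identify the one genuinely delicate point — the need for a multi-anchor generalization of Lemma~\ref{lem:factorsum} when summing over connected $X$ containing all $k$ anchors — which the paper itself invokes only implicitly.
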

\begin{proof}
Fix an equivalence class $\mathcal S_c$ that couples $B_1, \dots, B_k$. As in the proof of Prop.~\ref{prop:double_commutator_bound}, we can bound the sum by
\begin{align}
\sum_{\substack{\Lambda \in \mathcal S_c \\ |\Lambda| = l}}w(\Lambda) &\leq \sum_{\sum_k l_k = l-1}\frac{(l-1)!}{l_1! \cdots l_k!}\sum_{X \cap B_1 \dots X\cap B_k \neq \emptyset}\Vert H_X\Vert\sum_{\vec U_1 \in A(X \to S_1;l_1)}W(\vec U_1) \dots \sum_{\vec U_k \in A(X \to S_k; l_k)} W(\vec U_k) \notag\\
&\leq \sum_{\sum_k l_k = l-1}\frac{(l-1)!}{l_1! \cdots l_k!}\sum_{u_1 \in \partial B_1 \dots u_k \in \partial B_k}\sum_{X \ni u_1, \dots, u_k}W(X \backslash \{u_1, \dots, u_k\})\notag\\
&\hspace{2cm}\times\sum_{\vec U_1 \in A(\{u_1\} \to S_1;l_1)}W(\vec U_1) \cdots \sum_{\vec U_k \in A(\{u_k\} \to S_k; l_k)} W(\vec U_k)\notag \\
&\leq h' \qty(\frac{h'}{h})^k \mathrm{e}^{k\kappa}\mathrm{e}^{-\mu M_{\mathrm{min}}\chi}Q_1 \cdots Q_k\sum_{\sum_i l_i = l-1}\frac{(l-1)!}{l_1! \cdots l_k!}(h'K)^{l-1}
\end{align}
We have skipped a few steps because the manipulations are the same as in Proposition~\ref{prop:double_commutator_bound}. The significant difference is that each path needs to pass through a connected cluster, and the minimal cluster captures the number of boxes that it must either pass through or connect to the ball at the center. Therefore if the minimal cluster is of size $M_{\text{min}}$, the weight of this path is bounded by $W(X \backslash \{u_1, \dots, u_k\}) \leq h\mathrm{e}^{k\kappa}\mathrm{e}^{-\mu \kappa M_{\text{min}}}$, from which the bound on the summation of all possible subsets containing $u_1 \in \partial B_1, \dots, u_k \in \partial B_k$ follows as in Lemma~\ref{lem:factorsum}.
With this, we have
\begin{align}
\sum_{l=1}^{\infty}\sum_{\substack{\Lambda \in \mathcal S \\ |\Lambda| = l}}\frac{(2t)^l}{l!}w(\Lambda) 
&\leq 
h'\mathrm{e}^{-\mu \chi M_{\mathrm{min}}}\mathrm{e}^{k\kappa}\qty(\prod_i b|\partial B_i||\partial S_i|e^{-\mu\mathsf d(S_i, B_i)})\sum_{l=1}^{\infty} \frac{(2t)^{l}}{l!}\qty(\sum_{l_1 + \dots + l_k = l-1}\frac{(l-1)!}{l_1!l_2! \cdots l_l!}\prod_i(h'K)^{l_i}).
\label{eq:timebound}
\end{align}
Then we can bound the compound sum with a product:
\begin{align}
\sum_{l=1}^\infty\frac{(2t)^l}{l!}\sum_{l_1 + \dots + l_k = l-1}\frac{(l-1)!}{l_1!l_2! \cdots l_l!}\prod_i(h'K)^{l_i} &=
2th'\sum_{l_1 + \dots + l_k = l-1} \frac{(2t)^{\sum_{i}l_i}}{l_1! \cdots l_k!}\frac{1}{1+\sum_{i}l_i}\prod_i(h'K)^{l_i} \notag \\
&\leq \mu \LRvel t \qty(\sum_{l = 0}^{\infty}\frac{(2h'Kt)^{l}}{l!})^k \notag \\
&\leq \mu \LRvel t \mathrm{e}^{k\mu \LRvel t}.
\end{align}
Inserting this into \eqref{eq:timebound} gives the desired result.
\end{proof}

\begin{thm}\label{thm:exponential_H}
Consider a quasi-local interaction defined on a graph in $d$ dimensions. Suppose that we tile the plane with Euclidean boxes, where each box has $3^d-1$ adjacent neighbors. Assume that $B_1, \ldots ,B_m$ are regions, each contained within a separate box such that the distance between $B_i$ and any other box is at least $\chi$, where $\mu \chi > \max(\log(2)+d\log(3)+2, \kappa)$, and $S_i\subset B_i$. Then the nested commutator bound
\begin{align}\label{eq:exponential_H_C<}
\frac{1}{2^m}\Vert [\mathcal O_m, [\dots, [\mathcal O_1, A(t)]\dots ]]\Vert \leq 
\mu v t (\mathrm{e}^{-\gamma + \kappa}+\mu v t)^{m-1}\prod_{i=1}^m |\partial B_i||\partial S_i|b \mathrm{e}^{\mu(\LRvel t- \mathsf d(S_i, R))}
\end{align}
holds,
where $A$ is supported in $R$ which is the complimentary region to $\bigcup_i B_i$, $\mathcal O_i$ is supported within $S_i$, $\gamma = \mu \chi - \log(2\e3^d) > 1$, and $v = 4\e3^d\LRvel$.
\end{thm}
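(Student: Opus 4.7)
The plan is to proceed by induction on $m$, with \cref{prop:quasilocalbound} providing the base case $m=1$ (after applying $\mathrm{e}^{\mu\LRvel t}-1\le \mu\LRvel t\,\mathrm{e}^{\mu\LRvel t}$ to extract the prefactor $\mu\LRvel t$) and \cref{prop:double_commutator_bound} serving as a template for the coupling analysis. The central observation, extending \cref{lem:pathbreaking}, is that every equivalence class of causal forests induces a unique partition $\mathcal{P}=\{G_1,\dots,G_p\}$ of $\{1,\dots,m\}$: two indices lie in the same block iff their irreducible paths $\Gamma_i,\Gamma_j$ share a common factor, which must necessarily be the unique first factor of both paths (the only element that touches $R$). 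Because distinct blocks involve disjoint coupling factors, the contribution of each equivalence class factorizes over blocks, and summing over equivalence classes amounts to independently summing over each block.

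At the inductive step I single out $B_m$ and sum over the block $G\ni m$, with $G\setminus\{m\}\subseteq\{1,\dots,m-1\}$. For $G=\{m\}$ (the decoupled case), \cref{cor:decoupling corollary} together with \cref{prop:quasilocalbound} contributes a single-commutator factor $b|\partial B_m||\partial S_m|\,\mu\LRvel t\,\mathrm{e}^{\mu(\LRvel t-\mathsf d(S_m,R))}$, while the remaining $m-1$ regions are handled directly by the inductive hypothesis. For $|G|\ge 2$ (a nontrivial coupling), \cref{lem:coupled_class_exp_bound} supplies the coupled-block contribution $(\mu\LRvel t)\,\mathrm{e}^{-\mu\chi M_{\min}(G)}\prod_{i\in G}\mathrm{e}^{\kappa}b|\partial B_i||\partial S_i|\mathrm{e}^{\mu(\LRvel t-\mathsf d(S_i,R))}$, and the regions in $\{1,\dots,m-1\}\setminus G$ are handled by the inductive hypothesis on $m-|G|$ regions. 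Summing over the possible minimal clusters of boxes that contain $B_m$ and the regions in $G$, I apply \cref{prop:clustercounting} to the coarse-grained box lattice (whose degree is at most $3^d-1$) to bound the number of connected clusters of size $M$ by $(\mathrm{e}\,3^d)^M$; combined with the $\mathrm{e}^{-\mu\chi M}$ suppression this yields a convergent geometric series of order $\mathrm{e}^{-\gamma+\kappa}$ per coupling event whenever $\mu\chi>\log 2+d\log 3+2$. Assembling the decoupled and coupled contributions produces the recursion $F_m\le(\mu\LRvel t+\mathrm{e}^{-\gamma+\kappa})F_{m-1}$, which iterates to the stated $(m-1)$th power after absorbing residual constants into the redefinition $v=4\mathrm{e}\,3^d\LRvel$.

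The main obstacle is the combinatorial bookkeeping for nonempty $G$: naively summing over $G\subseteq\{1,\dots,m-1\}$ of size $k-1$ would introduce $\binom{m-1}{k-1}$ terms, which could in principle spoil the clean $(m-1)$th-power exponent. The resolution is that the exponential suppression of a coupling event depends only on the set of boxes visited by the shared leading factor, not directly on which $B_i$'s are coupled; the set $G$ is in fact determined by which boxes the cluster passes through. This lets me reorganize the sum over $G$ into a single sum over connected minimal clusters in the box lattice, which is exactly the sum that \cref{prop:clustercounting} controls. Handling this combinatorial reduction cleanly, and verifying that the resulting recursion telescopes to the advertised form despite the various prefactors from $\mathrm{e}^{\kappa}$, $|\partial B_i|$, and the cluster multiplicity, is where most of the work of the proof resides.
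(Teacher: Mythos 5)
Your high-level strategy matches the paper's proof: both proceed by induction on $m$, singling out the coupling block $G$ that contains the last region and treating the remainder via the inductive hypothesis, using \cref{lem:coupled_class_exp_bound} for coupled blocks and a cluster-counting argument on the coarse-grained box lattice (via \cref{prop:clustercounting}) to sum over the ways a coupling factor can reach across boxes.

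However, there is a genuine gap in the claim that ``assembling the decoupled and coupled contributions produces the recursion $F_m \le (\mu\LRvel t + \mathrm{e}^{-\gamma+\kappa})F_{m-1}$, which iterates to the stated $(m-1)$th power.'' That is \emph{not} the recursion the decomposition actually produces. Because a coupling block $G\ni m$ may have any size $k=1,\dots,m$, and the remaining $m-k$ regions are handled by the inductive hypothesis, the inequality you get after cluster-counting is a \emph{convolution}, schematically $Q(m) \lesssim \mu vt\sum_{k=0}^{m-1}\mathrm{e}^{-k(\gamma-\kappa)}Q(m-1-k)$, not a one-step relation $Q(m)\le C\,Q(m-1)$. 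To show that the ansatz $Q(m)=\mu vt(\mu vt + \mathrm{e}^{-\gamma+\kappa})^{m-1}$ is consistent with this convolution requires a non-trivial binomial rearrangement; the paper verifies it with the hockey-stick identity $\sum_{m=k}^n\binom{m}{k}=\binom{n+1}{k+1}$. Your sketch as written would only recover the decoupled ($k=1$) term, and the coupled terms with $k\ge 2$ feed in $Q(m-k)$, not $Q(m-1)$; these do not trivially ``telescope.'' A secondary imprecision: you say the coupled index set $G$ ``is determined by which boxes the cluster passes through,'' but it is not---for a minimal cluster of $M$ boxes there are $\binom{M}{k}$ candidate subsets $G$ of size $k$, and the paper must absorb this $\binom{M}{k}\le 2^M$ overcount into the definition $\gamma=\mu\chi-\log(2\mathrm{e}3^d)$ for the geometric series to converge. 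So the combinatorial reduction you describe is more delicate than ``$G$ is determined by the cluster''; the overcounting factor is precisely where the $\log 2$ in the hypothesis $\mu\chi>\log 2 + d\log 3 + 2$ gets used.
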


\begin{proof}
Let $B_1, \dots, B_m$ be defined as in the theorem statement. We then induct on $m$. The base case was proved in Sec.~\ref{sec:single_commutator_exp_bound}. We then consider an additional region $B_{m+1}$.
We sort the equivalence classes by considering all the possible ways $B_{m+1}$ can be coupled to $B_{1}, \dots, B_{m}$. For any $k \leq m$, let $\Lambda(\{i_1, \dots, i_k\})$ denote the set of irreducible skeletons connecting $S_{i_1}, \dots, S_{i_k}$ to $R$ and let $\Gamma(\{i_1, \dots, i_k\})$ denote the irreducible skeletons coupling $i_1, \dots, i_k$. As we have shown in Eq.~\eqref{eq:productbound}, we can bound the nested commutator with 
\begin{align}
\frac{1}{2^m}\Vert[\mathcal O_{m+1},[\dots,[\mathcal O_1, A(t)]]\dots]\Vert &\leq \sum_{\Lambda \in \Lambda(1, \dots, m)}\frac{(2|t|)^{|\Lambda|}}{|\Lambda|!} w(\Lambda) \notag\\
&\leq \sum_{\substack{G \subseteq \{1, \dots, m, m+1\} \\ m+1 \in G}}\qty(\sum_{\Gamma \in \Gamma(G)}\frac{(2|t|)^{|\Gamma|}}{|\Gamma|!}w(\Gamma))\qty(\sum_{\Lambda \in \Lambda(G^c)}\frac{(2|t|)^{|\Lambda|}}{|\Lambda|!}w(\Lambda)).
\end{align}
The decomposition above is illustrated in Fig.~\ref{fig:fig_commutator}, with $G$ represented by the red circles (all connected to $B_{m+1}$) and $G^c$ being the yellow circles (not connected to $B_{m+1}$, not necessarily all connected to each other). Let $C(G) \equiv \sum_{\Lambda \in \Lambda(G)}\frac{(2|t|)^{|\Lambda|}}{|\Lambda|!}w(\Lambda)$ with $C(\emptyset) = 1$. Applying Lemma~\ref{lem:coupled_class_exp_bound}, we have a bound for the first term:
\begin{align}
C(\{1, \dots, m, m+1\}) \leq \sum_{\substack{G \subseteq \{1, \dots, m, m+1\} \\ m+1 \in G}}\mu \LRvel t \mathrm{e}^{-\mu \chi(M_{\mathrm{min}}(G)-1)}C(G^c)\prod_{i \in G}|\partial B_i||\partial S_i|b\mathrm{e}^{\kappa (|G|-1)}\exp(\mu(\LRvel t - r_i))
\end{align}
with $r_i = \mathsf d(S_i, R)$ for short. Since we assumed that $\mu \chi > \kappa$, and $|G| \leq M_{\mathrm{min}}(G)$ we relaxed $\mathrm{e}^{\kappa|G|}\mathrm{e}^{-\mu \chi M_{\mathrm{min}}(G)}$ to $\mathrm{e}^{\kappa(|G|-1)}\mathrm{e}^{-\mu \chi (M_{\mathrm{min}}(G) - 1)}$. This choice greatly simplifies the induction because it makes Lemma~\ref{lem:coupled_class_exp_bound} consistent with the $m=1$ case that we proved in Sec.~\ref{sec:single_commutator_exp_bound}, thereby serving as a base case.
Then we introduce $D(G)$ such that 
\begin{equation}
C(G) = \prod_{i\in G} b |\partial B_i||\partial S_i|\exp(\mu(\LRvel t - r_i))D(G).
\end{equation}
Then $D(G)$ satisfies the relationship
\begin{align}\label{ddef}
D(\{1, \dots, m, m+1\}) \leq (\mu \LRvel t)\sum_{\substack{G \subseteq \{1, \dots, m,m+1\} \\ m+1 \in G}}\mathrm{e}^{(|G|-1)\kappa}\mathrm{e}^{-\mu \chi (M_{\mathrm{min}}(G)-1)}D(G^c).
\end{align}
Now let \begin{equation}
    Q(k) = \sup_{\substack{G \subseteq \{1, \dots, m, m+1\} \\ |G| = k}}D(G).
\end{equation} 
Note that $Q(m+1)=D(\{1,\dots,m,m+1\})$ and $Q(0) = 1$. This gives us the simpler but weaker bound
\begin{align}\label{qdef}
 Q(m+1) \leq (\mu \LRvel t)\sum_{\substack{G \subseteq \{1, \dots, m, m+1\} \\ m+1 \in G}}\mathrm{e}^{(|G|-1)\kappa}\mathrm{e}^{-\mu \chi (M_{\mathrm{min}}(G)-1)}Q(m+1-|G|).
\end{align}
We can then bound the sum above by summing instead over connected clusters of boxes and then all possible subsets of those clusters:
\begin{align}
\sum_{\substack{G \subseteq \{1, \dots, m, m+1\} \\ G \ni m+1}}&\mathrm{e}^{(|G|-1)\kappa}\mathrm{e}^{-\mu \chi (M_{\mathrm{min}}(G)-1)}Q(m+1-|G|) \notag\\
&= \mathrm{e}^{\mu \chi}\sum_{k=1}^{m+1}\mathrm{e}^{(k-1)\kappa}\sum_{M = k}^\infty\sum_{\substack{\text{connected $X \ni m+1$} \\ |X| = M}}\sum_{\substack{G \subseteq X \\|G| = k}}\mathrm{e}^{-\mu \chi M}Q(m+1-k) \notag \\
&= \mathrm{e}^{\mu \chi}\sum_{k=1}^{m+1}\mathrm{e}^{(k-1)\kappa}\sum_{M = k}^\infty(\e 3^d)^M\binom{M}{k}\mathrm{e}^{-\mu \chi M}Q(m+1-k) \notag \\
&\leq \mathrm{e}^{\mu \chi}\sum_{k=1}^{m+1}\mathrm{e}^{(k-1)\kappa}\sum_{M = k}^\infty \mathrm{e}^{-M\gamma}Q(m+1-k) \notag \\
&\leq 4\e 3^d\sum_{k=0}^{m}\mathrm{e}^{-k(\gamma-\kappa)}Q(m-k).
\end{align}
Here we set $\gamma = \mu \chi - \log(2\e 3^d)$ and require that $\gamma > 1$, which we can always ensure by choosing $\chi$ large enough. To obtain the third line, we used Prop.~\ref{prop:clustercounting} to bound the number of clusters, and in the fourth line we bounded $\binom{M}{k} \leq 2^M$. With $Q(0)= 1$ and $Q(1) \leq \mu \LRvel t \leq \mu v t$ already proven, for our inductive hypothesis we assume that, for $0 < k \leq m$,
\begin{align}
Q(k) \leq \mu vt (\mu vt + \mathrm{e}^{-(\gamma - \kappa)})^{k-1}.
\end{align}

For convenience, let $t' \equiv \mu v t$ and $a \equiv \mathrm{e}^{-(\gamma - \kappa)}$. Then
\begin{align}
Q(m+1) &\leq t'a^m+t' \sum_{k=1}^{m}a^{m-k}t'(t' + a)^{k-1} = t'a^m + \sum_{k=1}^{m}a^{m-k}\sum_{j=0}^{k-1}\binom{k-1}{j}t'^{j+2}a^{k-1-j} \notag \\
&= t'a^m+\sum_{j=0}^{m-1}a^{m-j-1}t'^{j+2}\sum_{k=j+1}^{m}\binom{k-1}{j} =t'a^m+\sum_{j=0}^{m-1}a^{m-j-1}t'^{j+2}\binom{m}{m-j-1} \notag \\
&= t'\sum_{j=0}^{m}a^{m-j}t'^{j}\binom{m}{m-j} = t'(t' + a)^m.
\end{align}
This completes the induction. We used the ``hockey stick" relation $\sum_{m=k}^n \binom{m}{k} = \binom{n+1}{k+1}$ for the binomial coefficients. Unraveling the definition of $Q$,
\begin{align}
\frac{1}{2^m}\Vert [\mathcal O_m,[\dots, [\mathcal O_1, A(t)]]\dots]\Vert &\leq 
Q(m)\prod_{i}b|\partial B_i||\partial S_i| \exp(\mu(\LRvel t - r_i)) \notag\\ &\leq 
\mu v t (\mu v t + \mathrm{e}^{-\gamma + \kappa})^{m-1}\prod_{i}b |\partial B_i||\partial S_i| \exp(\mu(\LRvel t - r_i))
\end{align}
This is the desired result.
\end{proof}
We remark on some of the properties of the above formula. Firstly, we see that, unlike the strictly local case, this bound is $\mathrm{\Theta}(t)$ for small $t$. 
This reflects the fact that there are now terms at first order that couple the $m$ regions to $R$. The term in the sum proportional to $t$ has a prefactor of $\mathrm{e}^{-(m-1)(\gamma-\kappa)}$. The $m-1$ instead of $m$ comes from the fact that we needed to relax the suppression in $\chi$ slightly to obtain this simple formula, but this reflects the fact that any term which directly couples all $m$ regions is exponentially suppressed in $m$. For example, expanding $Q(3) = t'^3 + 2t'^2 a + a^2t'$, we can identify the first term as decoupled, the second as originating from a coupling of two regions with the binomial coefficient counting the number of such couplings, and the last term representing a coupling between all three. Next, when $\mu\chi \gg \kappa, \log(2e3^d)$, the prefactor approaches $(\mu v t)^m$. This can be seen as the decoupled limit, and reflects the fact that one has to go to $m^{\text{th}}$ order to connect each $S_i$ to $R$, because irreducible skeletons $\Lambda$ with $|\Lambda| < m$ are exponentially suppressed. This also happens when we take $\kappa \to 0$, and, since $\mu \sim \kappa$, the requirement that $\mu \chi > \kappa$ is well behaved in this limit for fixed $\chi$. Lastly, we note that this bound is uniform in the number of sites in the system $|V|$, which permits us to pass to the thermodynamic limit. 

Although the above bound is stronger, we can also obtain a simplified result that more closely resembles standard Lieb-Robinson bounds.
\begin{cor}[Thm.~\ref{thm:exponential_H} simplified]
\label{thm:exponential_H_simplified}
Under the same assumptions as Thm.~\ref{thm:exponential_H}, the following bound
\begin{align}
\frac{1}{2^m}\Vert [\mathcal O_m, [\dots, [\mathcal O_1, A(t)]\dots ]]\Vert \leq 
\prod_{i=1}^m b|\partial B_i||\partial S_i| \mathrm{e}^{\mu(v t- \mathsf d(S_i, R))}
\end{align}
holds, 
where $A$ is supported in $R$ which is the complimentary region to $\bigcup_i B_i$, $\mathcal O_i$ is supported within $S_i$, $\gamma = \mu \chi - \log(2\e 3^d) > 1$, and $v = 8\e3^d\LRvel$.
\end{cor}

\begin{proof}
We note that $\e^{-\gamma + \kappa} < 1$, so we can apply the bound $(\e^{-\gamma + \kappa} + \mu v t)^{m-1} < \e^{(m-1)\mu v t}$ to Thm.~\ref{thm:exponential_H}. Using $\mu vt < \e^{\mu v t}$ and $v = 4\e 3^d \LRvel > \LRvel$ (with $v$ defined as in Thm.~\ref{thm:exponential_H}) gives the simplified bound.
\end{proof}

The conclusion of Cor.~\ref{cor:volumescaling} also holds for quasi-local interactions:
\begin{cor}
\label{cor:volumescaling_exp}
Suppose that $H$ is a quasi-local Hamiltonian. Let $B_R$ denote a metric ball of radius $R$. There exist constants $\gamma, c_{\rm{LR}} > 0$ such that we can choose $v_1, \dots, v_m \in B_R$ (where $m$ depends on $R$) for which
\begin{align}
C_{v_1, \dots, v_m}^{B_R^c}(t)  \leq c_{\rm{LR}}\exp(-\gamma\frac{(R-\LRvel t)^d}{(\LRvel t)^{d-1}})
\end{align}
for any $\LRvel t > 1$.
\end{cor}

\begin{proof}
First, consider the interaction graph embedded in $\mathbb R^d$. Let $\chi \equiv \max(\kappa, \log[2\e^23^d])/\mu$. There exists a constant $B$ such that, for $\xi > 1$, we can tile $\mathbb R^d$ with Euclidean boxes where each box contains a ball of radius $\xi + \chi$ and is contained within a ball of radius $B \xi$. Let $C_1$ be a constant such that $|\partial B_{\xi}| < (C_1 \xi)^{d-1}$ for all $\xi > 1$, and let $\xi \equiv \alpha v t$ for some $\alpha > 1$.  As in Cor.~\ref{cor:volumescaling}, there is a constant $C_2$ such that, for each $\alpha > 1$, we can find $K \gg 1$ such that for all $R > K \LRvel t$ we can find $m > C_2R^d/\xi^d$ of these boxes lying entirely within $B_R$. Then from Cor.~\ref{thm:exponential_H_simplified}, we have
\begin{align}
C_{\vec v}^R \leq \qty((C_1 \xi)^{d-1} \e^{\mu(2vt - \xi)})^{C_2\frac{R^d}{\xi^d}}
\end{align}
 The bound above can be re-written as
\begin{align}
 \qty((C_1 \xi)^{d-1} \e^{\mu(2vt - \xi)})^{C_2\frac{R^d}{\xi^d}} &=
\qty((C_1 \alpha vt)^{\frac{d-1}{\alpha vt}} \mathrm{e}^{\mu\frac{(2 - \alpha)}{\alpha}})^{C_2\frac{R^d}{(\alpha v t)^{d-1}}}
\end{align}
Taking the logarithm of the term in parentheses,
\begin{align}
\log((C_1 \alpha vt)^{(d-1)/\alpha vt} \mathrm{e}^{\mu(2 - \alpha)/\alpha}) \leq \frac{1}{\alpha}\qty[\frac{d-1}{\mathrm e} + (d-1)\log(C_1\alpha) + 2\mu]-\mu \equiv -\beta
\end{align}
Since the first term on the right-hand side vanishes as $\alpha \to \infty$, it is clear that we can choose $\alpha$ large enough such that $\beta > \frac{\mu}{2}$. 
Thus the bound becomes
\begin{align}
C_{\vec v}^R \leq \mathrm{e}^{-\frac{C_2\mu}{2\alpha^{d-1}}\frac{R^d}{(v t)^{d-1}}}
\end{align}
With this demonstrated for $R > K \LRvel t$, the rest of the proof is the same as in Cor.~\ref{cor:volumescaling}, save for the fact that we must choose a smaller $\mu$ to achieve the same area prefactor in Prop.~\ref{prop:quasilocalbound} as in Thm.~\ref{thm:standardLR}.
\end{proof}

\section{Accuracy of classical simulations of quantum dynamics\label{sec:simulation}}

In random unitary circuits \cite{nahum2018,keyserlingk2018}, the dynamics are restricted by a light cone with volume $\sim t^d$. The complexity of simulating these dynamics by truncating the operator outside of the light cone is exactly bounded by $\exp(t^d)$. This intuition breaks down for locally generated continuous-time evolution, where the light cone is not sharp.  As we discussed in the Introduction, one then needs to specify an error tolerance $\epsilon$ to which the simulation must be performed. Using classical Lieb-Robinson bounds, one expects that the truncation algorithm with tolerance $\epsilon$ will produce a quasi-polynomial error growth $\sim \exp(\log(\epsilon)^d)$, which originates from failing to control the magnitude of the operator tails outside the light cone. In Ref.~\cite{cluster_Alhambra23}, it was shown that, for product initial states and strictly local interactions, a simulation algorithm based on truncating the support of an operator to a large subset achieves the optimal error scaling $\epsilon^{-\exp(t)}$, which is polynomial in $\epsilon$, but seemingly suboptimal in $t$ in finite dimensions. 

The bound achieved by Wild et al. in \cite{cluster_Alhambra23} draws on previous cluster expansion techniques from statistical physics. Such approximation methods involve an asymptotic expansion in the local parameters of the Hamiltonian, which work as long as the Hamiltonian is local on a bounded-degree graph. 
By contrast, we use the machinery developed above to revisit classical simulation of quasi-local dynamics from the perspective of an expansion in \textit{time}, and show that our algorithm is able to optimally leverage spatial locality and the finite dimensionality of the lattice.

\subsection{Operator expansion with exponential-in-volume tails}
In order to bound the simulability of quantum systems with quasi-local interactions (which also includes strictly local interactions), we wish to have better control over the fraction of an operator which is supported on \emph{any} large set $S$ containing $\gg (v_{\mathrm{LR}}t)^d$ vertices.  We have already made progress along these lines.  In Corollary~\ref{cor:volumescaling} and Corollary~\ref{cor:volumescaling_exp}, we made a weaker statement about the tail of an operator supported on vertices within a \emph{specific} large set $S$.  What remains is to show that, even though there are $\sim \exp(|S|)$ possible choices of connected sets with the volume of $S$---a factor which is large enough to easily overwhelm the volume-law Lieb-Robinson bound in either corollary---the \emph{net} contribution of all such large operators is still exponentially small in volume.  This is achieved by the following theorem, where, for technical convenience, we restrict to dynamics on a square lattice.

\begin{thm}\label{thm:expansion}
Let $H$ be a quasi-local Hamiltonian on a $d$-dimensional square lattice $V$ of qudits, whose Lieb-Robinson bound is given by Proposition~\ref{prop:quasilocalbound} with parameters $\mu,v_{\rm LR}$. Consider a local operator $A$ with $\norm{A}=1$ acting on a connected region of size $\mathrm{O}(1)$, which is evolved to $A(t)$ at time $t$ by $H$. It can be expanded to connected subsets $S$ that contain the support of $A$: \begin{equation}
    A(t) = \sum_{S\subset V} \widetilde{A}(S;t).
\end{equation}
Here $\widetilde{A}(S;t)$ is supported in $S$,
such that all contributions from $|S|>M$ decay exponentially with the volume cutoff $M$: for any $M\ge 1$,
\begin{equation}
    \widetilde{A}(t) = \sum_{\substack{S\subset V:|S|\le M\\ \supp(A) \subseteq S}} \widetilde{A}(S;t),
\end{equation}
approximates $A(t)$ with error \begin{equation}\label{eq:A-tA_thm}
    \norm{A(t)-\widetilde{A}(t)} \le c_d \exp\qty[4\mu v_{\rm LR}t-c_d' \mu \frac{M}{(v_{\rm LR}t+c_{\rm box})^{d-1}}],
\end{equation}
where $c_d,c_d',c_{\rm box}$ are constants determined by $d,\mu,v_{\rm LR}$.
\end{thm}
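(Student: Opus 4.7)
The plan is to construct a cluster expansion of $A(t)$ indexed by connected subsets $S\subset V$ and then bound the tail $\sum_{|S|>M}\widetilde A(S;t)$ by combining the volume-law nested commutator bound (Corollary~\ref{cor:volumescaling_exp}) with standard cluster counting (Proposition~\ref{prop:clustercounting}). The key conceptual input is that the $\widetilde A(S;t)$ with large $|S|$ are exactly the objects the nested commutator bound was designed to control.

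First I would define $\widetilde A(S;t)$ via the Dyson expansion. Writing $A(t)=\sum_n\frac{(\mathrm it)^n}{n!}\sum_{X_1,\dots,X_n}\mathcal L_{X_n}\cdots\mathcal L_{X_1}|A)$, I assign to each nonvanishing term its support cluster $\operatorname{supp}(A)\cup X_1\cup\cdots\cup X_n$. Because every nonvanishing term is a causal tree, this cluster is connected and contains $\operatorname{supp}(A)$. Grouping by cluster yields $A(t)=\sum_{S}\widetilde A(S;t)$ with $S$ ranging over connected sets containing $\operatorname{supp}(A)$, and $\widetilde A(S;t)$ is supported in $S$ by construction. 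The truncated operator is then simply $\widetilde A(t)\equiv\sum_{|S|\le M}\widetilde A(S;t)$, so the error of interest equals $\sum_{|S|>M}\widetilde A(S;t)$.

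Second, I would bound each $\|\widetilde A(S;t)\|$ using the nested commutator machinery. For fixed large $S$ I would pack into $S$ a maximal collection of probe vertices $s_1,\dots,s_m$, pairwise separated by the distance $\chi$ required in Theorem~\ref{thm:exponential_H}; a standard volume packing for a connected subset of the $d$-dimensional lattice gives $m\gtrsim |S|/(v_{\mathrm{LR}}t+c_{\mathrm{box}})^d$ for an appropriate $c_{\mathrm{box}}$. Because every Dyson term contributing to $\widetilde A(S;t)$ has cluster equal to $S$, it in particular has cluster containing each $s_i$, so the same causal-forest/equivalence-class resummation used to prove Theorem~\ref{thm:exponential_H} bounds $\|\widetilde A(S;t)\|$ by the nested commutator bound for probes at $s_1,\dots,s_m$. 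Corollary~\ref{cor:volumescaling_exp} then gives $\|\widetilde A(S;t)\|\lesssim \exp\!\bigl(C\mu v_{\mathrm{LR}}t - c\,\mu |S|/(v_{\mathrm{LR}}t+c_{\mathrm{box}})^{d-1}\bigr)$, where the $\exp(C\mu v_{\mathrm{LR}}t)$ prefactor comes directly from the $\mu vt(\mu vt+\mathrm e^{-\gamma+\kappa})^{m-1}\le\mathrm e^{m\mu v t}$ factor appearing in Theorem~\ref{thm:exponential_H}.

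Third, I would combine the bounds combinatorially. By Proposition~\ref{prop:clustercounting} the number of connected $S\ni\operatorname{supp}(A)$ with $|S|=k$ is at most $(\mathrm e\Delta)^k$, so the triangle inequality gives $\|A(t)-\widetilde A(t)\|\le\sum_{k>M}(\mathrm e\Delta)^k\exp\!\bigl(C\mu v_{\mathrm{LR}}t-c\mu k/(v_{\mathrm{LR}}t+c_{\mathrm{box}})^{d-1}\bigr)$. Shrinking $c$ by a fixed factor to absorb the $(\mathrm e\Delta)^k=\mathrm e^{k\log(\mathrm e\Delta)}$ combinatorial growth, the geometric tail sums to a constant times $\exp(C\mu v_{\mathrm{LR}}t-c_d'\mu M/(v_{\mathrm{LR}}t+c_{\mathrm{box}})^{d-1})$, matching \eqref{eq:A-tA_thm}.

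The main obstacle is Step~2: identifying $\widetilde A(S;t)$ (defined by the combinatorial condition that the Dyson cluster is \emph{exactly} $S$) with a quantity the nested commutator bound can control. The cleanest way I see to do this is via the Haar-averaging projectors $\mathbb P_{s_i}$ sketched in the paper's commented-out passage: $\mathbb P_{s_1}\cdots\mathbb P_{s_m}A(t)$ equals (up to a known constant) the nested commutator of $A(t)$ with single-site probes at the $s_i$, so its norm inherits Corollary~\ref{cor:volumescaling_exp}, and a Möbius/inclusion-exclusion argument on the lattice of connected subsets relates $\sum_{|S|>M}\widetilde A(S;t)$ to a weighted sum of such projections over probe configurations that cannot fit in any cluster of size $\le M$. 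Executing this bookkeeping carefully, while avoiding double-counting of Dyson terms whose clusters happen to coincide for several probe choices, is the delicate part; the remaining geometric-packing and combinatorial-counting steps are routine.
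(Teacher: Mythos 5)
Your Step 3 contains the fatal gap, and it is instructive to see exactly where. After Step 2 your per-cluster bound is roughly $\norm{\widetilde{A}(S;t)}\lesssim \exp\bigl(\mathrm{O}(\mu v_{\rm LR}t)-c\,\mu\,|S|/(v_{\rm LR}t+c_{\rm box})^{d-1}\bigr)$, so the decay rate \emph{per site} is $c\mu/(v_{\rm LR}t+c_{\rm box})^{d-1}$, which shrinks like $t^{-(d-1)}$ for $d\ge2$. Against this you pay a \emph{site-level} cluster-counting factor $(\mathrm{e}\Delta)^k$ whose growth rate per site, $\log(\mathrm{e}\Delta)$, is an absolute constant. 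For the geometric tail $\sum_{k>M}(\mathrm{e}\Delta)^k\exp(-c\mu k/(v_{\rm LR}t+c_{\rm box})^{d-1})$ to converge you would need $c\mu/(v_{\rm LR}t+c_{\rm box})^{d-1}>\log(\mathrm{e}\Delta)$, which fails for every $t$ larger than some $t_0$ determined by $c,\mu,\Delta,d$. The remark that one can ``shrink $c$ by a fixed factor to absorb the combinatorial growth'' has the inequality backwards: shrinking $c$ makes the decay weaker, not stronger, and no choice of constants rescues the estimate when $(v_{\rm LR}t)^{d-1}\log(\mathrm{e}\Delta)>c\mu$. The paper's proof avoids this trap by a coarse-graining step you did not use: it first partitions the lattice into boxes of side $r=4(v_{\rm LR}t+c_{\rm box})$ and expands $A(t)$ over \emph{connected clusters of boxes} $S_{\rm b}$, proving (Lemma~\ref{lem:ASb<}) a per-box decay rate $\sim\mu r/5\cdot 3^{-d}$, which grows with $t$. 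The cluster counting then lives on the box graph, whose degree is $3^d-1$, giving a per-box combinatorial growth $\log(3^d\mathrm{e})$ that is $t$-independent. The inequality in \eqref{eq:rbox=t} is exactly the condition that per-box decay beats per-box entropy, and it can be guaranteed uniformly in $t$ by taking $c_{\rm box}$ large. This box-level organization is the essential new ingredient, and it is precisely the step your site-level decomposition is missing.

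A secondary concern is the identification in your Step~2, which you yourself flag as delicate. Defining $\widetilde{A}(S;t)$ by requiring the Dyson cluster to be \emph{exactly} $S$ and then controlling it via Haar-averaging projectors plus a M\"obius inversion over connected subsets is not worked out, and it is not obvious that the bookkeeping closes (the projectors $\mathbb{P}_{s_i}$ detect any support at $s_i$, not membership of the Dyson cluster in a fixed set $S$, and the sum over probe configurations will recount terms). The paper's route is simpler and more robust: after fixing well-separated boxes $S_1,\dots,S_M\subset S_{\rm b}$, every causal forest that contributes to $A(S_{\rm b};t)$ necessarily has an irreducible skeleton reaching each $S_m$, and the proof of Theorem~\ref{thm:exponential_H} bounds the contribution of \emph{each} irreducible skeleton individually by a positive quantity, so restricting to the subset of forests contained in $S_{\rm b}$ (equivalently, truncating $H$ to $S_{\rm b}$) can only decrease the bound. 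This gives $\norm{A(S_{\rm b};t)}\le$ (the nested-commutator bound) without any inclusion--exclusion. I would encourage you to reformulate your decomposition at the box scale, prove a box-level analogue of your Step~2 bound along these lines, and then your Step~3 goes through verbatim with $\Delta$ replaced by $3^d$ and the decay rate replaced by $\mu r$, which is exactly the paper's argument.
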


\begin{proof}
Similar to the previous section, we partition the lattice sites into non-overlapping boxes $V=\boxx_0\sqcup \boxx_1\sqcup\cdots,$ where each box $\boxx$ is a cube of length $r$ (thus containing $r^d$ sites), and operator $A$ is contained in one box $\boxx_0$. We choose $r$ to be \begin{equation}\label{eq:rbox=t}
    r=4(v_{\rm LR}t+c_{\rm box}) > \max\qty[d,\; \frac{45}{4\mu}3^d (\log 2+d\log 3)],
\end{equation}
where the inequality holds by choosing a sufficiently large $\mathrm{O}(1)$ constant $c_{\rm box}$. The boxes can be viewed as vertices of a coarse-grained $d$-dimensional square lattice, and form a graph $G_{\rm b}=(V_{\rm b}, E_{\rm b})$ where two boxes $\boxx,\boxx'\in V_{\rm b}$ share an edge in $E_{\rm b}$ if and only if they are neighbors, because $\mathsf{d}(\boxx,\boxx')\le d$.  Here $\mathsf{d}(\cdot,\cdot)$ is the distance function on the original lattice $V$. Note that here we do not work with factor graphs directly.
According to the Taylor expansion \eqref{eq:resum}, we organize the time-evolved operator $A(t)=\mathrm{e}^{t\mathcal{L}}A$ by the boxes it touches [or has touched in the past by taking commutators with factors $X_j$ in the causal tree $(X_1,\cdots,X_n)$]:
\begin{equation}\label{eq:At=clusters}
    A(t)=\sum_{\text{connected }S_{\rm b}\subset V_{\rm b}: S_{\rm b}\ni \boxx_0} A(S_{\rm b};t),
\end{equation}
with \begin{equation}
    |A(S_{\rm b};t))\equiv \sum_{n=0}^\infty  \frac{t^n}{n!}\sum_{X_1, \dots, X_n \in F:\;\qty( \boxx \in S_{\rm b} \; \Leftrightarrow \; \exists X_j \text{ such that } X_j\cap \boxx\neq \emptyset )}\mathcal L_{X_n}\cdots \mathcal L_{X_1}|A).
\end{equation}
The summation here is taken over sets of adjacent boxes $S_{\rm b}\subset V_{\rm b}$ that are connected in $G_{\rm b}$, which we call a \emph{connected cluster}.\footnote{This definition may differ slightly from the one used in the literature, where a cluster typically means a multiset of subsets of the graph. We use the term here to illustrate the connection between our algorithm and other cluster expansion methods.}
In other words, we include the causal tree $(X_1,\cdots,X_n)$ in $A(S_{\rm b};t)$ if and only if the factors touch all boxes in $S_{\rm b}$ but no boxes in its complement $S_{\rm b}^{\rm c}$.
Observe that we have restricted to subsets of boxes $S_{\rm b}$ that are connected in $G_{\rm b}$ and that contain $\boxx_0$; otherwise, $A(S_{\rm b};t)=0$ simply because the operator grows in a connected way.  

An equivalent way to express $A(S_{\rm b};t)$ is by induction: the starting point is \begin{equation}\label{eq:Ab0=}
    A(\{\boxx_0\};t)=e^{t\mathcal{L}_{\boxx_0}} A,
\end{equation}
where $\mathcal{L}_S:=\sum_{X\in F:X\subset S}\mathcal{L}_X$.
Then, given all $A(S_{\rm b};t)$ for $|S_{\rm b}|\le m$, any $S'_{\rm b}$ with size $|S'_{\rm b}|= m+1$ is given by \begin{equation}\label{eq:AS'b=ASb}
    A(S'_{\rm b};t) = e^{t\mathcal{L}_{S'_{\rm b}}} A - \sum_{S_{\rm b}\subsetneqq S'_{\rm b}:\, S_{\rm b}\neq \emptyset}A(S_{\rm b};t),
\end{equation}
where $\mathcal{L}_{S'_{\rm b}}:=\mathcal{L}_{\boxx_0\cup \boxx_1\cup\cdots\cup \boxx_m}$ for $S'_{\rm b}=\{\boxx_0,\boxx_1,\cdots,\boxx_m\}$. \eqref{eq:AS'b=ASb} holds because when Taylor-expanding $e^{t\mathcal{L}_{S'_{\rm b}}} A$, any causal tree $(X_1,\cdots,X_n)$ that does not touch all boxes in $S'_{\rm b}$ is canceled by a corresponding term in exactly one $A(S_{\rm b};t)$ in the second term of Eq.~\eqref{eq:AS'b=ASb}. We note that this inclusion-exclusion principle also plays an important role in the linked-cluster expansion algorithm \cite{link_cluster_book,link_cluster_rev} for computing extensive properties in statistical mechanics.

We adapt the nested commutator bound in Theorem~\ref{thm:exponential_H} to show that $A(S_{\rm b};t)$ decays exponentially with $|S_{\rm b}|$.  The proof of this result is delayed until Sec.~\ref{sec:lemma62proof}.
\begin{lem}\label{lem:ASb<}
    There exist constants $c_d$ determined by $d$, and $c_{\rm box}$ determined by $d,\mu,v_{\rm LR}$, such that Eq.~\eqref{eq:rbox=t} holds, and, for any connected cluster $S_{\rm b}$, \begin{equation}\label{eq:ASb<}
        \norm{A(S_{\rm b};t)} \le c_d \exp\qty[-\mu (3^{-d}|S_{\rm b}| -1) r/5].
    \end{equation}
\end{lem}

We then approximate $A(t)$ by an operator \begin{equation}
    \widetilde{A}(t) = \sum_{\text{connected cluster }S_{\rm b}: |S_{\rm b}|\le M_*} A(S_{\rm b};t)
\end{equation} 
that truncates the sum over clusters in Eq.~\eqref{eq:At=clusters} at size $M_*$ (an integer).
The truncation error is bounded by 
\begin{align}\label{eq:A-tildeA<}
    \norm{A(t)-\widetilde{A}(t)} &\le \sum_{\text{connected cluster }S_{\rm b}: |S_{\rm b}|> M_*} \norm{A(S_{\rm b};t)} \nonumber\\ 
    &\le c_d \sum_{m>M_*} (3^de)^m \exp\qty[-\mu (3^{-d}m -1) r/5] \nonumber\\ 
    &\le c_d\exp\qty[\frac{\mu r}{5}-M_*\qty(3^{-d}\frac{\mu}{5} r-d\log 3)] \frac{1/2}{1-1/2} \le c_d\exp\qty[-\mu r(3^{-d-2}M_*-1)].
\end{align}
Here in the second line, we have applied Lemma \ref{lem:ASb<} and Proposition \ref{prop:clustercounting}
on the number of connected clusters $\le (3^de)^m$ of size $m$. In the last line, we have used the fact that
\begin{equation}
    3^{-d}\frac{\mu}{5} r-d\log 3 \ge \max\left(\log 2, \; 3^{-d-2}\mu r\right)
\end{equation}
from Eq.~\eqref{eq:rbox=t}. 

The theorem then follows by identifying $\widetilde{A}(S;t)$ with $A(S_{\rm b};t)$, where the $S$ are the sites contained in $S_{\rm b}$. Equation \eqref{eq:A-tA_thm} comes from Eq.~\eqref{eq:A-tildeA<} with $M = M_* r^d$ and an updated constant $c_d$. \end{proof}

\subsection{Simulatability bound}

According to the expansion obtained in Theorem \ref{thm:expansion}, one can classically simulate $A(t)$ by its truncation $\widetilde{A}(t)$.

\begin{cor}\label{cor:polynomial_simulate}
Let $H$ be a quasi-local Hamiltonian on a $d$-dimensional square lattice $V$ of qudits, whose Lieb-Robinson bound is given by Proposition \ref{prop:quasilocalbound} with parameters $c_{\rm LR},\mu,v_{\rm LR}$. Consider a local operator $A$ with $\norm{A}=1$ acting on a connected region of $\mathrm{O}(1)$ sites, and an initial state $\rho$ whose marginal $\rho_S$ on any connected set $S\subset V$ can be obtained classically with complexity $\exp[\mathrm{O}(|S|)]$. The expectation $\mathrm{Tr}[A(t)\rho]$ with any $t>0$ can be computed to error $\epsilon$ with complexity \begin{equation}\label{eq:simulation_complex}
        \exp[\mathrm{O}\qty((1+v_{\rm LR}t)^{d-1}\qty (v_{\rm LR}t+\log \frac{1}{\epsilon}) )].
    \end{equation}
\end{cor}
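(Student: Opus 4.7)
The plan is to reduce the problem directly to the operator expansion of Theorem \ref{thm:expansion}, then estimate the cost of computing each surviving cluster contribution against the initial state.

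First I would invoke Theorem \ref{thm:expansion} with a volume cutoff $M$ to approximate $A(t)$ by $\widetilde{A}(t)=\sum_{S:\,|S|\le M}\widetilde{A}(S;t)$. Choosing
\begin{equation}
 M = \left\lceil \frac{(v_{\rm LR}t+c_{\rm box})^{d-1}}{c_d'\mu}\left(4\mu v_{\rm LR}t + \log\frac{2c_d}{\epsilon}\right)\right\rceil = \mathrm{O}\!\left((1+v_{\rm LR}t)^{d-1}\left(v_{\rm LR}t + \log\frac{1}{\epsilon}\right)\right),
\end{equation}
the bound \eqref{eq:A-tA_thm} gives $\|A(t)-\widetilde{A}(t)\|\le \epsilon/2$, so by H\"older (since $\|\rho\|_1=1$) the approximation $\mathrm{Tr}[\widetilde{A}(t)\rho]$ differs from the true value by at most $\epsilon/2$. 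It remains to produce $\mathrm{Tr}[\widetilde{A}(t)\rho]$ within error $\epsilon/2$ in time $\exp[\mathrm{O}(M)]$.

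Next, I would adopt the box-level language from the proof of Theorem \ref{thm:expansion}, so that $\widetilde{A}(t)=\sum_{S_{\rm b}}A(S_{\rm b};t)$ with $S_{\rm b}$ a connected cluster of boxes containing $\boxx_0$ of size at most $m_*=M/r^d$. By Proposition \ref{prop:clustercounting}, these clusters can be enumerated (using the coarse-grained box graph of maximum degree $3^d-1$) in time $\exp[\mathrm{O}(m_*)]\le \exp[\mathrm{O}(M)]$. For each such cluster, I would compute $\mathrm{Tr}[A(S_{\rm b};t)\rho_{S_{\rm b}}]$ by dynamic programming on cluster size, using the defining recursion \eqref{eq:AS'b=ASb}:
\begin{equation}
 \mathrm{Tr}[A(S_{\rm b};t)\rho_{S_{\rm b}}] = \mathrm{Tr}\!\left[\mathrm{e}^{t\mathcal{L}_{S_{\rm b}}}A\cdot \rho_{S_{\rm b}}\right] - \sum_{\emptyset\neq S_{\rm b}'\subsetneqq S_{\rm b}}\mathrm{Tr}[A(S_{\rm b}';t)\rho_{S_{\rm b}'}],
\end{equation}
where the first term on the right-hand side is a trace over a Hilbert space of dimension at most $q^{|S_{\rm b}|r^d}\le q^{M}$. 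Exponentiating $\mathcal{L}_{S_{\rm b}}$ (a sparse Hermitian superoperator on that space) and multiplying by the marginal $\rho_{S_{\rm b}}$, which is assumed computable in $\exp[\mathrm{O}(|S_{\rm b}|r^d)]=\exp[\mathrm{O}(M)]$, both cost $\exp[\mathrm{O}(M)]$. The recursive sum has at most $2^{|S_{\rm b}|}\le \exp[\mathrm{O}(m_*)]$ subtractions, each of which is a previously stored scalar, so the per-cluster cost stays $\exp[\mathrm{O}(M)]$. Summing over all $\exp[\mathrm{O}(m_*)]$ clusters yields a total runtime $\exp[\mathrm{O}(M)]$, which is exactly \eqref{eq:simulation_complex}.

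The only real obstacle is verifying that the recursive bookkeeping does not blow up the complexity beyond $\exp[\mathrm{O}(M)]$: because the operators $A(S_{\rm b};t)$ can have much larger norm than their contribution after cancellations, one must ensure that the matrix exponentials $\mathrm{e}^{t\mathcal{L}_{S_{\rm b}}}A$ are computed to sufficient numerical precision, say $\epsilon\,\mathrm{e}^{-\mathrm{O}(m_*)}$, so that after summing $\exp[\mathrm{O}(m_*)]$ signed terms the total error remains below $\epsilon/2$. This only costs an extra $\mathrm{O}(m_*)=\mathrm{O}(M)$ bits of precision, which is absorbed into the $\exp[\mathrm{O}(M)]$ runtime; alternatively one may compute $\mathrm{e}^{t\mathcal{L}_{S_{\rm b}}}A$ by a Taylor series truncated at order $\mathrm{O}(v_{\rm LR}t+\log(1/\epsilon))$, whose convergence on each bounded cluster is controlled by the Lieb--Robinson velocity, again at cost $\exp[\mathrm{O}(M)]$. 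Combining these pieces gives the advertised bound.
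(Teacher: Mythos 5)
Your proposal is correct and takes essentially the same route as the paper's own proof: both invoke Theorem~\ref{thm:expansion} with a cutoff tuned so the truncation error is at most $\epsilon/2$, both observe that each $a(S_{\rm b})=\mathrm{Tr}[\rho_{S_{\rm b}}A(S_{\rm b};t)]$ depends only on a local marginal, and both evaluate the cluster contributions by dynamic programming on the recursion \eqref{eq:AS'b=ASb}, summing over the $\exp[\mathrm{O}(m_*)]$ connected clusters enumerated via Proposition~\ref{prop:clustercounting}. Your added remark on controlling the numerical precision of $e^{t\mathcal{L}_{S_{\rm b}}}A$ to $\epsilon\,\mathrm{e}^{-\mathrm{O}(m_*)}$ so the signed sum does not amplify roundoff is a welcome refinement that the paper elides with the phrase ``essentially exact.''
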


Here the complexity assumption on the initial state applies to e.g. product states, thermal Gibbs states at high temperature \cite{highT_locality14}, and gapped ground states that are connected to solvable states by a gapped path of Hamiltonians \cite{hastings2005}. The reason is that for those states, expectation values of local observables can be computed locally in an efficient way, so that the reduced density matrix $\rho_S$ can be obtained by state tomography with an extra $\exp[\mathrm{O}(|S|)]$ overhead. Note that our algorithm in the proof of Corollary \ref{cor:polynomial_simulate} actually applies to more general initial states, which reduces the problem of computing local observables after time evolution to the problem of obtaining local marginals of the initial state.

\eqref{eq:simulation_complex} achieves a polynomial dependence of $1/\epsilon$ even for $d>1$, which improves upon a quasi-polynomial bound $\exp[\mathrm{O}\qty((v_{\rm LR}t+\log \frac{1}{\epsilon})^{d} )]$ from standard Lieb-Robinson bounds (see e.g. Proposition 4.4 of Ref.~\cite{AnthonyChen:2023bbe}). On the other hand, although a polynomial-in-$1/\epsilon$ bound (Theorem 6 of Ref.~\cite{cluster_Alhambra23}) has been obtained from cluster expansions in the restricted setting where $\rho$ is a product state, its complexity $\exp[\mathrm{O}\qty(\mathrm{e}^{\mathrm{\Theta}(t)}\qty (t+\log \frac{1}{\epsilon}) )]$ grows exponentially faster than Eq.~\eqref{eq:simulation_complex} in terms of the $t$ dependence. The physical reason is that the cluster expansion techniques in Ref.~\cite{cluster_Alhambra23} work for any bounded-degree graph.  Our bound takes full advantage of the additional locality properties of finite-dimensional quantum systems.  Our simulation complexity \eqref{eq:simulation_complex} thus rules out a proposed super-polynomial quantum advantage in analog quantum simulators \cite{stability_error_simu24} while maintaining the correct scaling with $t$.

We argue that bound \eqref{eq:simulation_complex} is tight with respect to the dependence on both $t$ and $1/\epsilon$, because of the following. First, we argue that the $d=1$ case of \eqref{eq:simulation_complex} is tight because the Lieb-Robinson bound \eqref{lrbound1} can be (almost) saturated by a nearest-neighbor Hamiltonian \cite{AnthonyChen:2023bbe}, so the operator is effectively evolved in a truncated subsystem of $\mathrm{\Theta}(v_{\rm LR}t+\log \frac{1}{\epsilon})$ qudits, whose simulation complexity is then exponential in the subsystem size and given in Eq.~\eqref{eq:simulation_complex}. Building on this 1d example, we expect that the following time-dependent protocol $H(s)$ would saturate Eq.~\eqref{eq:simulation_complex} for any constant dimension $d>1$ (note that Theorem \ref{cor:polynomial_simulate} generalizes to time-dependent cases). Here time $s$ is backwards so that operator $A$ first hits $H(0)$ instead of $H(t)$ in Heisenberg evolution. In the first half of the protocol $0<s<t/2$, $H(s)$ just evolves the initial local operator along a one-dimensional chain [e.g. $H(s)$ is a set of decoupled 1d Hamiltonians], so that there is a tail $A_{\rm tail}$ of operator $A(t/2)$ that is supported in a chain of length $\sim v_{\rm LR}t+ \log \frac{1}{\epsilon}$ with \begin{equation}\label{eq:Atail}
    \norm{A_{\rm tail}}\sim \epsilon.
\end{equation}
Such a tail exists due to the tightness of Lieb-Robinson bound in 1d discussed above. The second half of the protocol $t/2<s<t$ then evolves in the horizontal directions of the original chain, which grows $A_{\rm tail}$ to a highly entangled operator $A_{\rm tail}'$ occupying a ``cylinder'' of height $\sim v_{\rm LR}t+\log \frac{1}{\epsilon}$ and radius $\sim v_{\rm LR}t$. This second half can even be a quantum circuit with strict light cones. Since unitary evolution does not change operator norm \eqref{eq:Atail} of this part of operator $\norm{A_{\rm tail}'}\sim \epsilon$, one has to simulate dynamics in the whole cylinder to get precision $\epsilon$, which requires complexity \eqref{eq:simulation_complex} that is an exponential of the cylinder volume.

\begin{proof}[Proof of Corollary \ref{cor:polynomial_simulate}]

In the proof of Theorem \ref{thm:expansion}, we expanded $A(t)$ into connected clusters $S_{\rm b}$. Choosing \begin{equation}\label{eq:mstar=}
    M_*=\left \lfloor \frac{3^{d+2}}{\mu r} \log(\frac{2c_d}{\epsilon}) \right\rfloor+3^{d+2}+1,
\end{equation}
the truncation error \eqref{eq:A-tildeA<} leads to
\begin{align}
    \qty|\mathrm{Tr}\qty[\rho\qty(A(t)-\widetilde{A}(t))]| &\le \norm{A(t)-\widetilde{A}(t)} \le \frac{\epsilon}{2}.
\end{align}

Therefore, to simulate $\mathrm{Tr}[\rho A(t)]$, it suffices to compute \begin{align}
    \mathrm{Tr}[\rho \widetilde{A}(t)] &= \sum_{\text{connected cluster }S_{\rm b}: |S_{\rm b}|\le M_*} a(S_{\rm b}),
\end{align}
where \begin{equation}
    a(S_{\rm b}):=\mathrm{Tr}[\rho A(S_{\rm b};t)]=\mathrm{Tr}[\rho_{S_{\rm b}} A(S_{\rm b};t)]
\end{equation}
only involves a marginal of the initial state $\rho_{S_{\rm b}}:=\rho_{\boxx_0\cup \boxx_1\cup\cdots\cup \boxx_m}$ for $S_{\rm b}=\{\boxx_0,\boxx_1,\cdots,\boxx_m\}$. 
According to Eq.~\eqref{eq:AS'b=ASb},
\begin{equation}\label{eq:a=tildea}
    a(S_{\rm b}) = \widetilde{a}(S_{\rm b}) - \sum_{S'_{\rm b}\subsetneqq S_{\rm b}:\, S'_{\rm b}\neq \emptyset} a(S'_{\rm b}),
\end{equation}
where \begin{equation}\label{eq:tildea=}
    \widetilde{a}(S_{\rm b}):=\mathrm{Tr}[\rho_{S_{\rm b}} e^{t\mathcal{L}_{S_{\rm b}}}A]=\mathrm{Tr}[\rho_{S_{\rm b}} e^{\mathrm{i} tH_{S_{\rm b}}}Ae^{-\mathrm{i} tH_{S_{\rm b}}}].
\end{equation}

The simulation algorithm then works as follows:

\begin{algorithm}[H]
\caption{A classical simulation algorithm for $\mathrm{Tr}[\rho A(t)]$}
\begin{algorithmic}
\State Generate all connected clusters $S_{\rm b}\ni \boxx_0$ with $|S_{\rm b}|\le M_*$
\State Create array $a(S_{\rm b}) \gets 0$ for the generated clusters $\{S_{\rm b}\}$
\For{$m=1,2,\cdots,M_*$}
    \ForAll{connected cluster $S_{\rm b}\ni \boxx_0$ with $|S_{\rm b}|=m$}
    \State Compute $\widetilde{a}(S_{\rm b})$ in Eq.~\eqref{eq:tildea=} by exponentiating matrix $H_{S_{\rm b}}$
    \State Update $a(S_{\rm b})$ by Eq.~\eqref{eq:a=tildea} using $\widetilde{a}(S_{\rm b})$ and the stored values for $a(S'_{\rm b})$ where $|S'_{\rm b}|<m$
    \EndFor
\EndFor
\State \Return $\sum_{S_{\rm b}} a(S_{\rm b})$
\end{algorithmic}
\end{algorithm}

For each $\widetilde{a}(S_{\rm b})$, since it only involves a subsystem of $\le M_*r^d$ qudits where the marginal $\rho_S$ is assumed to be computable in exponential time, $\widetilde{a}(S_{\rm b})$ can be computed accurately by \begin{equation}\label{eq:expmr}
   N = \exp[\mathrm{O}(M_*r^d)]
\end{equation}  
resources. This dominates the complexity for $a(S_{\rm b})$, as the sum over $a(S'_{\rm b})$ in Eq.~\eqref{eq:a=tildea} contains at most $\exp[\mathrm{O}(M_*)]$
terms\footnote{Note that one can be more clever to treat this sum by storing intermediate results. However, this does not change the scaling of complexity for the whole algorithm.}. Since there are at most $\exp[\mathrm{O}(M_*)]$ clusters and they can be enumerated by a similar cost according to Proposition \ref{prop:clustercounting}, the total complexity of the algorithm is still of the scaling \eqref{eq:expmr}, which becomes Eq.~\eqref{eq:simulation_complex} by plugging in Eqs.~\eqref{eq:rbox=t} and \eqref{eq:mstar=}. The final error is bounded by $\epsilon$ because the $M_*$ truncation error is bounded by $\epsilon/2$ in Eq.~\eqref{eq:A-tildeA<}, and the other operations like computing $\widetilde{a}(S_{\rm b})$ are essentially exact using resources above.
\end{proof}

\subsection{Applying the nested commutator bound: proof of Lemma \ref{lem:ASb<}}
\label{sec:lemma62proof}

\begin{proof}
For $S_{\rm b}$ that only contains boxes $b_0$ or its nearest neighbors (there are $3^d-1$ of them), 
Eq.~\eqref{eq:ASb<} holds by choosing a sufficiently large $c_d$ determined by $d$. The reason is that $\norm{A(\{b_0\};t)}\le 1$ due to Eq.~\eqref{eq:Ab0=}, and the other $\norm{A(S_{\rm b};t)}$ can be bounded by iterating Eq.~\eqref{eq:AS'b=ASb}: for example, let $b_1$ be one neighbor of $b_0$; then \begin{equation}
    \norm{A(\{b_0,b_1\};t)} = \norm{e^{t\mathcal{L}_{\{b_0,b_1\}}} A - A(\{b_0\};t)} \le \norm{e^{t\mathcal{L}_{\{b_0,b_1\}}} A} + \norm{ A(\{b_0\};t)} \le 1+1=2.
\end{equation} 
There are only finitely many $S_{\rm b}$ that only contain $b_0$ and its neighbors, so, by a finite number of iterations, they all satisfy $\norm{A(S_{\rm b};t)}\le c_d$ for some constant $c_d$ determined by $d$, so that \eqref{eq:ASb<} is satisfied for them because the exponential factor is always $\le 1$ for these $S_{\rm b}$ with $|S_{\rm b}|\le 3^d$.

Beyond these finitely many connected clusters, any other $S_{\rm b}$ contains at least one box that is a distance $\ge r$ from the initial operator $A$. We can then find boxes $S_1,S_2,\cdots, S_M \in S_{\rm b}$ with $M\ge 1$ such that each $S_m$ is surrounded by a larger box $B_m$ of side length $2(r-\chi)$ (where $\frac{r}{2}-\chi$ is an integer and $r,\chi\ge 1$ are chosen shortly) and the larger boxes $B_1,\cdots,B_M$ are a distance $\ge 2\chi$ from each other and from $b_0$. See Fig.~\ref{fig:box} for an illustration.  More precisely, because we have excluded the finitely many $S_{\rm b}$ in the neighborhood of $b_0$, we can always first find an $S_1\in S_{\rm b}$ whose larger box $B_1$ does not touch $b_0$. Then we try to find an $S_2\in S_{\rm b}$ that is not a neighboring box of $S_1$, and so on. We can always find such a set $S_1,S_2,\cdots, S_M \in S_{\rm b}$ containing at least
\begin{equation}\label{eq:M>Sb}
    M\ge 3^{-d}|S_{\rm b}|-1
\end{equation}
boxes, because each $S_m$ only forbids its $3^d-1$ neighbors to be selected at the same time. Here we need the $-1$ in Eq.~\eqref{eq:M>Sb} to also avoid boxes neighboring $b_0$.

\begin{figure}[t]
    \centering
    \includegraphics[width=0.5\linewidth]{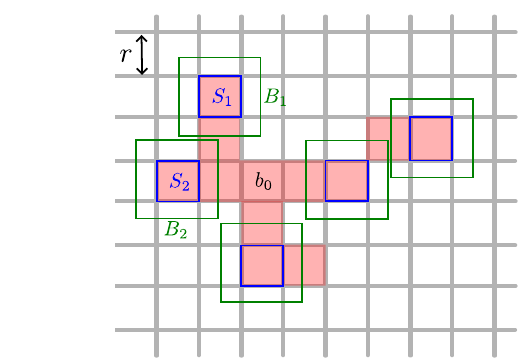}
    \caption{To apply the nested commutator bound for a connected cluster $S_{\rm b}$ (red shaded region), we find boxes $S_1,S_2,\cdots,S_M\in S_{\rm b}$ (blue) that are well separated from each other and from the initial operator in $b_0$. We can always find $M=\mathrm{\Theta}(|S_{\rm b}|)$ number of them, which are surrounded by larger boxes $B_1,\cdots$ that we use in the proof of the nested commutator bound. }
    \label{fig:box}
\end{figure}

For a given $S_{\rm b}$, after finding these $M$ boxes $S_m$ together with their surrounding boxes $B_m$, we can then apply Theorem \ref{thm:exponential_H} to bound $\norm{A(S_{\rm b};t)}$. The reason is that although Theorem \ref{thm:exponential_H} is stated directly for the nested commutator $C^R_{S_1,\cdots,S_M}(t)$, its proof works by individually bounding each of the irreducible skeleton that contribute to $C^R_{S_1,\cdots,S_M}(t)$. Here $\norm{A(S_{\rm b};t)}$ is also bounded by a sum over irreducible skeletons, with the only difference that here the causal forests in Eq.~\eqref{eq:At=clusters} corresponding to the irreducible skeletons are restricted to be contained in $S_{\rm b}$. In other words, $\norm{A(S_{\rm b};t)}$ is bounded by the bound on $C^R_{S_1,\cdots,S_M}(t)$ if $H$ only contains terms supported inside $S_{\rm b}$; but this truncated $H$ is still a quasi-local Hamiltonian so we can just apply Eq.~\eqref{eq:exponential_H_C<}. Another small difference here is that $\{B_m\}$ do not lie in a square lattice. Theorem \ref{thm:exponential_H} generalizes to this case because its proof only uses the fact that the whole space can be tiled by $\{B_m\}$ and some other regions where the connectivity graph of the regions have bounded degree $\le 3^d$; this property also holds here. 

Equation \eqref{eq:exponential_H_C<} thus implies that
\begin{align}
    \norm{A(S_{\rm b};t)} &\le \qty(c'(t+1)\cdot r^{2(d-1)}\cdot e^{\mu(v_{\rm LR}t-r/2+\chi)})^M \nonumber\\
    &\le \qty(c''  r^{2d-1} \cdot e^{-\mu r/4})^M \le \qty(e^{-\mu r/5})^M, \label{eq:A<comm_bound}
\end{align}
where $c',c''$ are constants. Here in the first line, we have used $|\partial S_i|\le |\partial B_i|=\mathrm{O}(r^2)$ and chosen a sufficiently large constant $\chi$ independent of $t$ to get the factor $(t+1)$. In the second line of Eq.~\eqref{eq:A<comm_bound}, we have used Eq.~\eqref{eq:rbox=t} with a sufficiently large constant $c_{\rm box}$, so that $t+1=\mathrm{O}(r)$ and $v_{\rm LR}t-r/2+\chi=-r/4 - c_{\rm box}+\chi\le -r/5 - \mu^{-1}\log(c'' r^{2d-1})$; this last condition is achievable because the function $f(r):=r/20 - \mu^{-1}\log(c'' r^{2d-1})$ is bounded from below. Equation \eqref{eq:ASb<} follows by plugging Eq.~\eqref{eq:M>Sb} in Eq.~\eqref{eq:A<comm_bound}.
\end{proof}

\section{Spontaneous symmetry-breaking phases of finite symmetries \label{sec:ssb}}
We now turn to the second main application of our nested commutator bounds, and discuss the diagnosis of non-trivial phases of quantum matter.  We focus on phases that exhibit spontaneous symmetry breaking (SSB) of finite symmetries. SSB can be detected by the presence of ground states with exponentially small energy splitting, along with order and disorder parameters computed in these nearly degenerate states. For example, in the Ising ferromagnetic phase, there are two degenerate (in the thermodynamic limit) ground states $\ket{\psi_{\pm}}$ with eigenvalues $\pm 1$ under the global $\mathbb{Z}_2$ symmetry operator. These states are adiabatically connected to the $\pm$ GHZ states 
\begin{equation}
    \ket{\pm}=\frac{\ket{\mathbf{0}}\pm\ket{\mathbf{1}}}{\sqrt{2}}.  \label{eq:ghz}
\end{equation} 

More concretely, it was shown in Ref.~\cite{hastings2005} using Lieb-Robinson bounds that the splitting between the energies of $\ket{\psi_{\pm}}$ is at most exponentially small in the linear system size $L$: \begin{equation}
    \delta := \left|\langle \psi_+ |H|\psi_+\rangle - \langle \psi_-|H|\psi_-\rangle \right| \le \mathrm{e}^{-\mathrm{O}(L)}. \label{eq:energysplitting}
\end{equation} However, deep in the ferromagnetic phase, one expects that adding a small transverse field $h\sum_iX_i$ would lead to $\delta \sim \mathrm{e}^{-\mathrm{O}(L^d)}$, because the two ground states $\ket{\pm}$ are distinguished only by the global operator $\prod_iX_i$. Intuitively, this expectation is because it requires going to $\mathrm{O}(L^d)$, not $\mathrm{O}(L)$, in perturbation theory, in order to connect $|\pm\rangle$ via perturbations. 

Spontaneous symmetry breaking is also marked by a long-range order parameter $\lim_{|i-j|\to\infty}\bra{\psi}Z_iZ_j\ket{\psi}\sim\mathcal{O}(1)$ in states in the ground-state subspace, as well as a quickly decaying disorder parameter $\lim_{R\to\infty}\bra{\psi}D_{R}\ket{\psi}$. Here, $d$ is the spatial dimension, and $D_{R}=\prod_{i\in B_{R}(v)} X_i$ is the global $\mathbb{Z}_2$ symmetry operator restricted to a $d$-dimensional ball of radius $R$ centered at any vertex $v$. Like with the splitting, conventional Lieb-Robinson bounds give a far looser bound, saying that $\lim_{R\to\infty}\langle \psi_+|D_R|\psi_+\rangle \lesssim \mathrm{e}^{-\mathrm{O}(R)}$ rather than the bound one might expect from perturbation theory: $\lim_{R\to\infty}\langle \psi_+|D_R|\psi_+\rangle \lesssim \mathrm{e}^{-\mathrm{O}(R^d)}$.  Indeed, this latter scaling is observed numerically \cite{zhao2021}.  
\subsection{Proofs of volume-law scaling}
Using our nested commutator bound, we will confirm the intuition raised in the two arguments above.  If states $\ket{\psi_{\pm}}$ are connected to the GHZ fixed point states $\ket{\pm}$ by finite-time evolution generated by a quasi-local Hamiltonian (satisfying Definition~\ref{defquasilocal}), then the ground state splitting is in fact at most exponentially small in the \emph{volume} of the system. Similarly, we show that the disorder parameter is also at most exponentially small in the \emph{volume} of $B_{\ell}(v)$. 
It was shown in Ref.~\cite{yin2024}  that, for a sufficiently small local perturbation in the vicinity of the ferromagnetic fixed point, such a quasi-local generator exists. The usual quasi-adiabatic generator \cite{hastings2004,osborne2007simulating,bachmann2012automorphic} that can be constructed within the entire adiabatic phase decays almost exponentially in diameter [to be precise, exponentially in $R/\log^2(R)$], so outside of this vicinity of the fixed point, it is not known whether an adiabatic generator satisfying our stronger definition of quasi-local exists.
Therefore, at least for states in the vicinity of the ferromagnetic fixed point, our nested commutator results give bounds on the ground-state energy splitting and disorder parameter that are much tighter than those given by the usual Lieb-Robinson bounds. It seems reasonable to conjecture that such a quasi-local generator might exist within the entire phase, not just in the vicinity of the fixed point. This would imply that these bounds hold within the entire ferromagnetic phase.

In the following, we specialize to properties of the Ising ferromagnetic phase on a lattice in $d$ dimensions, with a constant finite density of sites with respect to the Euclidean distance, where each site hosts a qubit. However, our results are straightforward to generalize to spontaneous symmetry-breaking phases of other finite symmetries, in systems with more general local Hilbert spaces.  The following two theorems prove the two conjectures stated above, given the assumption of the exponential-in-volume-tailed quasi-adiabatic generator.

\begin{thm}\label{thm:splitting}
If $\ket{\psi_{\pm}}$ are ground states of a gapped, quasi-local Hamiltonian satisfying Def.~\ref{defquasilocal}, that are connected to the $\pm$ GHZ states $\ket{\pm}$ defined in Eq.~\eqref{eq:ghz} in spatial dimension $d$ by finite-time evolution generated by a quasi-local Hamiltonian, then the splitting $\delta$ defined in Eq.~\eqref{eq:energysplitting} obeys 
    \begin{align}
        \delta\leq cL^d \mathrm{exp}\left(-\gamma\frac{(L-vt)^d}{(vt)^{d-1}}\right),
    \end{align}
    \label{thm:split}
    for $L\gg vt$, where $c,\gamma$ are $\mathcal{O}(1)$ constants, $L$ is the linear system size, and $v$ and $t$ are $\mathcal{O}(1)$ constants describing the finite-time evolution and its quasi-local generator.
\end{thm}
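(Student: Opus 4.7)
The plan is to reduce the splitting to the magnitude of an off-diagonal matrix element of a conjugated Hamiltonian, and then to exploit the volume-in-support structure of the cluster expansion of Theorem~\ref{thm:expansion}. First write $|\psi_\pm\rangle = U|\pm\rangle$, where $U$ is the unitary generated by the finite-time quasilocal evolution, and set $\tilde H := U^\dagger H U$. Expanding $|\pm\rangle = (|\mathbf{0}\rangle \pm |\mathbf{1}\rangle)/\sqrt{2}$ and noting that the diagonal contributions cancel,
\begin{equation}
\delta = \bigl|\langle +|\tilde H|+\rangle - \langle -|\tilde H|-\rangle\bigr| = 2\bigl|\mathrm{Re}\langle \mathbf{0}|\tilde H|\mathbf{1}\rangle\bigr| \le 2\sum_X |\langle \mathbf{0}|\tilde H_X|\mathbf{1}\rangle|,
\end{equation}
where $H = \sum_X H_X$ and $\tilde H_X := U^\dagger H_X U$.

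The core of the argument is that each summand receives contributions only from terms in the cluster expansion of $\tilde H_X$ whose support fills the \emph{entire} lattice. Applying Theorem~\ref{thm:expansion} (extended to the time-dependent generator of $U$) to $\tilde H_X / \|H_X\|$ yields $\tilde H_X = \sum_{S_{\rm b}} \widetilde H_X(S_{\rm b}; t)$, a sum over connected clusters of coarse-grained boxes that contain the box supporting $H_X$, with each $\widetilde H_X(S_{\rm b}; t)$ supported inside the union of the boxes in $S_{\rm b}$. Splitting $\mathcal H = \mathcal H_{S_{\rm b}} \otimes \mathcal H_{S_{\rm b}^{\rm c}}$ and using that $\widetilde H_X(S_{\rm b}; t)$ acts as the identity on $\mathcal H_{S_{\rm b}^{\rm c}}$, the matrix element factorizes as
\begin{equation}
\langle \mathbf{0}|\widetilde H_X(S_{\rm b};t)|\mathbf{1}\rangle = \bigl(\langle \mathbf{0}|_{S_{\rm b}} \widetilde H_X(S_{\rm b};t)|\mathbf{1}\rangle_{S_{\rm b}}\bigr) \cdot \langle \mathbf{0}|\mathbf{1}\rangle_{S_{\rm b}^{\rm c}},
\end{equation}
and the second factor vanishes unless $S_{\rm b}^{\rm c} = \emptyset$. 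Thus only the unique full cluster $S_{\rm b} = V_{\rm b}$ contributes and $|\langle \mathbf{0}|\tilde H_X|\mathbf{1}\rangle| \le \|\widetilde H_X(V_{\rm b};t)\|$.

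Lemma~\ref{lem:ASb<} applied to $H_X/\|H_X\|$, with $|V_{\rm b}| = N_{\rm b} = \Theta(L^d/r^d)$ boxes of side $r = \Theta(vt + c_{\rm box})$, then gives the volume-law suppression $\|\widetilde H_X(V_{\rm b};t)\| \le c_d \|H_X\| \exp\!\bigl(-c_d' \mu L^d/(vt)^{d-1}\bigr)$. Summing over the $O(L^d)$ local terms $X$ and using $\sum_X \|H_X\| = O(L^d)$ by quasilocality, followed by the slackness trick from the proof of Cor.~\ref{cor:volumescaling_exp} to repackage $L^d$ as $(L-vt)^d$ in the regime $L \gg vt$, produces the stated bound $\delta \le c L^d \exp\!\bigl(-\gamma (L-vt)^d/(vt)^{d-1}\bigr)$.

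The main obstacle is justifying the cluster expansion for $\tilde H_X$ when $U$ is generated by a \emph{time-dependent} quasilocal Hamiltonian, since both Theorem~\ref{thm:expansion} and Lemma~\ref{lem:ASb<} are stated for time-independent evolution. One must retrace the inductive definition of $\widetilde A(S_{\rm b};t)$ from Sec.~\ref{sec:lemma62proof}, replacing $e^{t\mathcal L_{S_{\rm b}}}$ by the appropriate time-ordered propagator, and verify that the multi-commutator bound of Theorem~\ref{thm:exponential_H} continues to apply term-by-term in the Dyson series. A secondary, more minor technical point is that quasilocal $H$ may have individual terms $H_X$ with extended support, which must be absorbed by the $\mathrm{e}^{-\kappa|X|}$ factor in $\|H_X\|$ when carrying out the sum over $X$.
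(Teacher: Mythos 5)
Your proof is essentially correct and takes a genuinely different path from the paper's. The paper's proof inserts the global symmetry operator $D_{\mathrm{tot}}$ (since $|\mathbf{1}\rangle = D_{\mathrm{tot}}|\mathbf{0}\rangle$), invokes Lemma~\ref{lem:nest} to turn $\langle\mathbf{0}|D_{\mathrm{tot}}H_X(t)|\mathbf{0}\rangle$ into a length-$m$ nested commutator, and then applies Corollary~\ref{cor:volumescaling_exp} with $m$ balls packed across the lattice. You instead go through the cluster expansion of Section~\ref{sec:simulation}: you observe that $\langle\mathbf{0}|\widetilde{H}_X(S_{\rm b};t)|\mathbf{1}\rangle$ factorizes across $S_{\rm b}$ and $S_{\rm b}^{\rm c}$, and because $|\mathbf{0}\rangle$, $|\mathbf{1}\rangle$ are product states with orthogonal single-site factors, the inner product $\langle\mathbf{0}|\mathbf{1}\rangle_{S_{\rm b}^{\rm c}}$ vanishes unless $S_{\rm b}^{\rm c}=\emptyset$. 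This immediately projects onto the unique full cluster $S_{\rm b}=V_{\rm b}$, to which Lemma~\ref{lem:ASb<} directly gives $\exp[-\mathrm{O}(L^d/(vt)^{d-1})]$. It is an elegant route, and it is worth noting what each buys: the paper's route through $D_{\mathrm{tot}}$ and Lemma~\ref{lem:nest} unifies the splitting bound with the disorder-parameter bound of Theorem~\ref{thm:disorder} (which uses the same lemma and does need the $\mathbb{Z}_2$ structure), whereas your route is tighter to the simulation result and avoids the geometric bookkeeping of placing $m$ balls. Both ultimately rest on Theorem~\ref{thm:exponential_H}, since Lemma~\ref{lem:ASb<} is derived from it.

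One caution: you correctly flag, but do not resolve, the quasilocal case where an individual term $H_X$ spans multiple coarse-grained boxes. Lemma~\ref{lem:ASb<} is stated for an initial operator $A$ contained in a single box $\boxx_0$, so it does not apply directly when $X$ stretches across boxes. The paper's proof devotes roughly its second half to exactly this: it splits the sum over clusters $X$ at $|X| \sim L^d/(B\xi)^{d-1}$, uses the trivial bound $\|H_X\|\le h\mathrm{e}^{-\kappa|X|}$ above the threshold, and for smaller $X$ carefully accounts for the worst case where $X$ forms a ``net'' along box boundaries, replacing the available volume $L^d$ by $L^d - (B\xi)^{d-1}|X|$ when counting how many boxes can still contribute the volume suppression. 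Calling this ``minor'' undersells it; without this crossover analysis the quasilocal case of the theorem is not established, and the intermediate regime ($|X|$ spanning several boxes but far below the trivial-bound threshold) is exactly where your argument currently has no control. Your outline of where to insert the $\mathrm{e}^{-\kappa|X|}$ factor is the right instinct, but it needs to be carried out. The time-dependence point you raise is genuinely minor by comparison, and the paper itself waves it off (``it appears straightforward to confirm that all of our main results also hold for time-dependent systems'').
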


\begin{thm}\label{thm:disorder}
If $\ket{\psi}=U(t)\ket{\mathbf{0}}$ where $U(t)$ describes finite-time evolution in $d$ dimensions generated by a local, $\mathbb{Z}_2$ symmetric Hamiltonian, then the disorder parameter $\lim_{R\to\infty}\langle\psi|D_R|\psi\rangle$ for a $d$-dimensional ball of radius $R$ is upper bounded by
    \begin{align}
        \lim_{R\to\infty}\bra{\psi}D_R\ket{\psi} \leq c_{\mathrm{LR}}\mathrm{exp}\left(-\gamma\frac{(R-vt)^d}{(vt)^{d-1}}\right),
    \end{align}
    \label{thm:mot}
    where again $c_{\mathrm{LR}}$ and $\gamma$ are $\mathcal{O}(1)$ constants and $v,t$ describe the time evolution.
\end{thm}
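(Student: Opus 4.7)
The plan is to reduce $\bra{\psi}D_R\ket{\psi}$ to the nested commutator bound of Corollary~\ref{cor:volumescaling}, by exploiting the $\mathbb{Z}_2$ symmetry to re-express the disorder parameter as an off-diagonal matrix element and then projecting onto the sub-space of operators that could contribute.

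I first use the $\mathbb{Z}_2$ symmetry to rewrite the expectation value. Since $[U(t),D_\infty]=0$ with $D_\infty:=\prod_{i\in V}X_i$, and $D_R=D_\infty D_{R^c}$ with $D_{R^c}:=\prod_{i\notin B_R}X_i$ (using $X_i^2=I$), the identity $\bra{\mathbf{0}}D_\infty=\bra{\mathbf{1}}$ gives
\begin{equation}
\bra{\psi}D_R\ket{\psi}=\bra{\mathbf{0}}D_\infty U^\dagger D_{R^c}U\ket{\mathbf{0}}=\bra{\mathbf{1}}U^\dagger D_{R^c}U\ket{\mathbf{0}}.
\end{equation}
The task is now to bound an off-diagonal matrix element, between two product states that differ on every site, of a Heisenberg-evolved operator whose initial support is disjoint from $B_R$.

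Next, I observe that this matrix element only sees components of $U^\dagger D_{R^c}U$ that act nontrivially on every site of $B_R$. For each site $v$, let $\pi_v O:=\tfrac12(O-Z_v O Z_v)$ be the projector onto operators anticommuting with $Z_v$ (i.e.\ Pauli strings containing $X_v$ or $Y_v$). Because $\ket{\mathbf{0}}$ and $\ket{\mathbf{1}}$ differ at $v$ and $(1-\pi_v)O$ restricts to the $\{I_v,Z_v\}$ subspace, $\bra{\mathbf{1}}(1-\pi_v)O\ket{\mathbf{0}}=0$ identically. I then choose $m$ sites $v_1,\ldots,v_m\in B_R$ pairwise separated by more than $2\LRvel t$ and each at distance greater than $\LRvel t$ from $\partial B_R$, so that the light-cone balls of radius $\LRvel t$ around each $v_j$ are mutually disjoint and contained in $B_R$; iterating the above observation yields $\bra{\mathbf{1}}U^\dagger D_{R^c}U\ket{\mathbf{0}}=\bra{\mathbf{1}}\pi_{v_1}\cdots\pi_{v_m}U^\dagger D_{R^c}U\ket{\mathbf{0}}$.

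The crux is the algebraic identity $\pi_v O=\tfrac12 Z_v[Z_v,O]$, which iterates for distinct sites (since the $Z_{v_j}$ mutually commute) to
\begin{equation}
\pi_{v_1}\cdots\pi_{v_m}O=\frac{1}{2^m}\,Z_{v_1}\cdots Z_{v_m}\,[Z_{v_1},[\cdots,[Z_{v_m},O]\cdots]].
\end{equation}
Taking operator norms with $\|Z_v\|=1$ and recognizing the result as the quantity $C^{B_R^c}_{\vec v}(t)$ defined in Section~\ref{sec:nestedcommutatorbounds}, I obtain $|\bra{\psi}D_R\ket{\psi}|\le C^{B_R^c}_{\vec v}(t)$. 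Since the chosen $v_j$ sit with their surrounding light-cone balls inside $B_R$ while $D_{R^c}$ is supported in the common complement, Corollary~\ref{cor:volumescaling} applies and yields
\begin{equation}
C^{B_R^c}_{\vec v}(t)\le c_{\mathrm{LR}}\exp\!\left(-\gamma\frac{(R-\LRvel t)^d}{(\LRvel t)^{d-1}}\right),
\end{equation}
after absorbing the $\mathrm{O}(\LRvel t)$ buffer between the $v_j$ and $\partial B_R$ into the $(R-\LRvel t)$ prefactor. The main obstacle is the Pauli projection identity: a generic Haar-averaging projection would incur a factor $2^m\sim\exp[\mathrm{O}(R^d/(\LRvel t)^d)]$ large enough to wash out the volume suppression, whereas $\pi_v O=\tfrac12 Z_v[Z_v,O]$ exactly matches the $2^{-m}$ normalization built into $C^{B_R^c}_{\vec v}(t)$, so the volume-law decay survives.
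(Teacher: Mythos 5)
Your proposal is correct and takes essentially the same route as the paper: both rewrite $D_R$ via the $\mathbb{Z}_2$ symmetry as $D_{\mathrm{tot}}$ times the disorder operator on the complement, then convert the product-state matrix element into a nested commutator with $Z_{v_j}$ operators (your projector identity $\pi_v O = \tfrac12 Z_v[Z_v,O]$ is exactly the mechanism behind the paper's Lemma~\ref{lem:nest}), and finally invoke the volume-law corollary. The only cosmetic difference is that the paper cites Corollary~\ref{cor:volumescaling_exp} (quasilocal version) while you cite Corollary~\ref{cor:volumescaling}; since the theorem as stated assumes a strictly local generator, either works.
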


One subtlety about the above theorems is that in any given state $|t|$ may be very large, even $\mathcal{O}(\log L)$. Therefore, to probe the scaling of the disorder parameter and truly rule out the possibility that a state is connected to the ferromagnetic fixed point by finite [or $\mathcal{O}(\log L)$] time evolution generated by a quasi-local Hamiltonian, we need to compute expectation values of extensive operators and compare very small values (exponentially small in $L$ versus exponentially small $L^d$). In Sec.~\ref{srkstates}, we present an example of a state $|\psi(\beta)\rangle$ that fails to satisfy Theorem~\ref{thm:mot}. However, due to the subtlety above, we can obtain another state $|\tilde{\psi}(\beta)\rangle$ whose overlap with $|\psi(\beta)\rangle$ goes to 1 in the thermodynamic limit, and does not violate Theorem~\ref{thm:mot}. It is likely that $|\tilde{\psi}(\beta)\rangle $ can be connected to the ferromagnetic fixed point if we allow $|t|$ to be $\mathcal{O}(\log L)$. This point is discussed more in depth in Ref.~\cite{sahay20252}. However, for realistic systems where $t$ is an $\mathcal{O}(1)$ constant, the scaling of the disorder parameter provides a reliable diagnostic for the ferromagnetic phase.

To prove the above two theorems, we make use of the following lemma. The rough idea of the lemma is that an expectation value of the form $\bra{\psi}D_{\mathrm{tot}}O\ket{\psi}$, where $D_{\mathrm{tot}}=\prod_iX_i$ is the global $\mathbb{Z}_2$ symmetry operator, can be written as the expectation value of a large nested commutator of various single-site unitaries $Z_i$ with $O(t)$. Here, $\ket{\psi}=U(t)\ket{\mathbf{0}}$, where $U(t)$ is $\mathbb{Z}_2$ symmetric, and $O(t)=U(t)^\dagger O(0)U(t)$. The polarized state is denoted by $\ket{\mathbf{0}}$, which is the unique eigenstate of every $Z_i$ operator with eigenvalue 1. This rewriting of $\langle\psi|D_{\mathrm{tot}}O|\psi\rangle$ will allow us to apply our nested commutator bound.

\begin{lem}
    Let $O$ be an operator and let $|\psi\rangle=U(t)|0\rangle$. Then for any subset of sites $v_1, \dots, v_m$, we have
    \begin{equation}
        \bra{\psi} D_{\mathrm{tot}}O \ket{\psi} = \frac{1}{2^m}\bra{\mathbf{0}}D_{\mathrm{tot}}[[\dots[O(t),Z_{v_1}],\dots Z_{v_{m-1}}], Z_{v_{m}}]\ket{\mathbf{0}} \equiv \frac{1}{2^m}\bra{\mathbf{0}}D_{\mathrm{tot}}[O(t), Z_{\vec v}]\ket{\mathbf{0}}.
    \end{equation}
    \label{lem:nest}
\end{lem}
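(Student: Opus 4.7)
The plan is to derive the identity by combining two facts: the $\mathbb{Z}_2$-symmetry of $U(t)$ (implicit in the surrounding discussion, giving $[U(t),D_{\mathrm{tot}}]=0$) and the fact that $\ket{\mathbf{0}}$ is a $+1$ eigenvector of every single-site $Z$ operator. First I would push $D_{\mathrm{tot}}$ through the time evolution, rewriting the target expectation as
\begin{equation}
\bra{\psi} D_{\mathrm{tot}} O \ket{\psi} = \bra{\mathbf{0}} U(t)^\dagger D_{\mathrm{tot}} O\, U(t) \ket{\mathbf{0}} = \bra{\mathbf{0}} D_{\mathrm{tot}} O(t) \ket{\mathbf{0}}.
\end{equation}
It then suffices to show that inserting the nested commutator with $Z_{v_1},\dots,Z_{v_m}$ multiplies this expectation by $2^m$.

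The engine of the proof is the single-step identity
\begin{equation}
\bra{\mathbf{0}} D_{\mathrm{tot}} [A, Z_v] \ket{\mathbf{0}} = 2\,\bra{\mathbf{0}} D_{\mathrm{tot}} A \ket{\mathbf{0}},
\end{equation}
valid for any operator $A$ and any site $v$. To verify this I would expand the commutator into $\bra{\mathbf{0}} D_{\mathrm{tot}} A Z_v \ket{\mathbf{0}} - \bra{\mathbf{0}} D_{\mathrm{tot}} Z_v A \ket{\mathbf{0}}$. The first term reduces to $\bra{\mathbf{0}} D_{\mathrm{tot}} A \ket{\mathbf{0}}$ by acting $Z_v$ on $\ket{\mathbf{0}}$. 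For the second, I would use the fact that $X_v$ is the unique Pauli factor in $D_{\mathrm{tot}}$ that anticommutes with $Z_v$, giving $\{D_{\mathrm{tot}}, Z_v\}=0$; the term becomes $+\bra{\mathbf{0}} Z_v D_{\mathrm{tot}} A \ket{\mathbf{0}} = \bra{\mathbf{0}} D_{\mathrm{tot}} A \ket{\mathbf{0}}$ by acting $Z_v$ on $\bra{\mathbf{0}}$. Summing the two contributions yields the factor of $2$.

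Iterating this identity $m$ times, with $A$ successively replaced by $O(t)$, then $[O(t), Z_{v_1}]$, then $[[O(t), Z_{v_1}], Z_{v_2}]$, and so on, produces both the factor $2^m$ and the nested commutator in the precise order claimed. The argument is purely algebraic, so there is no analytical obstacle; the only subtlety worth flagging is that the lemma statement as written does not explicitly restate the $\mathbb{Z}_2$-symmetry hypothesis on $U(t)$, so one must make clear that this assumption is being inherited from the preceding paragraph (without it, the first step of passing $D_{\mathrm{tot}}$ through $U$ fails and the claim is false in general).
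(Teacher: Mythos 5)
Your proof is correct and follows essentially the same route as the paper's. The "engine identity" $\bra{\mathbf{0}} D_{\mathrm{tot}} [A, Z_v] \ket{\mathbf{0}} = 2\,\bra{\mathbf{0}} D_{\mathrm{tot}} A \ket{\mathbf{0}}$ you isolate is exactly the algebraic step the paper iterates by induction (the paper verifies it via a "insert $Z_v$, anticommute, rearrange" manipulation rather than by directly expanding the commutator, but the content is identical), and your remark that the $\mathbb{Z}_2$-symmetry of $U(t)$ is an implicit hypothesis needed only for the first step is also flagged by the paper in exactly the same way.
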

\begin{proof}
We use the following observations: (\emph{1}) the evolution operator $U(t)$ commutes with $D_{\mathrm{tot}}$, (\emph{2}) $\ket{\mathbf{0}}$ is an eigenstate of $Z_{v_i}$, and (\emph{3}) $Z_{v_i}$ anticommutes with $D_{\mathrm{tot}}$. Putting these observations together, we get
    \begin{align}\label{nest1}
        \bra{\psi}D_{\mathrm{tot}}O\ket{\psi} &= \bra{\mathbf{0}}D_{\mathrm{tot}}O(t)\ket{\mathbf{0}} \notag\\
               &= \bra{\mathbf{0}}Z_{v_1}D_{\mathrm{tot}}O(t)\ket{\mathbf{0}} \notag\\
               &= -\bra{\mathbf{0}}D_{\mathrm{tot}}Z_{v_1}O(t)\ket{\mathbf{0}} \notag\\
             &= \bra{\mathbf{0}}D_{\mathrm{tot}}[O(t),Z_{v_1}]\ket{\mathbf{0}}-\bra{\psi}D_{\mathrm{tot}}O\ket{\psi}.
    \end{align}
    Rearranging the above gives $\bra{\psi} D_{\mathrm{tot}}O \ket{\psi} = \frac{1}{2}\bra{\mathbf{0}}D_{\mathrm{tot}}[O(t),Z_{v_i}]\ket{\mathbf{0}}$. This is the base case. Now we assume that Eq.~\ref{lem:nest} holds for a subset of sites $\vec v'=(v_1,\cdots, v_{m-1})$, and use the exact same reasoning to make the inductive step:
    \begin{align}
        \bra{\psi}D_{\mathrm{tot}}O\ket{\psi}&=\frac{1}{2^{m-1}}\bra{\mathbf{0}}D_{\mathrm{tot}}[O, Z_{\vec v'}]\ket{\mathbf{0}} \notag
        \\ &= 
        -\frac{1}{2^{m-1}}\bra{\mathbf{0}}D_{\mathrm{tot}}Z_{v_{m}}[O, Z_{\vec v'}]\ket{\mathbf{0}} \notag\\
        &= \frac{1}{2^{m-1}}\left(\bra{\mathbf{0}}D_{\mathrm{tot}}[[O, Z_{\vec v'}], Z_{v_{m}}]\ket{\mathbf{0}} - \bra{\mathbf{0}}D_{\mathrm{tot}}[O, Z_{\vec v'}]\ket{\mathbf{0}}\right)\notag\\
        &=\frac{1}{2^m}\bra{\mathbf{0}}D_{\mathrm{tot}}[O, Z_{\vec v}]\ket{\mathbf{0}}.
    \end{align}
Here $\vec v=(v_1,\cdots,v_{m-1},v_m)$ and we define $[O, Z_{\vec v}] \equiv [[\dots[O,Z_{v_1}],\dots], Z_{v_m}]$. This completes the proof.
\end{proof}

With this technical lemma in hand, we now prove the theorems.

\begin{proof}[Proof of Theorem \ref{thm:split}]
  Using $\ket{\psi_{\pm}}=U(t)\ket{\pm}$, we obtain
\begin{align}
    \delta &=|\bra{+}H(t)\ket{+}-\bra{-}H(t)\ket{-}| \notag \\
    &=2\left|\mathrm{Re}\left\{\bra{\mathbf{1}}H(t)\ket{\mathbf{0}}\right\}\right| \notag \\
    &=2\left|\mathrm{Re}\left\{\bra{\mathbf{0}}D_{\mathrm{tot}}H(t)\ket{\mathbf{0}}\right\}\right| \notag \\
    &\leq 2\sum_{u\in V}\sum_{X \ni u}\left|\mathrm{Re}\left\{\bra{\mathbf{0}}D_{\mathrm{tot}}H_X(t)\ket{\mathbf{0}}\right\}\right|,
\end{align}
where in the last line, we decomposed the $H(t)$ into terms on connected clusters $X$ as in (\ref{defquasilocal}). Note that we did not need to assume that $U(t)$ is $\mathbb{Z}_2$ symmetric here because we did not need to apply the first step of (\ref{nest1}). 

If $H$ is strictly local, then we just have a sum over $u\in V$ and each of the terms above is of the form $\bra{\mathbf{0}}D_{\mathrm{tot}}O(t)\ket{\mathbf{0}}$, so we can apply Lemma~\ref{lem:nest} and then Theorem~\ref{thm:exponential_H} to bound each term. Then Theorem~\ref{thm:splitting} follows directly from Corollary~\ref{cor:volumescaling_exp}. There is an overall factor of $L^d$ coming from the sum over $u\in V$.

If $H$ is quasi-local, then we divide the sum into clusters $X$ of size $|X|<\frac{L^d}{(B\xi)^{d-1}}$ and $|X|\geq \frac{L^d}{(B\xi)^{d-1}}$, where $B\xi$ is the linear size of the $m$ boxes that we divide the lattice into in Corollary~\ref{cor:volumescaling_exp}, to fit in $m$ balls containing the operators $Z_{v_1},\dots Z_{v_m}$. Roughly speaking, for every $u\in V$, we sum over the connected clusters $X$ containing $u$ [of which there are $\leq (\mathrm{e}\Delta)^{|X|}$ by Proposition~\ref{prop:clustercounting}] suppressed by their weight $h\mathrm{e}^{-\kappa |X|}$ and further suppressed by a nested commutator obtained by inserting as many balls as possible in a region of size $\sim L^d-|X|$. More precisely, the nested commutator can include all of the operators $Z_{v_1},\ldots ,Z_{v_m}$ except for those in boxes that overlap with $X$. To accommodate for the worst-case scenario where the connected cluster $X$ forms a net along the edges of the boxes of linear size $B\xi$ (see the definitions in Corollary~\ref{cor:volumescaling_exp}), to touch as many boxes as possible for smallest $|X|$, we use a region of size $ L^d-(B\xi)^{d-1}|X|$ rather than $L^d-|X|$. Once $|X|\sim \frac{L^d}{(B\xi)^{d-1}}$, the worst-case scenario gives no further suppression from the nested commutators. Putting the above observations together, we get
\begin{align}
    \begin{split}
        \delta&\leq 2hL^d\left(\sum_{\substack{X\ni u\\ |X|<L^d/(B\xi)^{d-1}}}(\mathrm{e}\Delta)^{|X|}\mathrm{e}^{-\kappa |X|}\mathrm{e}^{-\frac{C_2\mu}{2\alpha^{d-1}}\frac{L^d-(B\xi)^{d-1} |X|}{(vt)^{d-1}}}+\sum_{\substack{X\ni u\\ |X|\geq L^d/(B\xi)^{d-1}}}(\mathrm{e}\Delta)^{|X|}\mathrm{e}^{-\kappa |X|}\right)\\
        &\leq 2hL^d\left(\mathrm{e}^{-\frac{C_2\mu}{2\alpha^{d-1}}\frac{L^d}{(vt)^{d-1}}}\sum_{\substack{X\ni u\\ |X|<L^d/(B\xi)^{d-1}}}\left(\Delta \mathrm{e}^{1-\kappa }\mathrm{e}^{\frac{C_2\mu}{2\alpha^{d-1}}\frac{(B\xi)^{d-1}}{(vt)^{d-1}}}\right)^{|X|}+\frac{(\Delta \mathrm{e}^{1-\kappa})^{\frac{L^d}{(B\xi)^{d-1}}}}{1-\Delta \mathrm{e}^{1-\kappa}}\right).\\
    \end{split}
\end{align}

Performing the first sum and simplifying notation using $y=\Delta \mathrm{e}^{1-\kappa}$, we have 
\begin{equation}
        \delta\leq 2hL^d\left(\frac{y^{\frac{L^d}{(B\alpha v t)^{d-1}}}-\mathrm{e}^{-\frac{C_2\mu}{2\alpha^{d-1}}\frac{L^d}{(vt)^{d-1}}}}{y\mathrm{e}^{\frac{C_2\mu}{2\alpha^{d-1}}\frac{(B\xi)^{d-1}}{(vt)^{d-1}}}-1}+\frac{y^{\frac{L^d}{(B\alpha vt)^{d-1}}}}{1-y}\right),
    \end{equation}
where we used $\xi=\alpha v t$ as in the proof of Corollary~\ref{cor:volumescaling_exp}. Since $y<1$ by \eqref{eq:kappalowerbound}, every term is exponentially decaying in $L^d$, and we can use the same steps as in Corollary~\ref{cor:volumescaling} to obtain Theorem~\ref{thm:splitting} with a modified $\gamma$ and $c$ from the strictly local case.
\end{proof}

\begin{proof}[Proof of Theorem \ref{thm:mot}]  
We make the observation that the disorder operator can be expressed as $D_R(t) = D_{\mathrm{tot}}\bar{D}_{R}^\dagger(t)$, where $\bar{D}_{R}^\dagger$ is the Hermitian conjugate of the disorder operator for the complement of $B_{R}(v)$. This follows from the fact that $U(t)$ commutes with $D_{\mathrm{tot}}=D_R\bar{D}_{R}$, the global $\mathbb{Z}_2$ symmetry operator. 
Now we can apply Lemma~\ref{lem:nest} to fill the support of $D_R$, $B_R$, with operators at site $v_i$ in balls $B_i$ (see Fig.~\ref{fig:setup}). Note that $\bar{D}_{R}^\dagger$ is supported only in the complement of $B_{R}$. Substituting $D_{\mathrm{tot}}\bar{D}_{R}^\dagger(t)$ for $D_R(t)$, we can then apply Lemma~\ref{lem:nest} to get
\begin{align}\label{nestdis}
    \vert \bra{\psi}D_{R}\ket{\psi}\vert \leq \frac{1}{2^m}\left\Vert \left[Z_{\vec v}, \bar{D}_{R}^\dagger (t)\right]\right\Vert \leq C_{\vec v}^{B_R^c}(t),
\end{align}
where we used a slight simplification due to the observation that $\Vert D_{\mathrm{tot}} \Vert = 1$. We can then apply Corollary~\ref{cor:volumescaling_exp} to obtain Theorem~\ref{thm:disorder} for sufficiently large $R$. 
\end{proof}
Note that the usual Lieb-Robinson bounds, applied to Eq.~\eqref{nestdis} with a single commutator upper bounds $\bra{\psi}D_R\ket{\psi}$ by a quantity exponentially small in the diameter of $S$. In dimension $d > 1$, we get a much stronger bound by using a nested commutator, which is exponentially small in the volume of $B_{R}(v)$. From the proof above, we see an illustration of the flexibility to optimize the parameters used to get the bound. In $d = 1$, the optimal choice of commutator is simply a single commutator with an operator $Z_v$ at the center of the interval of radius $R$.

\subsection{Rokhsar-Kivelson states}\label{srkstates}
Rokhsar-Kivelson (RK) states are quantum ground states that encode classical partition functions \cite{rokhsar1988,ardonne2004,castelnovo2008}. These states have been studied in the context of conformal quantum critical points (when the classical partition function goes through a critical point) and have recently received renewed interest in the context of decohered topological order \cite{bao2023}. In the latter context, RK states appear naturally in the thermofield double (Choi state) representation of the decohered density matrix.

Consider the Ising RK wavefunction:
\begin{equation}
    |\psi(\beta)\rangle=\frac{1}{\mathcal{N}}\prod_{\langle ij\rangle}\mathrm{e}^{\beta Z_iZ_j/2}\ket{+},
\end{equation}
where $\mathcal{N}$ is a normalization factor to ensure that $\langle\psi(\beta)|\psi(\beta)\rangle=1$. Here, the product is over all nearest-neighbor vertices on the $d$-dimensional hypercubic lattice. Although $|\psi(\beta)\rangle$ is a $d$-dimensional quantum state, it also encodes a $d$-dimensional classical partition function. It is easy to see that $Z_iZ_j$ correlation functions in $|\psi\rangle$ can be identified with spin-spin correlation functions in the classical Ising model. It follows that
\begin{equation}
\langle \psi(\beta)|Z_iZ_j|\psi(\beta)\rangle\sim\begin{cases}
            \mathrm{exp}(-\mu|i-j|), \beta<\beta_c\\
            \mathcal{O}(1), \beta>\beta_c
           \end{cases},
\end{equation}
where $\beta_c$ is the classical Ising critical temperature and $\mu$ is a finite inverse correlation length. For example, for $d=2$, $\beta_c=\log (1+\sqrt{2})/2$. Because there is a long-range order parameter for $\beta>\beta_c$, one might suspect that $|\psi(\beta>\beta_c)\rangle$ belongs in the $\mathbb{Z}_2$ ferromagnetic phase. However, 
\begin{equation}
    \langle\psi(\beta)|D_{R}|\psi(\beta)\rangle\sim \mathrm{e}^{-2\beta cR^{d-1}},
\end{equation}
where $cR^{d-1}$ is the surface area of the radius $R$ ball in $d$ dimensions. The above scaling of the disorder parameter holds for all values of $\beta$, including for $\beta>\beta_c$, which can be calculated by relating the disorder parameter to a ratio of 2d classical Ising partition functions, where the only difference between the partition functions comes from deleting the spin-spin couplings along $\partial R$ in the numerator. The scaling is then expected from extensivity of free energy $Z=e^{\beta F}$. Further details can be obtained from Appendix A2 of Sahay et al.\cite{sahay20251}. Therefore, this state violates the bound in Theorem~\ref{thm:disorder} for all finite $\beta$, in any dimension. We therefore conclude that $|\psi(\beta)\rangle$, despite the fact that it demonstrates a long-ranged order parameter for all $\beta>\beta_c$, is not in the vicinity of the ferromagnetic fixed point (where a quasi-adiabatic generator satisfying Def.~\ref{defquasilocal} is guaranteed by \cite{yin2024}) for any finite $\beta>\beta_c$.  We conjecture that this state is not in the ferromagnetic phase in any $d$; an independent argument for this result in $d=2$ is found in \cite{sahay20251}.

\section{Conclusion}
This paper has explored the locality of the time evolution of local operators, focusing on obtaining strong tail bounds on large operators supported on a volume $V \gg (v_{\mathrm{LR}}t)^d$---the volume of the Lieb-Robinson light cone.  We found that, loosely speaking, such operators are suppressed as $\exp[-V/(v_{\mathrm{LR}}t)^{d-1}]$, closing a conceptual gap between the Lieb-Robinson bound \cite{Lieb1972,AnthonyChen:2023bbe} and bounds from cluster expansions \cite{cluster_Alhambra23}.   Two immediate applications of such bounds were presented---one on the efficiency of classical simulations of quantum dynamics, and one on the classification of quantum phases of matter.  

We hope that many further applications of these strong volume-tailed Lieb-Robinson bounds can be identified.   A natural future direction is to generalize our volume-tailed bounds to systems with bosons \cite{boson_anharmonic08,boson_empty11,boson_spin13,boson_kuwahara21,boson_finitespeed22,boson_lemm22,boson_empty22,boson_algebraic24} or power-law interactions \cite{power_dyn_area17,power_chen19,power_simu19,power_KSLRB20,lucasprx2020,power_GHZ21,power_yifan21,power_KSOTOC21,power_LRB21,Wprotocol_gorshkov20,power_all2all24}.  It is of great interest to generalize the volume-tailed bounds on the quasi-adiabatic generator, obtained in Ref.~\cite{yin2024} in the vicinity of a ``code" fixed point, to the entire phase of matter.   Lastly, it would also be intriguing if these heavy-tailed Lieb-Robinson bounds could help to prove the entanglement area-law for gapped phases in $d>1$ dimensions, following recent progress \cite{Anshu:2020xkf}.

\section*{Acknowledgements}
C.Y. thanks Alvaro Alhambra for pointing out the connection with Ref.~\cite{stability_error_simu24}. C.Z. thanks Rahul Sahay, Ruben Verressen, and Curt von Keyserlingk for collaboration on related work \cite{sahay20251,sahay20252}, and Michael Levin for helpful discussions. C.Y. and A.L. were supported by the Department of Energy under Quantum Pathfinder Grant DE-SC0024324. C.Z. was supported by the Harvard Society of Fellows and the Simons Collaboration on Ultra Quantum Matter.

\begin{appendix}
\end{appendix}

\bibliography{thebib}

\end{document}